\setlist[itemize]{leftmargin=0.5cm}
\setlist[enumerate]{leftmargin=0.5cm}
\definecolor{eclipseBlue}{RGB}{42,0.0,255}
\definecolor{eclipseGreen}{RGB}{63,127,95}
\definecolor{eclipsePurple}{RGB}{127,0,85}
\newcommand{\revise}[1]{\noindent\textcolor{black}{#1}}
\newcommand{\crc}[1]{\noindent\textcolor{black}{#1}}
\lstdefinelanguage{pgq}
{
  % list of keywords
  morekeywords={
    NODE,
    EDGE,
    TYPE,
    IMPORTS,
    OPTIONAL,
    STRICT,
    LOOSE,
    OPEN,
    CLOSED,
    ABSTRACT,
    CREATE,
    GRAPH,
    FOR,
    WITHIN,
    MATCH,
    WHERE,
    NOT,
    OR,
    AND,
    EXISTS,
    RETURN,
    IN,
    IS,
    NULL,
    ORDER,
    BY,
    INSERT,
    SET,
    REMOVE,
    DROP,
    DELETE,
    FINISH,
    COLUMNS,
    GRAPH_TABLE,
    SELECT,
    DETACH,
    STRING,
    VARCHAR,
    BOOLEAN,
    BOOL,
    SIGNED,
    INTEGER,
    INT,
    FLOAT,
    GRAPH,
    TYPE,
    ALTER,
    ROLLBACK,
    COMMIT,
    TRANSACTION,
    START,
    SESSION,
    USE,
    GROUP,
    VALUE,
    COUNT,
    CALL,
    ANY,
    IMPLIES,
    SCHEMA,
    AT,
    AS,
    FROM,
    GRAPH\_TABLE,
    INTERSECT,
    \$,
    KEY,
    TABLES,
    TABLE,
    SOURCE,
    DESTINATION,
    REFERENCE,
    PROPERTY,
    PROPERTIES,
    VERTEX,
    JOIN,
    ON
  },
  sensitive=false, % keywords are not case-sensitive
  morecomment=[l]{//}, % l is for line comment
  morecomment=[s]{/*}{*/}, % s is for start and end delimiter
  morestring=[b]" % defines that strings are enclosed in double quotes
}
\lst@ifdisplaystyle\fontsize{8}{10.2}\fi\ttfamily, % Global Code Style
\newtheorem{definition}{Definition}
\newtheorem{lemma}{Lemma}
\newtheorem{theorem}{Theorem}
\newtheorem{example}{Example}
\newtheorem{remark}{Remark}
\newcommand{\kw}[1]{{\ensuremath {\mathsf{#1}}}\xspace}
\newcommand{\name}{\kw{RelGo}}
\newcommand{\relgohash}{\kw{RelGoHash}}
\newcommand{\relgomj}{\kw{RelGoNoEI}}
\newcommand{\relgonofi}{\kw{RelGoNoRule}}
\newcommand{\lab}{\mathcal{\ell}}
\newcommand{\id}{\kw{id}}
\newcommand{\glogue}{\kw{GLogue}}
\newcommand{\glogs}{\kw{GLogS}}
\newcommand{\todo}[1]{\textcolor{red}{$\Rightarrow$#1}}
\newcommand{\modify}[1]{\textcolor{black}{#1}}
\newcommand{\kk}[1]{\texttt{#1}}
\newcommand{\scan}{{\kk{SCAN}}}
\newcommand{\expandedge}{{\kk{EXPAND}\_\kk{EDGE}}}
\newcommand{\getvertex}{{\kk{GET}\_\kk{VERTEX}}}
\newcommand{\expandvertex}{{\kk{EXPAND}}}
\newcommand{\expand}{{\kk{EXPAND}}}
\newcommand{\expandintersect}{\kk{EXPAND}\_\kk{INTERSECT}}
\newcommand{\scangraphtable}{{\kk{SCAN}\_\kk{GRAPH}\_\kk{TABLE}}}
\newcommand{\hashjoin}{{\kk{HASH}\_\kk{JOIN}}}
\newcommand{\filterrule}{\kw{FilterIntoMatchRule}}
\newcommand{\joinfuserule}{\kw{TrimAndFuseRule}}
\newcommand{\rgmapping}{\kw{RGMapping}}
\newcommand{\pattern}{\mathcal{P}}
\newcommand{\matching}{\mathcal{M}}
\newcommand{\cost}{\kw{Cost}}
\newcommand{\searchgraph}{\mathbb{G}}
\newcommand{\decompnum}{\mathcal{H}}
\newcommand{\decompnumsq}[1]{\mathcal{H}^{#1}}
\newcommand{\spj}{\kw{SPJ}}
\newcommand{\spjm}{\kw{SPJM}}
\newcommand{\mmc}{\kw{MMC}}
\newcommand{\mmcs}{\kw{MMCs}}
\newcommand{\EVjoin}{\kw{EVJoin}}
\newcommand{\adj}{{\ensuremath \mathcal{N}}}
\newcommand{\relation}[1]{{\ensuremath R_{#1}}}
\newcommand{\relationx}[2]{{\ensuremath R^{#1}_{#2}}}
\newcommand{\gproject}{\ensuremath \widehat{\pi}_{A*}}
\newcommand{\gjoin}{\ensuremath \widehat{\Join}}
\newcommand{\evjoin}{\ensuremath \Join_{\epsilon\nu}}
\newcommand{\rowid}{\kw{rid}}
\newcommand{\constraints}{\ensuremath \Psi}
\newcommand{\ot}{\kw{OT}}
\long\def\comment#1{}
\newcommand{\stitle}[1]{\noindent{\underline{ #1}}}
\newcommand{\reffig}[1]{Fig.~\ref{fig:#1}}
\newcommand{\refsec}[1]{Sec.~\ref{sec:#1}}
\newcommand{\reftable}[1]{Table~\ref{tab:#1}}
\newcommand{\refeq}[1]{Eq.~\ref{eq:#1}}
\newcommand{\refdef}[1]{Def.~\ref{def:#1}}
\newcommand{\refthm}[1]{Theorem~\ref{thm:#1}}
\newcommand{\reflem}[1]{Lemma~\ref{lem:#1}}
\newcommand{\refrem}[1]{Remark~\ref{rem:#1}}
\newcommand{\refex}[1]{Example~\ref{ex:#1}}
\newcommand{\revisewy}[1]{\textcolor{black}{#1}}
\begin{document}

%%
%% The "title" command has an optional parameter,
%% allowing the author to define a "short title" to be used in page headers.
\title{Towards a Converged Relational-Graph Optimization Framework}

%%
%% The "author" command and its associated commands are used to define
%% the authors and their affiliations.
%% Of note is the shared affiliation of the first two authors, and the
%% "authornote" and "authornotemark" commands
%% used to denote shared contribution to the research.
\author{Yunkai Lou}
\affiliation{%
  \institution{Alibaba Group}
  \city{Hangzhou}
  \country{China}
}
\email{louyunkai.lyk@alibaba-inc.com}

\author{Longbin Lai}
\affiliation{%
  \institution{Alibaba Group}
  \city{Hangzhou}
  \country{China}
}
\email{longbin.lailb@alibaba-inc.com}

\author{Bingqing Lyu}
\affiliation{%
  \institution{Alibaba Group}
  \city{Hangzhou}
  \country{China}
}
\email{bingqing.lbq@alibaba-inc.com}

\author{Yufan Yang}
\affiliation{%
  \institution{Alibaba Group}
  \city{Hangzhou}
  \country{China}
}
\email{xiaofan.yyf@alibaba-inc.com}

\author{Xiaoli Zhou}
\affiliation{%
  \institution{Alibaba Group}
  \city{Hangzhou}
  \country{China}
}
\email{yihe.zxl@alibaba-inc.com}

\author{Wenyuan Yu}
\affiliation{%
  \institution{Alibaba Group}
  \city{Hangzhou}
  \country{China}
}
\email{wenyuan.ywy@alibaba-inc.com}

\author{Ying Zhang}
\affiliation{%
  \institution{Zhejiang Gongshang University}
  \city{Hangzhou}
  \country{China}
}
\email{ying.zhang@zjgsu.edu.cn}

\author{Jingren Zhou}
\affiliation{%
  \institution{Alibaba Group}
  \city{Hangzhou}
  \country{China}
}
\email{jingren.zhou@alibaba-inc.com}

%%
%% By default, the full list of authors will be used in the page
%% headers. Often, this list is too long, and will overlap
%% other information printed in the page headers. This command allows
%% the author to define a more concise list
%% of authors' names for this purpose.
\renewcommand{\shortauthors}{Yunkai Lou et al.} 

%%
%% The abstract is a short summary of the work to be presented in the
%% article.
\begin{abstract}
\revisewy{The recent ISO SQL:2023 standard adopts SQL/PGQ (Property Graph Queries), facilitating graph-like querying within relational databases. This advancement, however, underscores a significant gap in how to effectively optimize SQL/PGQ queries within relational database systems. To address this gap, we extend the foundational} \spj (Select-Project-Join) queries to \spjm queries, which include an additional matching operator for representing graph pattern matching in SQL/PGQ. Although \spjm queries can be converted to \spj queries and optimized using existing relational query optimizers, our analysis shows that such a graph-agnostic method fails to benefit from graph-specific optimization techniques found in the literature. To address this issue, we develop a converged relational-graph optimization framework called \name for optimizing \spjm queries, leveraging joint efforts from both relational and graph query optimizations. Using DuckDB as the underlying relational execution engine, our experiments show that \name can generate efficient execution plans for \spjm queries. On well-established benchmarks, these plans exhibit an average speedup of $21.90\times$ compared to those produced by the graph-agnostic optimizer.
    %Besides, experimental results on micro benchmarks illustrate the efficiency of the proposed optimization strategies.
    %Additionally, experimental results from micro-benchmarks demonstrate the effectiveness of the optimization strategies. In particular, by applying constraints directly within the matching operators, our approach achieves an improvement of up to 600 times. Moreover, eliminating fusion or using \texttt{expandintersect} can at least double the execution time of the generated plans.
    %Besides, the experimental results of ablation studies illustrate the efficiency of \expandintersect.

\end{abstract}

%%
%% The code below is generated by the tool at http://dl.acm.org/ccs.cfm.
%% Please copy and paste the code instead of the example below.
%%
\begin{CCSXML}
  <ccs2012>
     <concept>
         <concept_id>10002951.10002952.10003190.10003192</concept_id>
         <concept_desc>Information systems~Database query processing</concept_desc>
         <concept_significance>500</concept_significance>
         </concept>
     <concept>
         <concept_id>10002951.10002952.10002953.10002955</concept_id>
         <concept_desc>Information systems~Relational database model</concept_desc>
         <concept_significance>300</concept_significance>
         </concept>
     <concept>
         <concept_id>10002951.10002952.10002953.10010146.10010818</concept_id>
         <concept_desc>Information systems~Network data models</concept_desc>
         <concept_significance>300</concept_significance>
         </concept>
   </ccs2012>
\end{CCSXML}

\ccsdesc[500]{Information systems~Database query processing}
\ccsdesc[300]{Information systems~Relational database model}
\ccsdesc[300]{Information systems~Network data models}

%%
%% Keywords. The author(s) should pick words that accurately describe
%% the work being presented. Separate the keywords with commas.
\keywords{query optimization, converged optimization framework, SPJM queries, SQL/PGQ, graph-aware}

\received{April 2024}
\received[revised]{July 2024}
\received[accepted]{August 2024}

%%
%% This command processes the author and affiliation and title
%% information and builds the first part of the formatted document.
\maketitle

\section{Introduction}
\label{sec:introduction}

In the realms of data management and analytics, relational databases have long been the bedrock of structured data storage and retrieval, empowering a plethora of applications. % across domains ranging from finance to healthcare and beyond.
The ubiquity of these databases has been supported by the advent of Structured Query Language (SQL)~\cite{chamberlin1974sequel}, a standardized language that has been adopted widely by various relational database management systems for managing data through schema-based operations. %SQL has been pivotal in the progression of data management, enabling the efficient manipulation of complex data structures and fostering interoperability between disparate systems.

Despite its considerable success and broad adoption, SQL has its limitations, particularly when it comes to representing and querying intricately linked data. Consider, for instance, the relational tables of \kk{Person} and \kk{Knows}, the latter symbolizing a many-to-many relationship between instances of the former. Constructing a SQL query to retrieve a group of four persons who are all mutually acquainted is not a straightforward endeavor, potentially leading to a cumbersome and complex SQL expression.

In comparison, such a scenario could be succinctly addressed using graph query languages such as Cypher~\cite{opencypher}, where queries are expressed as graph pattern matching. This discrepancy between the relational and graph querying paradigms has given rise to the innovative SQL/Property Graph Queries (SQL/PGQ), an extension \revisewy{formally adopted in the ISO SQL:2023 standard}~\cite{sql-pgq}. SQL/PGQ is designed to amalgamate the extensive capabilities of SQL with the inherent benefits of graph pattern matching. With SQL/PGQ, it is now possible to define and query graphs within SQL expressions, transforming otherwise complex relational queries -- characterized by multiple joins -- into simpler and more intuitive graph queries.

%\enlargethispage{1em}

\begin{figure}[h]
    \centering
    \includegraphics[width=.5\linewidth]{./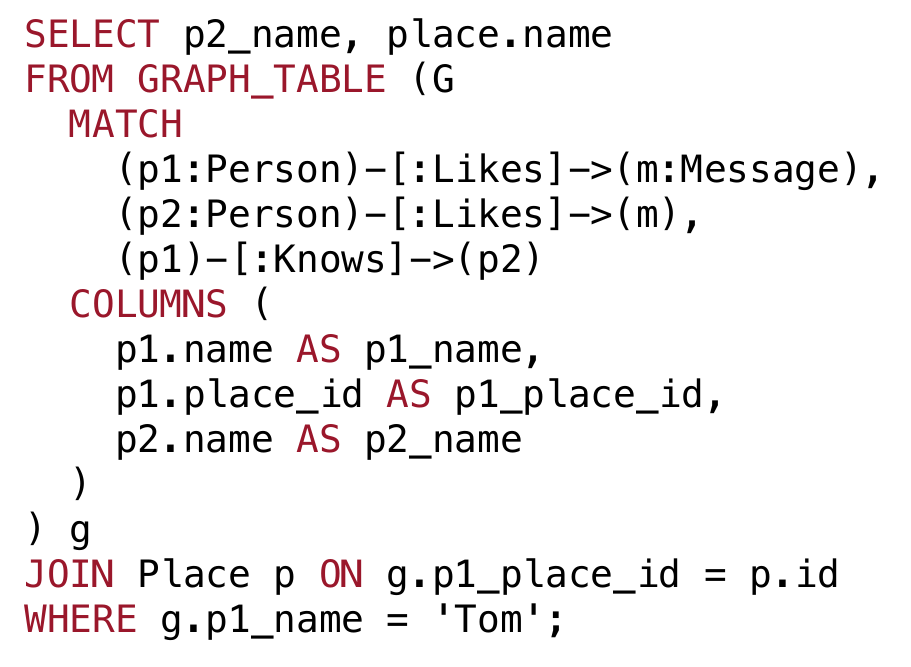}
    \vspace*{-1mm}
    \caption{An example of SQL/PGQ query.}
    \label{fig:intro-query}
    \vspace*{-1mm}
\end{figure}

\begin{example}
    \label{ex:introduction:sqlpgq}
    Consider the four relational tables in the database: \kk{Person(id, name, place\_id)}, \kk{Message(id, content, date)}, \kk{Like(p\_id, m\_id, date)}, and \kk{Place(id, name)}. Using SQL/PGQ, a property graph $G$ is articulated as a \lstinline{GRAPH_TABLE}, established on the basis of the first three tables. In this mapping, rows from \kw{Person} and \kw{Message} are interpreted as vertices with labels ``Person'' and ``Message'' respectively, while rows from Like represent edges with the label ``Likes''. \revise{This mapping process will be elaborated as \rgmapping in \refsec{data-model}}. An SQL/PGQ query to discover the friends of a person named ``Tom'' and the place they live in, where ``Tom'' and friends share an affinity for the same message, can be formulated as shown in \reffig{intro-query}.
    %\begin{lstlisting}
    %    SELECT p2_name, place.name
    %    FROM GRAPH_TABLE (G
    %        MATCH (p1:Person)-[:Likes]->(m:Message),
    %              (p2:Person)-[:Likes]->(m:Message),
    %              (p1)-[:Knows]->(p2)
    %        COLUMNS (
    %            p1.name AS p1_name,
    %            p1.place_id AS p1_place_id,
    %            p2.name AS p2_name)
    %        ) g
    %    JOIN Place p ON g.p1_place_id = p.id
    %    WHERE g.p1_name = 'Tom';
    %\end{lstlisting}
    \revisewy{In graph $G$, a {\sc Graph Pattern Matching} is employed to decode the intricate relationships between persons and messages. Upon executing the pattern matching, a \lstinline{COLUMNS} clause projects the results into a tabular format, enumerating essential attributes. Then the {\sc Relational}~\lstinline{JOIN} is performed on resultant table \lstinline{g} and \kk{Place} table to obtain the place's name.}
\end{example}

The SQL/PGQ standardization, while a significant leap forward in the realm of relational databases, primarily addresses language constructs. A discernible gap exists in the theoretical landscape, particularly in analyzing, transforming, and optimizing SQL/PGQ queries with hybrid relational and graph semantics. %This gap is mainly due to the absence of a unified framework for optimizing such hybrid queries, a framework that could bridge the divide between the well-established relational optimization techniques and the evolving domain of graph query optimization.
%\enlargethispage{1em}

Relational query optimization has historically leaned on the \spj (selection-projection-join) skeleton~\cite{spj,Chaudhuri98}, which provides a systematic approach for analyzing query complexity~\cite{IbarakiK84,ChatterjiEGY02} , devising heuristic optimization rules~\cite{Chaudhuri99heuristics,goldsteinheuristics}, and
computing optimal join order~\cite{Haffnerjoinorder,chenjoinorder}. Recently, graph techniques have been introduced to optimize relational queries~\cite{wanderjoin,Haffnerjoinorder,gqfast,graindb}. %In~\cite{wanderjoin}, the authors proposed modeling the joins of multiple tables as a join graph and leveraging the random-walk algorithm for better cardinality estimation. The authors in~\cite{Haffnerjoinorder} reduced the join-order optimization problem to a shortest path problem, facilitating theoretical analysis for the search space while using heuristic rules.
In particular, GRainDB~\cite{graindb} introduced a predefined join operator that materializes the adjacency list (rows) of vertices, enabling more efficient join execution. While these techniques can be empowered by graph techniques, they target purely relational query rather than the relational-graph hybrid query of SQL/PGQ.
 % This skeleton is pivotal in determining the optimal join order~\cite{Haffnerjoinorder,chenjoinorder}, ultimately underpinning the theoretical foundations for practical query optimization strategies.

%Notable among these are backtracking-based algorithms~\cite{ullmann1976algorithm}, which systematically traverse the graph according to a certain order derived from the given pattern~\cite{han13turbo,bi2016efficient}. During this traversal, the algorithms continuously evaluate whether the conditions (labels, connections, etc.) of the pattern are satisfied at each step, using respective heuristic rules to prune the search space~\cite{shang2008quicksi,han13turbo,bi2016efficient}.
In parallel to relational query optimization, significant strides have been made in optimizing graph pattern matching.
A common practice is to
%to transform graph pattern matching into relational joins and g1
leverage join-based techniques to optimize the query~\cite{lai2019distributed,lai2015scalable,ammar2018distributed,huge}. Scalable join algorithms, such as binary-join~\cite{lai2015scalable}, worst-case optimal (\revise{abbr.~wco}) join~\cite{ammar2018distributed}, and their hybrid variants~\cite{mhedhbi2019optimizing,huge,GLogS}, have been proposed for solving the problem over large-scale graphs. However, despite the effectiveness of these techniques for pattern matching on graphs, they cannot be directly applied to relational databases due to the inherent differences in data models.

In this paper, we propose the first converged optimization framework, \name, that optimizes relational-graph hybrid queries in a relational database, in response to the advent of SQL/PGQ. A straightforward implementation~\cite{DuckPGQ,DuckPGQ-VLDB,apache-age} can involve directly transforming the graph component in SQL/PGQ queries into relational operations, allowing the entire query to be optimized and executed in any existing relational engine. While we contribute to building the theory to make such a transformation workable, this \emph{graph-agnostic} optimization approach suffers from several issues, including graph-unaware join orders, suboptimal join plans, and increased search space, as will be discussed in \refsec{graph-aware}.

\begin{table}[t]
    \small
    \centering
   \caption{Frequently used notations.}
   \label{tab:notations}
    \scalebox{0.9}{
        \begin{tabular}{|m{0.27\linewidth}|m{0.74\linewidth}|}
            \hline
            Notation                 & Definition                                                                                       \\
            \hline
            $R$    & a relation or relational table   \\
            \hline
            $\tau$ and $\tau.attr$    & a tuple in a relation, and the value of an attribute of $\tau$ \\
            \hline
            $G(V,E)$         & a property graph with $V$ and $E$                                                          \\
            \hline
            $\pattern(V, E)$   & a pattern graph with $V$ and $E$  \\
            \hline
            $\id(\epsilon)$, $\lab(\epsilon)$, $\epsilon$.attr  & the identifier, label, and the value of given attribute of a graph element $\epsilon$ \\
            \hline
            $\adj(u)$ and $\adj^E(u)$           & neighbors and adjacent edges of $u$           \\
            \hline
            $GR$   & a graph relation     \\
            \hline
            $\matching(GR, \pattern)$, $\matching(\pattern)$  & matching $\pattern$ on a graph relation $GR$ or a graph $G$ \\
            \hline
            $\pi_{A}$, $\sigma_\constraints$, $\Join$  & projection, selection, and join operators over relations \\
            \hline
            $\gproject$, $\gjoin$  & projection and join operators over graph relations \\
            \hline
            $\lambda^s_\lab(e)$, $\lambda^t_\lab(e)$  & the total functions for mapping tuples in an edge relation to source and target vertex relations \\
            \hline
        \end{tabular}
    }
    % \vspace{-.05in}
\end{table}

To address these challenges, \name is proposed to leverage the strengths of both relational and graph query optimization techniques. Building upon the foundation of \spj queries, we introduce the \spjm query skeleton, which extends \spj with a matching operator to represent graph queries. We adapt state-of-the-art graph optimization techniques, such as the decomposition method~\cite{huge} and the cost-based optimizer~\cite{GLogS}, to the relational context, effectively producing worst-case optimal graph subplans for the matching operator. To facilitate efficient execution of the matching operator, we introduce graph index inspired by GRainDB's predefined join~\cite{graindb}, based on which graph-based physical operations are implemented. The relational part of the query, together with the optimized graph subplans encapsulated within a special operator called \scangraphtable, is then optimized using standard relational optimizers. Finally, we incorporate heuristic rules, such as \filterrule, to handle cases unique to \spjm queries that involve the interplay between relational and graph components.

%\enlargethispage{1em}

We have made the following contributions in this paper:
\begin{enumerate}
\item \revise{We map relational data models to property graph models as specified by SQL/PGQ using \rgmapping. Based on \rgmapping, we introduce a new query skeleton called \spjm, which is designed to better analyze relational-graph hybrid queries}. \hfill(\refsec{preliminaries})
%\item We introduce the concept of \rgmapping (\emph{Relational-to-Graph Mapping}), which serves as the foundation for mapping relational data models to property graph models as specified by SQL/PGQ. Based on this concept, we define a new query skeleton called \spjm to better analyze the relational-graph hybrid queries. \hfill(\refsec{preliminaries})

\item We construct the theory for transforming any \spjm query into an \spj query. Such a graph-agnostic approach enables existing relational databases to handle \spjm queries without low-level modifications. We also formally prove that the search space
of the graph-agnostic approach can be exponentially larger than our solution. \hfill(\refsec{optimizing-matching-operator})

\item We introduce \name, a converged optimization framework that leverages the strengths of both relational and graph query optimization techniques to optimize \spjm queries. \name adapts state-of-the-art graph optimization techniques to the relational context, and implements graph-based physical operations based on graph index for efficient query execution. \hfill(\refsec{optimizations})

\item We develop \name~ by integrating it with the industrial relational optimization framework, Calcite~\cite{calcite}, and employing DuckDB~\cite{duckdb} for execution runtime. We conducted extensive experiments to evaluate its performance. The results on the LDBC Social Network Benchmark~\cite{ldbc_snb} indicate that \name significantly surpasses the performance of the graph-agnostic baseline, with an average speedup of \revise{$21.9\times$}, and \revise{$5.4\times$} even after graph index is enabled for the baseline. \hfill(\refsec{evaluation})

\end{enumerate}

This paper is organized in the order of the contributions. We survey related work in \refsec{related-work}
and conclude the paper in \refsec{conclusions}.

\comment{
Relational databases have been a research hotspot for a long time, and the related achievements are applied in various scenarios such as finance, healthcare, and online services.
In order to manage data in relational databases conveniently and efficiently, Structured Query Language (abbr.~SQL) is presented.
Various relational database management systems implement their own versions of SQL.

Although SQL has achieved tremendous success, there are scenarios wherein utilizing SQL proves to be rather cumbersome.
For instance, given two tables \textbf{Person} and \textbf{Knows}, we are going to find all the quadruples of persons where every two persons know each other.
Then, a possible SQL expression for this query is as follows:
\begin{lstlisting}
    SELECT p1.id, p2.id, p3.id, p4.id
    FROM Person p1, Person p2, Person p3, Person p4, Knows k1, Knows k2, Knows k3, Knows k4, Knows k5, Knows k6
    WHERE k1.person1id = p1.id AND k1.person2id = p2.id
    AND k2.person1id = p1.id AND k2.person2id = p3.id
    AND k3.person1id = p1.id AND k3.person2id = p4.id
    AND k4.person1id = p2.id AND k4.person2id = p3.id
    AND k5.person1id = p2.id AND k5.person2id = p4.id
    AND k6.person1id = p3.id AND k6.person2id = p4.id;
\end{lstlisting}
Such an SQL expression is intricate and inconvenient to write.

\begin{figure}
    \centering
    \includegraphics[width=.3\linewidth]{./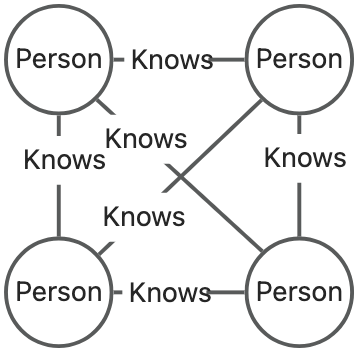}
    \caption{Pattern graph corresponding to the conditions in the SQL query.}
    \label{fig:intro-pattern}
\end{figure}

Indeed, the conditions specified within this SQL statement collectively establish the structure of a pattern graph.
As shown in Fig.~\ref{fig:intro-pattern}, there are four vertices representing four persons, and these four persons form a complete graph.
Therefore, this SQL query can be expressed as a graph query.
The corresponding expression following the grammar of Cypher \cite{opencypher} is as follows:
\begin{lstlisting}
    MATCH (p1:Person)-[:Knows]-(p2:Person)-[:Knows]-(p3:Person)-[:Knows]-(p4:Person),
        (p1)-[:Knows]-(p3), (p2)-[:Knows]-(p4)
    RETURN p1.id, p2.id, p3.id, p4.id;
\end{lstlisting}
This graph query expression is much more concise and understandable than that in SQL.
It suggests that SQL is not always the optimal choice, and sometimes employing graph queries is more advantageous.
Then, in order to combine the extensive developments made in SQL queries over the years with the benefits of graph queries, it would be helpful for SQL to support both relational and graph queries.

Following this idea, the striking SQL/Property Graph Queries (abbr.~SQL/PGQ) is proposed.
In detail, SQL/PGQ is a part of SQL 2023 and its grammar allows to define and query graphs in SQL/PGQ expressions.
Consequently, some complex relational queries (such as those containing multiple joins) can be represented as relatively simple graph queries.
Specifically, graphs in SQL/PGQ are presented as views, and vertices and edges in the graphs are represented as tables.
It implies that with SQL/PGQ, graph queries and relational queries can be expressed in one statement and optimized together for a better execution plan.
An example of an SQL/PGQ query is provided in Example \ref{example:introduction:sqlpgq}.

\begin{example}
    \label{example:introduction:sqlpgq}
    Suppose three tables, i.e., \textbf{Person, Knows, Department}, are stored in the relational database.
    With SQL/PGQ, a graph view named \textbf{friendship\_graph} is created based on tables \textbf{Person} and \textbf{Knows}.
    Specifically, rows in table \textbf{Person} represent the vertices in the graph while rows in table \textbf{Knows} represent the edges.
    Besides, the department a person belonging to is stored in table \textbf{Person} as a foreign key (i.e., \textit{dept\_id}).

    Suppose we are going to find three persons satisfying: (1) They know each other; (2) Two of them belong to the Department of Computer Science.
    Then, the corresponding SQL/PGQ query is as follows:
    \begin{lstlisting}
        SELECT pn1, pn2, pn3
        FROM Department p, GRAPH_TABLE (friendship_graph
            MATCH (p1:Person)-[:Knows]-(p2:Person)-[:Knows]-(p3:Person),
            (p1)-[:Knows]-(p3)
            COLUMNS (
                p1.name as pn1,
                p1.dept_id as dept1,
                p2.name as pn2,
                p2.dept_id as dept2,
                p3.name as pn3,
                p3.dept_id as dept3)
        ) f
        WHERE dept1 = p.dept_id
        AND dept2 = p.dept_id AND
        AND p.dept_name = 'Computer Science';
    \end{lstlisting}
    According to the first condition, the obtained three persons should form a triangle.
    It is a problem of pattern matching, and such triangles are searched for on \textbf{friendship\_graph}.
    The output of the graph query is a table (named \textbf{f}) with six columns, i.e., pn1, dept1, pn2, dept2, pn3, and dept3, representing the names of the three persons and the identifiers of the departments they belonging to.

    For the second condition, due to the existence of the foreign key, it is efficient to perform natural join between table \textbf{f} and table \textbf{Department} to obtain the ideal results.
    Please note that the outputs of graph queries are still tables, and the outputs can be treated as tables within relational queries.
\end{example}

To support SQL/PGQ queries, it is more reasonable to enhance relational databases with the capability to process graph queries.
The reason is that relational databases have been widely utilized in both academia and industry, and it is costly and impractical to migrate data from relational databases to graph databases.
After the SQL/PGQ query is parsed, relational databases need to optimize the obtained abstract syntax tree (abbr.~AST).
However, since graph queries are allowed in SQL/PGQ queries, the existing relational databases can hardly optimize ASTs with graph operators.
Intuitively, there are some possible solutions, which can be mainly categorized into four types.
The stages of query optimization that these four types of solution are involved in respectively are shown in Fig.~\ref{fig:catagory}.

\begin{figure*}
    \centering
    \begin{subfigure}[b]{0.4\linewidth}
        \centering
        \includegraphics[width=\linewidth]{./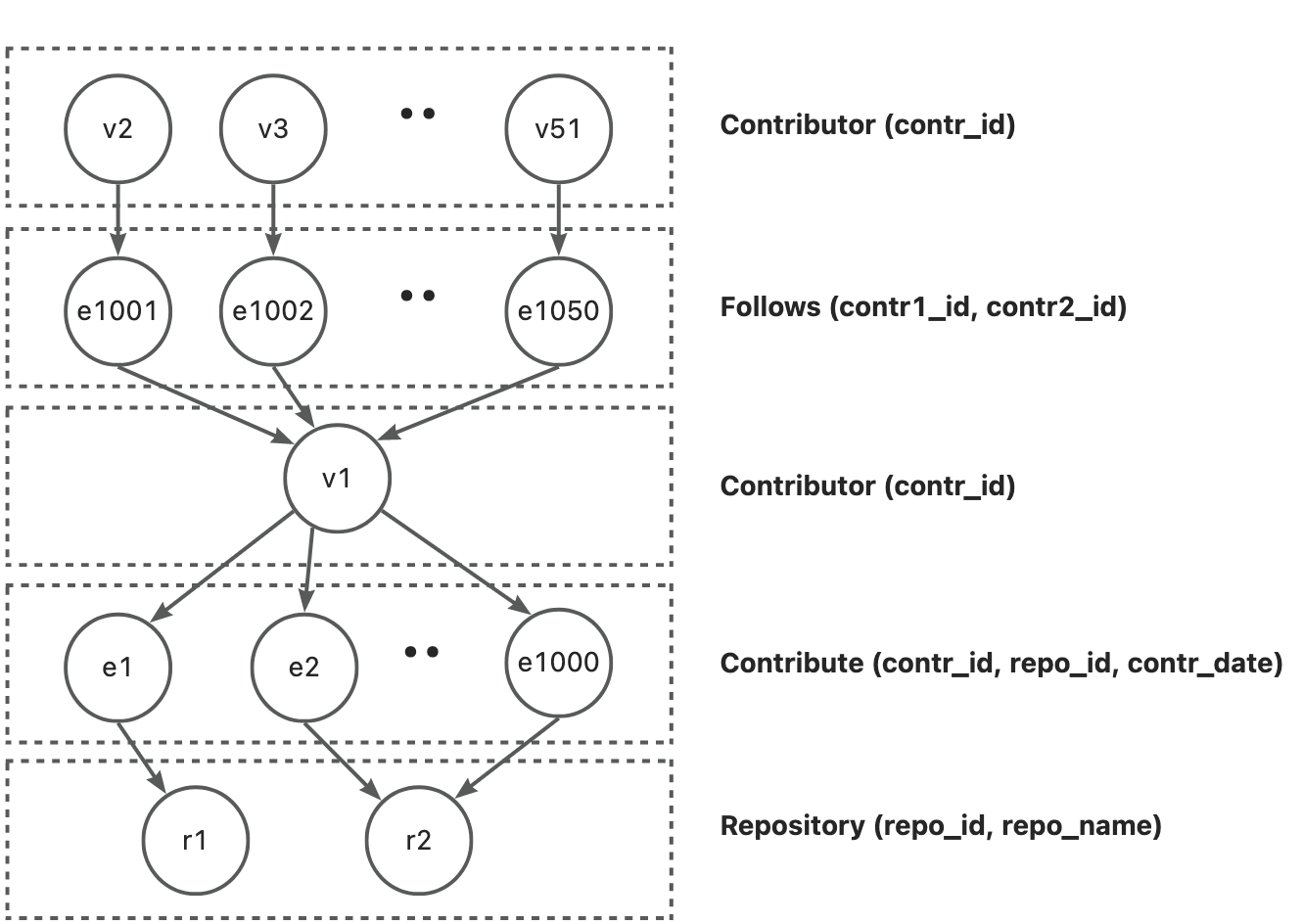}
        \caption{Relationship 1.}
        \label{fig:intro-order-case}
    \end{subfigure}
    \begin{subfigure}[b]{0.4\linewidth}
        \centering
        \includegraphics[width=\linewidth]{./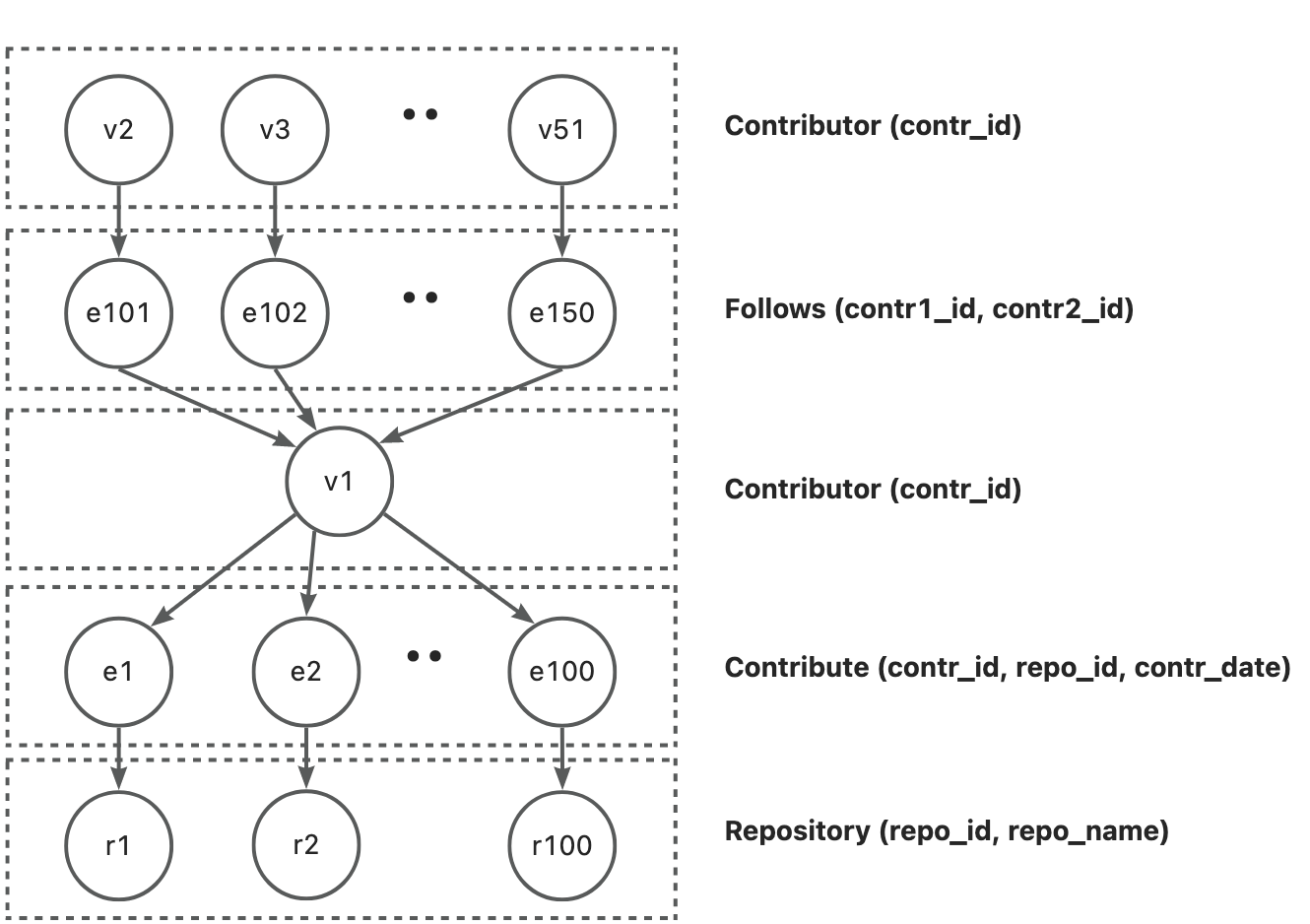}
        \caption{Relationship 2.}
        \label{fig:intro-order-case2}
    \end{subfigure}
    \caption{Graphs representing the relationships among tuples in different tables. In detail, tuples in Tables \textbf{Contributor} and \textbf{Repository} represent vertices in the graph, while those in Tables \textbf{Follows} and \textbf{Contribute} represent edges.}
    \label{fig:intro-replace-example}
\end{figure*}

\begin{figure*}
    \centering
    \includegraphics[width=.6\linewidth]{./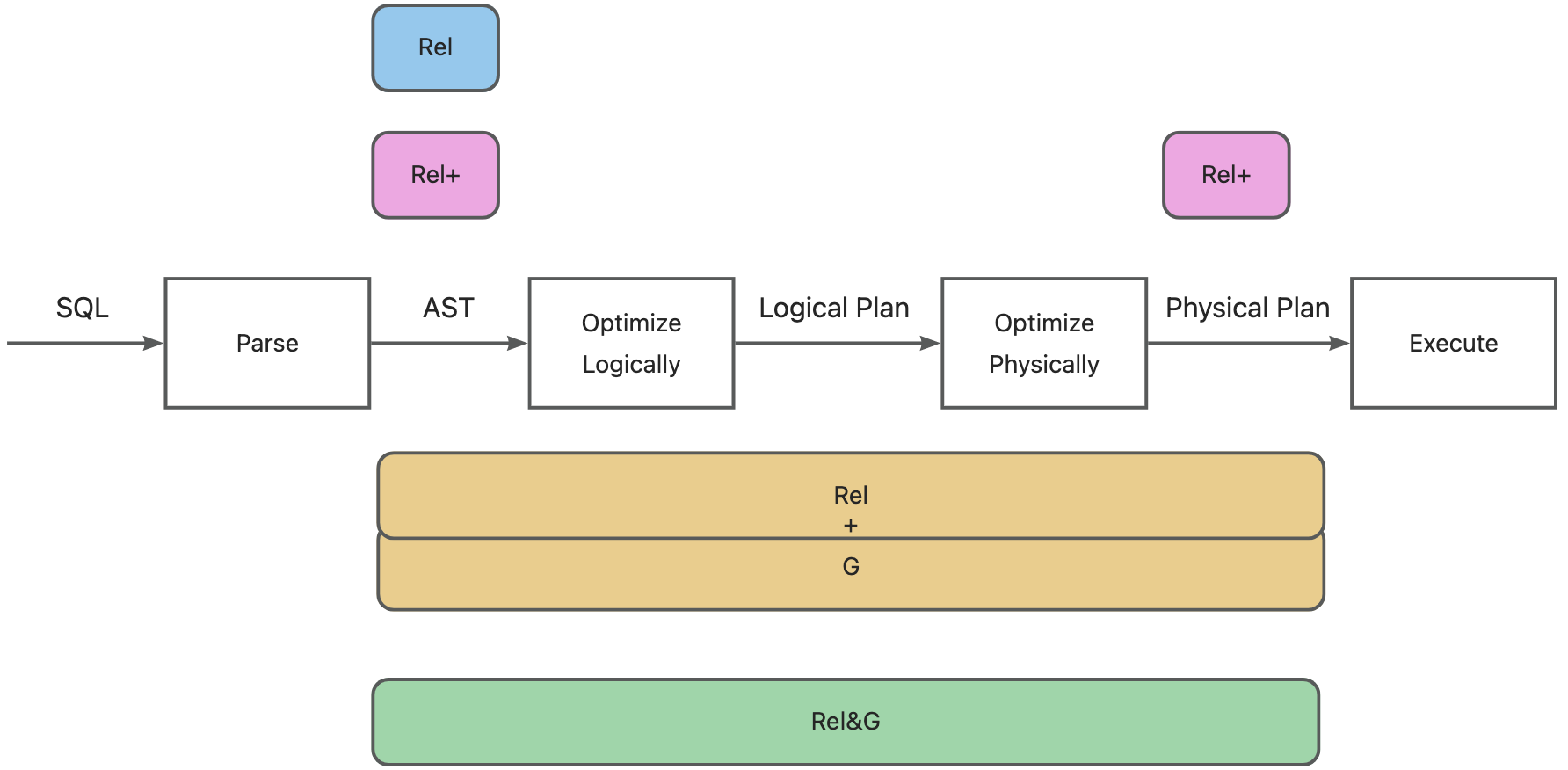}
    \caption{The stages of query optimization that the four kinds of solutions are involved in.}
    \label{fig:catagory}
\end{figure*}

\textbf{Solution 1 ($Rel$)}.
The most direct solution is to transform graph queries to relational ones, and then optimize the new queries with relational optimizers.
Apache Age \cite{apache-age} and DuckPGQ \cite{DuckPGQ,DuckPGQ-VLDB} are typical examples.
However, methods of this type only take effect in the stage of logical optimization and degrade into relational optimizers.
Therefore, they lose the chance of query optimization from the graph perspective.

\begin{example}
    Given four tables \textbf{Contributor}, \textbf{Follows}, \textbf{Contribute}, and \textbf{Repository} as shown in Fig.~\ref{fig:intro-order-case}, the relationships among their tuples are presented.
    Specifically, edge (v1, e1) means that e1.contr\_id = v1.contr\_id and edge (e1, r1) means that e1.repo\_id = r1.repo\_id.
    Moreover, suppose graph indices have been built on tables \textbf{Follows} and \textbf{Contribute}.
    Then, given a contributor, the followers of the contributor and the repositories the contributor contributes to can be directly retrieved with the indices, respectively.

    Suppose we are going to find the followers of $v1$, and the query is as follows:
    \begin{lstlisting}
        SELECT pid
        FROM GRAPH_TABLE (graph_view
            MATCH (p1:Contributor)-[:Follows]->(p2:Contributor {id: 1})
            COLUMNS (p1.id as pid)
        );
    \end{lstlisting}
    If the graph query is transformed to the corresponding relational query, then the followers of $v1$ can only be found with a join between \textbf{Contributor} and \textbf{Follows}, and another join between the resultant table and \textbf{Contributor}.
    In this process, the graph indices cannot be utilized, and the process is far from efficient.
\end{example}

\textbf{Solution 2 ($Rel^+$)}.
Methods of this type build graph indices on relational databases and introduce new operators to perform graph queries based on the indices.
However, these new operators are applied after the optimal physical plan is obtained with the relational optimizer.
As shown in Fig.~\ref{fig:catagory}, such methods are separately involved in the stages of logical optimization and physical optimization.
It means that the optimizations w.r.t.~graph operators at the logical layer and the physical layer are disjointed.
To be more specific, the cost of new operators introduced in physical optimization are unaware in logical optimization.

The optimizer in GRainDB \cite{graindb} is a representative of type $Rel^+$.
GRainDB builds RID indices on DuckDB \cite{duckdb}, and proposes two new join methods, i.e., sip-join and merge-sip-join.
In detail, sip-join gets adjacent edges of vertices or gets adjacent vertices of edges based on the RID indices, while merge-sip-join obtains the neighbors of vertices.
Since GRainDB follows the grammar of SQL, given a SQL/PGQ query, the query is transformed to the equal relational query first, and then GRainDB optimizes the query with the relational optimizer of DuckDB to obtain the optimal execution plan.
Next, GRainDB replaces some hash-joins in the plan with sip-joins and merge-sip-joins to leverage the graph indices.
It indicates that the cost-based optimization in GRainDB only finds the optimal execution plan before the graph indices are aware.
Therefore, the plan can be suboptimal after replacement.
Moreover, some efficient replacement cannot be applied w.r.t.~the obtained execution plan due to the order of joining tables representing vertices and edges.
An example is shown as follows.

\begin{example}
    Given four tables as shown in Fig.~\ref{fig:intro-order-case}, a SQL/PGQ query is as follows:
    \begin{lstlisting}
        SELECT contr_id, repo_name
        FROM GRAPH_TABLE (graph_view
            MATCH (p2:Contributor)-[f:Follows]->(p1:Contributor {contr_id: 1})-[c:Contribute]->(r:Repository)
            COLUMNS (p2.contr_id as contr_id,
                    r.repo_name as repo_name)
        );
    \end{lstlisting}
    %\begin{lstlisting}
    %    SELECT p2.contr_id, repo_name
    %    FROM Contributor p1, Follows f, Contributor p2, Contribute c, Repository p
    %    WHERE p1.contr_id = 1
    %    AND p1.contr_id = f.contr2_id
    %    AND p2.contr_id = f.contr1_id
    %    AND p1.contr_id = c.contr_id
    %    AND c.repo_id = r.repo_id;
    %\end{lstlisting}
    From the perspective of a relational database (e.g., DuckDB), the best join order can be \textbf{p1$\rightarrow$f$\rightarrow$p2$\rightarrow$c$\rightarrow$r}, since tables \textbf{Follows} and \textbf{Contributor} have much smaller cardinalities than table \textbf{Contribute}.
    Then, by replacing the join operators with getNeighbor, the finally obtained execution plan is \textbf{p1$\xrightarrow{\textit{getNeighbor}}$p2$\xrightarrow{\textit{getNeighbor}}$r}.

    However, as $v_1$ has much fewer neighbors in table \textbf{Repository} than in table \textbf{Contributor}, join order \textbf{p1$\xrightarrow{\textit{getNeighbor}}$r$\xrightarrow{\textit{getNeighbor}}$p2} would be more efficient from the perspective of graph databases.
    Therefore, it suggests that
    replacing relational operators with graph operators after optimization with relational optimizers can miss optimal execution plans.

    Besides, given the relationships among the tuples as shown in Fig.~\ref{fig:intro-order-case2}, the best join order from the perspective of a relational database like DuckDB can be \textbf{p1$\rightarrow$f$\rightarrow$c$\rightarrow$p2$\rightarrow$r}.
    Then, \textbf{p1$\rightarrow$f} and \textbf{c$\rightarrow$p2} cannot be replaced with \textbf{p1$\xrightarrow{\textit{getNeighbor}}$p2} and some efficient execution plans are missing.

    %An example about replace join with getV/getE/getNeighbor,
    %or the example of duckdb, whether to indicate more constraints (due to the unawareness of getNeighbor)
\end{example}

\textbf{Solution 3 ($Rel+G$)}.
According to the grammar of SQL/PGQ, the graph queries usually starts with keyword GRAPH\_TABLE.
Therefore, graph queries can be easily distinguished in SQL/PGQ queries.
Hence, it is possible to optimize graph queries first with graph optimizers, and then optimize the relational query with relational optimizers.
As shown in Fig.~\ref{fig:catagory}, the logical and physical optimization are related, and the cost of graph operators are estimated in the process of optimization.
However, this solution still has limitations, e.g., graph queries and relational queries are optimized separately, and cross-queries optimizations are missing.
An example about this limitation is presented.

\begin{example}
    \label{example:push_down}
    Suppose we are going to find persons that know John and the query expression is as follows:
    \begin{lstlisting}
        SELECT p FROM GRAPH_TABLE (friendship_graph
            MATCH (p1:Person)-[:Knows]-(p2:Person)
            COLUMNS (p1.name as p, p2.name as p2)
        )
        WHERE p2 = 'John';
    \end{lstlisting}
    For $Rel+G$ methods, the graph query is first optimized with a graph optimizer, and the optimized plan finds all pairs of persons that know each other.
    Then, the relational optimizer optimizes the relational query, which finds the persons that know John.
    Please note that the condition ``p2 = 'John''' in the relational query can be pushed down into the graph query, so that the graph query only returns persons that know John.
    The optimized query is as follows.
    \begin{lstlisting}
        SELECT p FROM GRAPH_TABLE (friendship_graph
            MATCH (p1:Person)-[:Knows]-(p2:Person {name: 'John'})
            COLUMNS (p1.name as p)
        );
    \end{lstlisting}
    However, since $Rel+G$ methods optimize relational queries and graph queries separately and do not apply cross-queries optimizations, the condition cannot be pushed down and the optimal execution plan is missed.
\end{example}

In this paper, we propose \textbf{Solution 4 ($Rel\&G$)}, which optimizes the graph queries and relational queries simultaneously with cross-queries optimizations.
Such a solution can fully leverage the advantages of both relational optimizers and graph optimizers.
In detail, we propose a new converged optimization framework of this type for SQL/PGQ.
The framework first generates the converged logical plan consisting of a relational subplan and several graph subplans.
Then, optimization strategies including CBOs and RBOs are applied to optimize inside and crossing subplans.
The contributions of this paper are mainly as follows:

(1) To the best of our knowledge, this is the first optimization framework for SQL/PGQ query optimization.
Property graphs are represented as views in SQL/PGQ, and vertices and edges are associated with tables in the relational databases.
Then, it is crucial to offer a converged query optimizer efficient for SQL/PGQ queries to optimize both relational and graph queries in SQL/PGQ statements.

(2) The framework proposes a new Scan operator named ScanGraphTable to retrieve data from graph tables obtained with graph queries.
The output of ScanGraphTable is a relational table and it bridges the gap between graph subplans and relational subplans.

(3) We prove that graph pattern matching expressed in graph queries of SQL/PGQ can be expressed with graph relational algebra, which confirms that the converged graph relational optimization framework can optimize SQL/PGQ queries and obtain correct results.

% In the framework, we design and implement numerous important operators for graph optimizer, including getV, getE, getNeighbor, and extendIntersect.
% Specifically, the extendIntersect operator is helpful in supporting worst-case optimality.

(4) Theoretical analysis on the complexity of the optimization framework is conducted.
The obtained theorems prove that for graph queries, the join order optimization with a graph optimizer can be exponentially faster than that with a relational optimizer.
It theoretically confirms that relational optimizer is usually not suitable for graph queries, and it is indispensable for the existence of a converged optimization framework.

(5) Extensive experiments are conducted to show the efficiency of the proposed converged query optimization framework.
The experimental results show that the framework can be ?$\times$ faster than the baselines.

The rest of this paper is organized as follows.

The existing methods for optimizing the SPJM queries can be divided into four categories.
The main difference between these methods lies in the approach to handling the matching operator.
Before the details of these methods are introduced, concepts about graph structure and graph matching decomposition are proposed as follows.
}

\section{Preliminaries}
\label{sec:preliminaries}

In this section, we propose the utilized data model and define the SPJM query processed in this paper. Frequently used notations in this paper are summarized in \reftable{notations}.

\vspace*{-3mm}
\subsection{Data Model}
\label{sec:data-model}

A schema, denoted as \(S = (a_1, a_2, \ldots, a_n)\), is a collection of attributes. Each attribute \(a_i\) is associated with a specific data domain \(D_i\), which defines the set of permissible values that \(a_i\) can take.
A relation \(R\) is defined as a set of tuples. We consider \(R\) to be a relation over schema \(S\), if and only if, every tuple \(\tau = (d_1, d_2, \ldots, d_n)\) in \(R\) adheres to the schema's constraints, such that the value \(d_i\) for each position in the tuple corresponds to the data domain \(D_i\) of the attribute \(a_i\) in \(S\). In other words, each value \(d_i\) in a tuple \(\tau\) is drawn from the appropriate data domain \(D_i\) for its corresponding attribute \(a_i\).
Moreover, for any tuple \(\tau\) in the relation \(R\), the notation \(\tau.a_i = d_i\) signifies that the attribute \(a_i\) in tuple \(\tau\) has value \(d_i\).
A table is a representation of a relation with rows corresponds to tuples in the relation, and columns represent attributes in the schema. In this paper, we use the terms of relation and table interchangeably.

We define a \emph{Property Graph} as $G = (V_G, E_G)$,
where $V$ stands for the set of vertices. Let $E \subseteq V \times V$ denote the set of edges in the graph. An edge $e \in E$ is represented as an ordered pair $e = (v_s, v_t)$, where $v_s \in V$ is the source vertex and $v_t \in V$ is the target vertex, indicating that the edge $e$ connects from $v_s$ to $v_t$.
%we use $\id(e)$ and $\lab(e)$ to denote the globally unique id and label of $e$, and $\attr(e, a_i)$, or $e.a_i$, to denote the value of the attribute $a_i$ of $e$.
For any graph element $\epsilon$ that is either a vertex or an edge, we denote $\id(\epsilon)$ and $\lab(\epsilon)$ as the globally unique ID and the label of $\epsilon$, respectively. Given an attribute $a_i$, $\epsilon.a_i$ denotes the value of the attribute $a$ of $\epsilon$.

Given a vertex $v$, we denote its adjacent edges as $\adj_G^E(v) = \{e = (v, v_t) | e \in E\}$ and its adjacent vertices (i.e., neighbors) as $\adj_G(v) = \{v_t | (v, v_t) \in E\}$. It is important to note that the adjacent edges and vertices can be defined for both directions of an edge $e = (v_s, v_t)$, i.e., when $v = v_s$ or $v = v_t$. However, for simplicity, we only define one direction in this notation. In the actual semantics of the paper, both directions may be considered. The degree of $v$ is defined as $d_G(v) = |\adj_G(v)|$, and the average degree of all vertices in the graph is $\overline{d}_G = \frac{1}{|V_G|} \sum_{v \in V_G} d_G(v)$.
In the rest of the paper, when the context is clear, we may remove $G$ from the subscript for simplicity, for example $G=(V, E)$.

Considering two graphs \(G_1\) and \(G_2\), we assert that \(G_2\) is a subgraph of \(G_1\), symbolized as \(G_2 \subseteq G_1\), if and only if \(V_{G_2} \subseteq V_{G_1}\), and \(E_{G_2} \subseteq E_{G_1}\). Furthermore, \(G_2\) qualifies as an induced subgraph of \(G_1\) under the condition that \(G_2\) is already a subgraph of \(G_1\), and for every pair of vertices in \(G_2\), any edge \(e\) that exists between them in \(G_1\) must also present in \(G_2\).

To illustrate the integration of graph syntax within the realm of relational data, we introduce the concept of a \textit{Relations-to-Graph Mapping} (i.e. \rgmapping), to facilitate the transformation of relational data structures into a property graph.

An \revise{\rgmapping consists of an vertex mapping and an edge mapping that map tuples in relations to unique vertices or edges. To better describe these vertex and edge mappings, we can leverage the Entity-Relationship (ER) diagram~\cite{song1995comparative,chen1983english}. In relational data modeling, an ER diagram includes entities and relationships. Consequently, vertices can be mapped from relations corresponding to entities, and edges can be mapped from relations corresponding to relationships. Relations mapped to vertices and edges are referred to as vertex relations and edge relations, respectively.}

In \revise{detail, if a tuple $\tau$ in relation $R$ is mapped to a vertex $v \in V$ (or an edge $e = (v_s, v_t) \in E$), it is assigned an ID $\id(v)$ (or $\id(e)$), a label $\lab(v)$ (or $\lab(e)$)  that corresponds to the name of $R$, and attributes $v.attr*$ (or $e.attr*$) that reflect the attributes $attr*$ of $\tau$. For an edge relation $R_{e}$, there must exist two vertex relations, $R_{p_s}$ and $R_{p_t}$. Two total functions are defined: $\lambda_{e}^s: R_{e} \to R_{p_s}$ and $\lambda_{e}^t: R_{e} \to R_{p_t}$. Consider a tuple $\tau \in R_{e}$ mapped to an edge $e$, and tuples $\tau_s \in R_{p_s}$ and $\tau_t \in R_{p_t}$, where $\lambda_{e}^s(e) = \tau_s$ and $\lambda_{e}^t(e) = \tau_t$. Through the vertex mapping, $\tau_s$ is mapped to the source vertex $v_s$ and $\tau_t$ to the target vertex $v_t$ of the edge $e$. The two total functions are often established through primary-foreign key relationships, as illustrated in an ER diagram.}

\comment{
In detail, \revise{an \rgmapping comprises a vertex mapping and an edge mapping that maps tuples in relations to unique vertices or edges. Suppose a tuple $\tau$ in relation $R$ is mapped to a vertex $v \in V$ (or edge $e = (v_s, v_t) \in E$), it is assigned an ID $\id(v)$ (or $\id(e)$) that matches the name of $R$, a label $\lab(v)$ (or~$\lab(e)$), and attributes $v.attr*$ (or $e.attr*$) that mirror the attributes $attr*$ of $\tau$.
Relations mapped to vertices and edges are referred to as vertex relations and edge relations, respectively.
}

Given \revise{an edge relation $R_{e}$ and two vertex relations $R_{p_s}$ and $R_{p_t}$, we further define two total functions: $\lambda_{e}^s: R_{e} \to R_{p_s}$ and $\lambda_{e}^t: R_{e} \to R_{p_t}$, which maps a tuple $\tau \in R_{e}$ to a tuple $\tau_s \in R_{p_s}$ and a tuple $\tau_t \in R_{p_t}$, respectively.
Applying the vertex mapping, $\tau_s$ and $\tau_t$ are mapped to the source vertex $v_s$ and target $v_t$ of the edge, respectively.
These total functions are typically established through primary-foreign key relationships, which are best illustrated in an Entity-Relationship (ER) diagram~\cite{song1995comparative,chen1983english}.}
%For example, an ER diagram is shown in Fig.~\ref{fig:intro-rgmapping-example}(a), and it indicates two total functions w.r.t.~$\relation{\text{Likes}}$, i.e., $\lambda_{\text{Likes}}^s: \relation{\text{Likes}} \to \relation{\text{Person}}$ and $\lambda_{\text{Likes}}^t: \relation{\text{Likes}} \to \relation{\text{Message}}$}.
}

\begin{figure*}
    \centering
    \includegraphics[width=0.95\linewidth]{./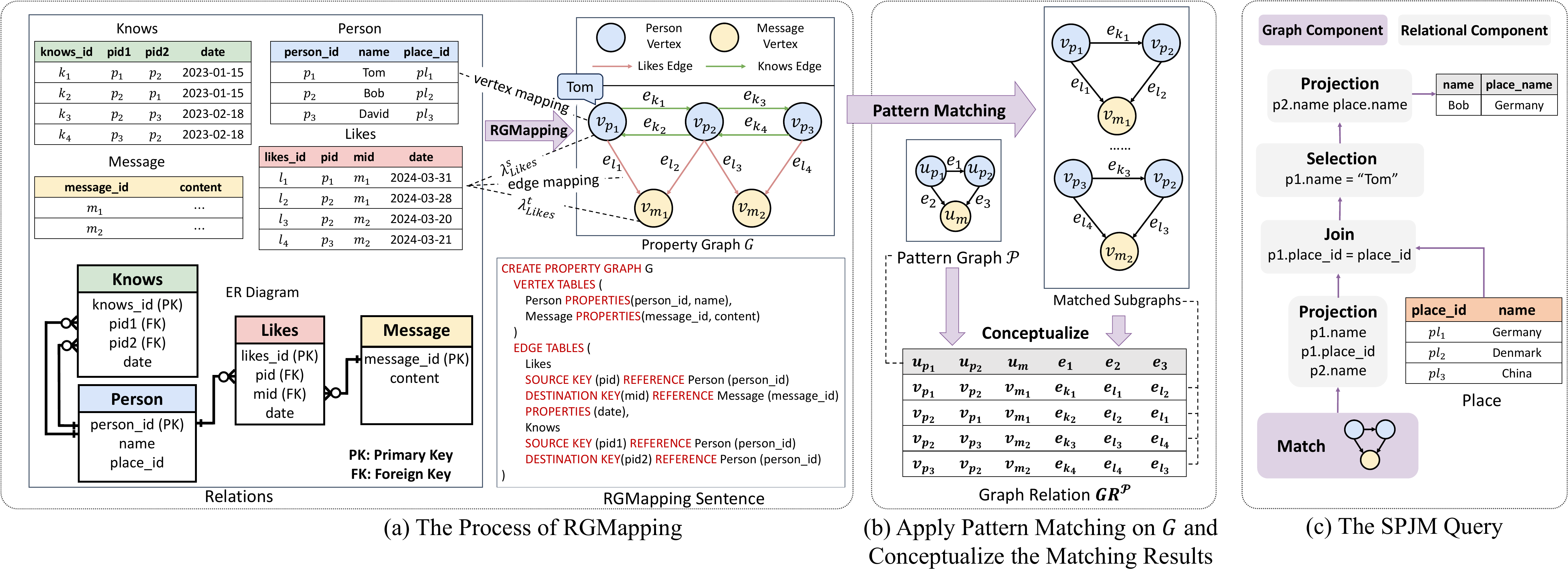}
    \vspace*{-2mm}
    \caption{An example of \rgmapping.}
    \vspace*{-1mm}
    \label{fig:intro-rgmapping-example}
\end{figure*}

\begin{example}
    \label{ex:rgmapping}
    \revise{In Fig.~\ref{fig:intro-rgmapping-example}(a), we have illustrated some relational tables and their corresponding ER diagram.
    An \rgmapping can be defined following the grammar of SQL/PGQ with \lstinline{CREATE PROPERTY GRAPH} statements.}
    \iffalse
    \begin{equation*}
        \begin{split}
            & \textit{Person = (\underline{person\_id}, name, place\_id)} \\
            & \textit{Message = (\underline{message\_id}, content)} \\
            & \textit{Place = (\underline{place\_id}, pl\_name)} \\
            & \textit{Likes = (\underline{likes\_id}, pid, mid, date)} \\
            & \textit{Knows = (\underline{knows\_id}, pid1, pid2, date)}
        \end{split}
    \end{equation*}
    \fi
    %the following statement defines an \rgmapping (\todo{better plot the statement in Figure 4}):
    \iffalse
    \begin{lstlisting}
CREATE PROPERTY GRAPH G1
    VERTEX TABLES (
       Person PROPERTIES(person_id, name),
       Message PROPERTIES(message_id, content)
    )
    EDGE TABLES (
       Likes
       SOURCE KEY(pid) REFERENCE Person(person_id)
       DESTINATION KEY(mid) REFERENCE Message(message_id)
       PROPERTIES(date),
       Knows
       SOURCE KEY(pid1) REFERENCE Person(person_id)
       DESTINATION KEY(pid2) REFERENCE Person(person_id)
    )
    \end{lstlisting}
    \fi
    The described \rgmapping involves assigning tuples from vertex relations (\revise{i.e. entities}), such as $\relation{Person}$ and $\relation{Message}$, to graph vertices. For instance, the vertex $v_{p_1}$ is associated with the tuple $\tau_{p_1}$ in $\relation{Person}$, and thus assigned the label ``Person'' and the name attribute ``Tom''. Similarly, edge relations (\revise{i.e. relationships}) $\relation{Likes}$ and $\relation{Knows}$ correspond to graph edges.
    \revise{Regarding $\relation{Likes}$
    that is mapped to graph edges, two total functions can be identified, namely $\lambda_{\text{Likes}}^s: \relation{\text{Likes}} \to \relation{\text{Person}}$ and $\lambda_{\text{Likes}}^t: \relation{\text{Likes}} \to \relation{\text{Message}}$.}
    Let's consider the edge $e_{l_1}$. It originates from the tuple $\tau_{l_1}$ in the $\relation{Likes}$ relation. Its source vertex $v_{p_1}$ is linked to the tuple $\tau_{p_1}$ in $\relation{Person}$ via the function $\lambda_{Likes}^s$, following the \revise{primary-foreign key relationship} ``$\tau_{l_1}.pid = \tau_{p_1}.\text{person\_id}$''. Similarly, its target vertex $v_{m_1}$ is associated with the tuple $\tau_{m_1}$ in $\relation{Message}$ via the function $\lambda_{Likes}^t$, following ``$\tau_{l_1}.mid = \tau_{m_1}.\text{message\_id}$''. As a result of this mapping, the edge $e_{l_1}$ is assigned the label ``Likes'' and the attribute ``date'' with the value ``2024-03-31''.
\end{example}

%In the rest of this paper, let $\mathcal{J}_{GR} = 2^{2^{V \cup E}}$ be the domain of graph relations, $\mathcal{J}_R = 2^{2^{D}}$ be the domain of relations, and $\mathcal{J} = \mathcal{J}_{GR} \cup \mathcal{J}_R$.

%\enlargethispage{1em}

\vspace*{-3mm}
\subsection{Matching Operator}
\label{sec:matching-operator}
Consider a property graph \(G(V_G, E_G)\), alongside a \emph{connected} pattern graph, represented as \(\pattern(V_\pattern, E_\pattern)\). Here, \(\pattern\) is a property graph that does not possess attributes, and
we denote $n$ and $m$ as the number of vertices and edges in the $\pattern$, respectively.
%Graph pattern matching is defined as the task of identifying all subgraphs within a target graph \(G\) that are \emph{homomorphic} to \(\pattern\).
Graph pattern matching seeks to determine all subgraphs in \(G\) that are \emph{homomorphic} to \(\pattern\).
Formally, given a subgraph $g \subseteq G$, a homomorphism from \(\pattern\) to \(g\) is a \emph{surjective}, total mapping \(f: V_\pattern \cup E_\pattern \to V_{g} \cup E_{g}\) that satisfies the following conditions: (1) For every vertex \(u \in V_\pattern\), there is a corresponding vertex \(v = f(u) \in V_{g}\) with \(\lab(v) = \lab(u)\); (2) For each edge \(e = (u_s, u_t) \in E_\pattern\), there is a corresponding edge \(f(e) = (v_s, v_t) \in E_{g}\), ensuring that the mapping preserves the edge's the label, as well as its source and target vertices, that is \(\lab(e) = \lab(f(e))\), and \(f(u_s) = v_s\), \(f(u_t) = v_t\). It's important to highlight the homomorphism semantics, as one of the widely used semantics for graph pattern matching~\cite{angles2017foundations}, do not require each pattern vertex and edge being uniquely mapped to distinct vertices and edges in the data graph. This facilitates a seamless integration between graph pattern matching and relational operations, but alternative semantics for graph pattern matching such as isomorphism can also be adopted, as will be further discussed in~\refsec{handling-match-operator}.

The outcomes of graph pattern matching can be succinctly modeled as a relation \(GR_G^\pattern\), or more compactly \(GR^\pattern\) in clear contexts, defined over the schema \(S = V_\pattern \cup E_\pattern\). Here, the sets \(V_G\) and \(E_G\) serve as the respective domains for the vertices and edges identified through the matching process. Within this framework, we refer to such a relation as a \emph{Graph Relation}, a construct where all attributes are derived from the domain of a property graph.
It is essential to recognize that any property graph \(G\) can be conceptualized as a graph relation \(GR^G\), represented by a singular tuple that collectively encompasses all of its vertices and edges. Throughout this paper, we treat the notions of a property graph and a tuple of graph relation as essentially interchangeable terms. In alignment with this perspective, we elaborate on the \emph{Matching} operator as follows.

%The formulation of graph relation inherently positions any graph $G$ as a graph relation consisting of a single tuple that encapsulates the entirety of its vertices and edges.
%he results of graph pattern matching can be represented as a relation \(R_G(\pattern)\) (or \(GR(\pattern)\) for short when the context is clear) over the schema  \(S = V_\pattern \cup E_\pattern\), wherein \(V_G\) and \(E_G\) act as the domains for the vertices and edges that have been matched, respectively. In this paper, we call such a relation a \emph{graph relation}, when all attributes of the relation are from the domain of a property graph.
%It's clear that any graph $G$ can be represented as a graph relation consisting of a single tuple that encapsulates the entirety of its vertices and edges. In this paper, the concepts of property graph and graph relation can always be used interchangeably. In this context, we introduce the \emph{Match} operator as detailed below.

%likewise expressible as a graph relation \(GR_G(\pattern)\) (or \(GR(\pattern)\) for short when the context is clear) based on the schema \(S = V_\pattern \cup E_\pattern\), wherein \(V_G\) and \(E_G\) act as the domains for the vertices and edges that have been matched, respectively. In this context, we introduce the \emph{Match} operator as detailed below.

\begin{definition}[Matching Operator, \(\matching\)]
    \label{def:matching}
    The Matching Operator, denoted as \(\matching\), is designed to perform graph pattern matching on a given graph relation \(GR\) against a specified pattern graph \(\pattern\). For each graph instance \(g\) in \(GR\), \(\matching\) identifies all subgraphs of \(g\) that are homomorphic to \(\pattern\), and subsequently, aggregates these mappings to construct a comprehensive graph relation. The operation of the matching Operator can be formally articulated as \(\matching(GR, \pattern) = \bigcup_{g \in GR} GR_g^\pattern\).%, where each \(GR_g^\pattern\) represents the subgraph(s) of \(g\) that align with the structure and properties of \(\pattern\).
\end{definition}

\begin{example}
    \label{ex:matching}
    Let \(G\) denote the property graph derived from the relations via \rgmapping in \refex{rgmapping}.
    Given a pattern graph \(\pattern\) in \reffig{intro-rgmapping-example}(b), the results of graph pattern matching are subgraphs of \(G\) that are homomorphic to \(\pattern\), represented as a graph relation \(GR^\pattern = \matching(GR^{G}, \pattern)\), each tuple corresponds to one matched subgraph.
\end{example}

This definition ensures that the matching operator is inherently closed regarding graph relations,
which adheres to the language opportunities of ``nested matching'' (specified as PGQ-079) in SQL/PGQ~\cite{sql-pgq}.
In this paper, we only handle cases where $G$ represents the entire property graph, and thereafter simplify the matching operator notation to $\matching(\pattern)$ when the context is clear.

%\enlargethispage{1em}

\vspace*{-1mm}
\subsection{Problem Definition}
\label{sec:problem-definition}

To study relational query optimization, it is common to focus on \spj queries,
which consists of three most frequently used operations: select, project, and (natural) join.
These operations form the backbone of many relational queries. Given a set of relations \(R_1, R_2, \ldots, R_m\),
an \spj query is formally represented as:
\[
Q = \pi_A(\sigma_\constraints(R_1 \Join \cdots \Join R_m)).
\]

Inspired from the \spj paradigm, we introduce a novel category of queries, termed \spjm queries, to logically formulate SQL/PGQ~\cite{sql-pgq} queries that
blend relational and graph operations.
%Our focus is not on an exhaustive exploration of SQL/PGQ, which encompasses a broad spectrum of operations such as shortest-path, but rather on understanding and enhancing the core functional elements of such queries.
The \spjm framework augments \spj queries by incorporating a matching operator to  enrich the query's expressive power, to seamlessly navigate both relational and graph data domains.
Given the set of relations and a property graph \(G\) constructed from these relations via an \rgmapping, %(\refdef{rgmapping}),
an \spjm query is articulated as:
\begin{equation}
    \label{eq:spjm}
Q = \pi_A(\sigma_\constraints(R_1 \Join \cdots \Join R_m \Join (\widehat{\pi}_{A*}\matching_G(\pattern))))
\end{equation}
In this formulation, \(\widehat{\pi}_{A*}\matching_G(\pattern)\) is the \emph{graph component} of
the query, while the remaining part of the query is an \spj expression referred to as the \emph{relational component}.
Here, \(\matching_G(\pattern)\) represents the process of matching the pattern \(\pattern\) on the graph \(G\) and
returns a graph relation as defined in \refdef{matching}. The operator \(\widehat{\pi}_{A*}\) is a
graph-calibrated projection operator that extracts the ID, label, and other attributes from the vertices and edges in the matched results.
This process helps ``flatten'' graph elements into relational tuples.
For example, given a graph relation $GR$ that contains a vertex of \{ID:0, label:Person, name:``Tom''\}, the
 projection
$\widehat{\pi}_{\id(v) \rightarrow \text{v\_id}, \lab(v) \rightarrow \text{v\_label}, v.name \rightarrow \text{v\_name}}(GR)$
turns the vertex into a relational tuple of (0, Person, ``Tom'').
The projection is designed to reflect the \lstinline{COLUMNS} clause in SQL/PGQ
to retrieve specific attributes from vertices and edges as required. For simplicity, we assume that all
attributes are extracted unless otherwise specified.

In this paper, we study the problem of optimizing \spjm queries in \refeq{spjm}. \reffig{intro-rgmapping-example}(c) illustrates the \spjm query skeleton corresponding to the SQL/PGQ query in \refex{introduction:sqlpgq}.

\comment{
\begin{example}
    \reffig{intro-rgmapping-example}(c) illustrates an example of an \spjm query, along with its corresponding SQL/PGQ expression. The query's semantic is to identify Tom's friends (and their living place) who like the same messages as Tom. As per the query, once the results of the matching operator are obtained (as described in \refex{matching}), a $\gproject$ operator is applied to extract attributes from all vertices, forming a relational table. Subsequently, the join, selection, and projection operators are employed to compute the final results.
    %Specifically, Tom, who is German, and Bob know each other and like the same message $m_1$.
\end{example}
}

\vspace*{-1mm}
\section{Optimizing Matching Operator}
\label{sec:optimizing-matching-operator}
In this section, we focus on handling the matching operator, %(more precisely, the graph component),
which plays a distinct role within the \spjm queries. %compared to the \spj queries.
We discuss two main perspectives of optimizing
the matching operator: logical transformation and physical implementation. Logical transformation is
responsible for transforming a matching operator into a logically equivalent representation,
while physical implementation focuses on how the matching operator can be efficiently executed.

%\enlargethispage{1em}
\vspace*{-2mm}
\subsection{Logical Transformation}
\label{sec:handling-match-operator}
We commence with an intuitive, graph-agnostic transformation before
introducing a graph-aware technique grounded on the concept of decomposition tree, which
is the key to the optimization of graph pattern matching in the literature~\cite{huge,GLogS}.

Before proceeding, we introduce the concept of pattern decomposition that decomposes $\pattern$ into two overlapping patterns, $\pattern_1$ and $\pattern_2$, with shared vertices $V_{o} = V_{\pattern_1} \cap V_{\pattern_2}$ and shared edges $E_{o} = E_{\pattern_1} \cap E_{\pattern_2}$.
Denote $\pattern = \pattern_1 \cup \pattern_2$. Under the homomorphism semantics, the matching of $\pattern$ can be represented as:
\begin{equation}
    \label{eq:join-pattern}
    \matching(\pattern) = \matching(\pattern_1) \widehat{\Join}_{V_{o}, E_{o}} \matching(\pattern_2),
\end{equation}
where $\widehat{\Join}$ is a natural join operator for joining two graph relations based on the common vertices and edges.
Note that \refeq{join-pattern} is also applicable to alternative semantics, including isomorphism and non-repeated-edge~\cite{angles2017foundations}. To support these semantics, a special all-distinct operator can be applied as a filter to remove results that contain duplicate vertices and/or edges. The adoption of the all-distinct operator is compatible with all techniques in this paper.

%There are already a considerable number of research findings and solutions for the SPJ problem \cite{}.
%Given that the main difference between SPJM and SPJ lies in the matching operator, in addressing the SPJM problem, solutions from the SPJ problem can be referenced.
%In this section, we focus on the method to deal with the matching operators.
%Firstly, an intuitive method is presented and then an advanced method based on matching decomposition is proposed.
%For simplicity, the pattern graph given in matching operators are assumed to be connected in this section.

%\enlargethispage{1em}

\vspace*{-1mm}
\subsubsection{Graph-agnostic Transformation}
\label{sec:intuitive-method}
If the matching operator can be transformed into purely relational operations, the \spjm query becomes a
standard \spj query, which can then be optimized using existing relational optimizers (\refsec{relational-only}). This graph-agnostic
approach is intuitive and easy to implement on top of existing relational databases, making it a straightforward
choice in prototyped systems~\cite{apache-age,DuckPGQ,DuckPGQ-VLDB}. However, there is no theoretical guarantee that
such a transformation is lossless in the context of \rgmapping. In this subsection,
we bridge this gap by demonstrating the lossless transformation of the matching
operator under \rgmapping.
% aim to
%Graph-agnostic methods~\cite{apache-age,DuckPGQ,DuckPGQ-VLDB} simply recast graph match operations
%as a sequence of relational operations (mainly joins and projections). This allows an \spjm query to be naively transformed into an \spj query, which then can be optimized through standard relational optimizers.

Consider a pattern graph $\pattern$ and one of its edges $e = (u_s, u_t)$. According to the definition of the matching operator (\refsec{matching-operator}), the graph edges and vertices that can be matched with $e$ must have the labels $\lab(e)$, $\lab(u_s)$, and $\lab(u_t)$. We further denote the relations corresponding to these edges and vertices via \rgmapping as $R_{\lab(e)}$, $R_{\lab(u_s)}$, and $R_{\lab(u_t)}$, respectively. Moreover, there must be total functions $\lambda_{\lab(e)}^s$ and $\lambda_{\lab(e)}^t$ for mapping tuples from $R_{\lab(e)}$ to $R_{\lab(u_s)}$ and $R_{\lab(u_t)}$, respectively. We define the following \EVjoin relational operation regarding $\lambda_{\lab(e)}^s$ as:
\begin{equation} \label{eq:ev-join}
\begin{split}
R_{\lab(e)} & \evjoin R_{\lab(u_s)} = \{(\tau_e, \tau_s) \;|\; \\
  &  \tau_e \in R_{\lab(e)} \land \tau_s \in R_{\lab(u_s)} \land \lambda_{\lab(e)}^s(\tau_e) = \tau_s\}.
\end{split}
\end{equation}
The \EVjoin regarding $\lambda_{\lab(e)}^t$ is defined analogously. Although called \EVjoin, the operation is associative like any relation join, meaning that the order in which the edge and vertex relations are joined does not affect the final result.
%

%Such a transformation, as indicated by the following lemma, maintains the integrity of the data and relationships, ensuring a lossless conversion process.
%The following lemma demonstrates that, in an \spjm query, the matching operation can be losslessly transformed into a sequence of purely relational operations.
We have the following lemma.%\footnote{Omitted proofs can be found in the supplementary material.}.

\begin{lemma}
    \vspace*{-3mm}
    \label{lem:spjm-to-spj}
    Under \rgmapping, the matching operation in an \spjm query can be losslessly transformed into a sequence of relational joins involving $n$ vertex relations and $m$ edge relations.
\end{lemma}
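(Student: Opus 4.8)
The plan is to give a constructive transformation together with a bijection argument, proceeding by induction on the number $m$ of pattern edges and invoking the decomposition identity \refeq{join-pattern} to incorporate one edge at a time, so that each pattern vertex contributes one vertex relation and each pattern edge contributes one edge relation.

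First I would settle the atomic cases. For a single-vertex pattern (so $n=1$, $m=0$), the matched subgraphs are exactly the vertices of $G$ carrying the label $\lab(u_1)$, which by the vertex mapping of the \rgmapping are in bijection with the tuples of $R_{\lab(u_1)}$; hence $\matching(\pattern)=R_{\lab(u_1)}$, using one vertex relation and no edge relation. For a single-edge pattern $\pattern_2$ with edge $e=(u_s,u_t)$, I claim $\matching(\pattern_2)=R_{\lab(e)}\evjoin R_{\lab(u_s)}\evjoin R_{\lab(u_t)}$, where the two \EVjoin{}s are taken with respect to $\lambda^s_{\lab(e)}$ and $\lambda^t_{\lab(e)}$ and chained using associativity. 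The crux here is that, by the definition of \rgmapping, the total functions $\lambda^s_{\lab(e)},\lambda^t_{\lab(e)}$ encode precisely the source/target incidence demanded by the homomorphism condition $f(u_s)=v_s$, $f(u_t)=v_t$; combined with the bijectivity of the vertex and edge mappings, this makes the \EVjoin conditions of \refeq{ev-join} equivalent to matching $e$. This case uses one edge relation and two vertex relations.

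For the inductive step I would exploit connectivity: since $\pattern$ is connected, a BFS/DFS edge ordering $e_1,\dots,e_m$ can be chosen so that every prefix subpattern $\pattern^{(k)}$ (on $e_1,\dots,e_k$ and their endpoints) is connected. To pass from $\pattern^{(k-1)}$ to $\pattern^{(k)}$, I set $\pattern_1=\pattern^{(k-1)}$ and let $\pattern_2$ be the single edge $e_k$ together with its endpoints; then $V_o$ consists of the endpoint(s) of $e_k$ already present in $\pattern^{(k-1)}$ (nonempty by the ordering) and $E_o=\emptyset$. Applying \refeq{join-pattern} yields $\matching(\pattern^{(k)})=\matching(\pattern^{(k-1)})\gjoin_{V_o,\emptyset}\matching(\pattern_2)$, with $\matching(\pattern_2)$ replaced by its base-case relational form. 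Each step contributes exactly one edge relation $R_{\lab(e_k)}$ and at most one new vertex relation (for the endpoint of $e_k$ not already present, if any); summing over $k$ gives $m$ edge relations and $2+(n-2)=n$ vertex relations, matching the claim.

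Finally I would argue losslessness, i.e.\ that the constructed join returns exactly $\matching(\pattern)$. The bijection sends a homomorphism $f$ to the tuple whose $i$-th vertex component is the unique tuple of $R_{\lab(u_i)}$ mapped to $f(u_i)$ and whose $j$-th edge component is the unique tuple of $R_{\lab(e_j)}$ mapped to $f(e_j)$; injectivity and surjectivity follow from the bijectivity of the \rgmapping together with the base-case equivalence between the \EVjoin conditions and edge incidence. The main obstacle I expect is the correct handling of \emph{shared} pattern vertices under homomorphism semantics: a pattern vertex incident to several edges yields a single vertex relation that must be constrained simultaneously by all incident \EVjoin conditions, and this has to be reconciled with the fact that two distinct pattern vertices of the same label may legitimately map to the same data vertex. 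The decomposition route resolves the former through the join $\gjoin_{V_o,\emptyset}$ on common vertices, while the latter is automatic, since separate pattern vertices use separate (self-joined) copies of their vertex relation and the relational join imposes no spurious distinctness—exactly mirroring homomorphism. I would close by noting that for isomorphism or non-repeated-edge semantics the all-distinct filter discussed after \refeq{join-pattern} is appended without affecting the relation count.
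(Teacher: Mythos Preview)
Your proposal is correct and follows essentially the same route as the paper: induction on the number of pattern edges, the same single-vertex and single-edge base cases via \EVjoin, and the inductive step obtained by peeling off one edge at a time using \refeq{join-pattern} together with connectivity. Your version is somewhat more careful than the paper's in two places---you explicitly use a BFS/DFS ordering to guarantee that each prefix is connected, and you count relations so that each pattern vertex contributes exactly one vertex relation, whereas the paper joins in the full single-edge triple $R_{\lab(u_{s_k})}\evjoin R_{\lab(e_k)}\evjoin R_{\lab(u_{t_k})}$ at every step and then remarks (in \refex{spjm-to-spj}) that the duplicated vertex relations are redundant and can be dropped---but these are presentational refinements of the same argument.
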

\vspace*{-4mm}
\begin{proof}
    Consider a pattern $\pattern_m$ of $m$ edges, where the $i$-th vertex is denoted as $u_i$, and the $i$-th edge is $e_i = (u_{s_i}, u_{t_i})$. %According to the $\rgmapping$, the corresponding relations of vertex $u_i$ and edge $e_i$ are denoted as $\relation{u_i}$ and $\relation{e_i}$, respectively. Furthermore, we have $\lambda_{e_i}^s$ and $\lambda_{e_i}^t$ to map tuples from $\relation{e_i}$ to $\relation{u_{s_i}}$, and from $\relation{e_i}$ to $\relation{u_{t_i}}$, respectively.

The proof proceeds by induction, starting with a pattern graph $\pattern_0$ with a single vertex only. It is clear that $\matching(\pattern_0)$ yields a subset of vertices with label $\lab(u_0)$, which is mapped from the relation $\relation{\lab(u_0)}$ via $\rgmapping$. As a result, we have $R_0 = \gproject(\matching(\pattern_0)) = \relation{\lab(u_0)}$.

Next, consider $\pattern_1$ with one edge, $e_1 = (u_{s_1}, u_{t_1})$. Matching $\pattern_1$ is equivalent to retrieving the edge relation, together with the corresponding source and target vertices. Therefore, we have:
\[ R_1 = \gproject(M(\pattern_1)) = \relation{\lab(u_{s_1})} \evjoin \relation{\lab(e_1)} \evjoin \relation{\lab(u_{t_1})} \]
Assume that when $m = k-1$, $\gproject(\matching(\pattern_{k-1}))$ can be converted to a sequence of relational operators, resulting in $R_{k-1}$. When $m = k$, we consider $\pattern_{k}$ of $k$ edges constructed from $\pattern_{k-1}$ by adding edge $e_k = (u_{s_k}, u_{t_k})$. For $\pattern_{k}$ to be connected, it must share at least one common vertex $V_o$ with $\pattern_{k-1}$. According to \refeq{join-pattern}, we have:
\[ \matching(\pattern_{k}) =  \matching(\pattern_{e_k}) \gjoin_{V_o} \matching(\pattern_{k-1}), \]
where $\pattern_{e_k}$ denotes a pattern that contains only the edge $e_k$, and $V_o$ is the common vertex shared by $\pattern_{k-1}$ and $\pattern_{e_k}$. Applying $\gproject$ to the above equation, we get:
\vspace*{-1mm}
\begin{equation*}
\begin{split}
R_k &= \gproject(\matching(\pattern_{k})) \\
    &= \widehat{\pi}_{A_1*}(\matching(\pattern_{e_k})) \Join_{V_o.attr}  \widehat{\pi}_{A_2*}(\matching(\pattern_{k-1})) \\
    &= \relation{\lab(u_{s_k})} \evjoin \relation{\lab(e_k)} \evjoin \relation{\lab(u_{t_k})} \Join_{V_o.attr} R_{k-1}
\end{split}
\end{equation*}
By induction, denoting $R'_i = \relation{\lab(u_{s_i})} \evjoin \relation{\lab(e_i)} \evjoin \relation{\lab(u_{t_i})}$, we have the matching operator losslessly converted to a sequence of relational join operations:
\begin{equation}
    \label{eq:graph-agnostic}
    \gproject(\matching(\pattern_{k})) = R'_k \Join R'_{k-1} \Join \cdots \Join R'_1 \Join R_0.
\end{equation}

We thus conclude the proof.
\end{proof}

%Given that edges may connect to common vertices, it's obvious that there are some redundant joins in \refeq{graph-agnostic}. Removing these redundancies results in
%a sequence of joins involving $n$ vertex tables and $m$ edge tables, as illustrated in the following example.

\begin{example}
  \label{ex:spjm-to-spj}
  Given pattern graph $\pattern$ in \reffig{intro-rgmapping-example}(b), the matching operation $\matching(\pattern)$ can be converted to a sequence of join operations as follows. Without loss of generality, we start from $\pattern_0$ containing only the vertex $u_{p_1}$, and we have $\relation{0} = \relationx{1}{\text{Person}}$ (note that the superscript 1 is used to differentiate relations of the same name).
  Next, we sequentially add the edges $e_1 = (u_{p_1}, u_{p_2})$, $e_2 = (u_{p_1}, u_{m})$, and $e_3 = (u_{p_2}, u_m)$ to $\pattern_0$, resulting in the following relations:
  \vspace*{-1mm}
  \begin{equation*}
    \begin{split}
    R'_1 &= \relationx{1}{\text{Person}} \Join_{\text{person\_id}=\text{pid1}} \relation{\text{Knows}} \Join_{\text{pid2}=\text{person\_id}} \relationx{2}{\text{Person}}, \\
    R'_2 &= \relationx{1}{\text{Person}} \Join_{\text{person\_id}=\text{pid}} \relationx{1}{\text{Likes}} \Join_{\text{mid}=\text{message\_id}} \relation{\text{Message}}, \\
    R'_3 &= \relationx{2}{\text{Person}} \Join_{\text{person\_id}=\text{pid}} \relationx{2}{\text{Likes}} \Join_{\text{mid}=\text{message\_id}} \relation{\text{Message}}.
    \end{split}
    \end{equation*}
    Finally, we have $\gproject(\matching(\pattern)) = R'_3 \Join R'_2 \Join R'_1 \Join \relation{0}$.
    Note that $\relationx{1}{\text{Person}}$ in $R'_2$, as well as $\relationx{2}{\text{Person}}$ and $\relation{\text{Message}}$ in $R'_3$, are redundant and can be removed from the final join. By eliminating them, we obtain a sequence of joins with 3 vertex relations and 3 edge relations.
\end{example}

%\enlargethispage{1em}

\subsubsection{Graph-aware Transformation}
\label{sec:graph-aware}
%The graph-agnostic transformation may be inefficient because relational optimizers treat relations representing vertices and edges equally as the other relations,
%preventing the optimizer from utilizing graph-specific optimization techniques. %Furthermore, relational optimizers are known to struggle with queries containing a large number of joins, which is a common case in graph pattern matching.

We introduce a graph-aware transformation that incorporates key ideas from the literature on graph optimization. Following \refeq{join-pattern}, we can recursively decompose $\pattern$, forming a tree structure called the \emph{decomposition tree}. The tree has a root node that represents $\pattern$, and each non-leaf \emph{intermediate} node is a sub-pattern (a subgraph of the pattern) $\pattern' \subset \pattern$, which has a left and right child node, denoted as $\pattern'_l$ and $\pattern'_r$, respectively. %The condition of the intermediate pattern being induced
The leaf nodes of the tree are called \emph{Minimum Matching Components} (\mmc), correspond to indivisible patterns directly solvable with specific physical operations
as will be introduced in~\refsec{physical-operators}. The decomposition tree naturally forms a logical plan for solving $\matching(\pattern)$, as demonstrated in \reffig{match-decomposition}. For any non-leaf node $\pattern'$, there exists a relationship $\matching(\pattern') = \matching(\pattern'_l) \gjoin \matching(\pattern'_r)$ according to \refeq{join-pattern}. The plan allows for the recursive computation of the entire pattern.

Following state-of-the-art graph optimizers~\cite{huge,GLogS}, to guarantee a \emph{worst-case optimal} execution plan~\cite{ngo2018worst}, all intermediate sub-patterns in the decomposition tree must be induced subgraphs of $\pattern$. Furthermore, \mmc is restricted to be a single-vertex pattern and a \emph{complete star}. A star-shaped pattern is denoted as $\pattern(u;V_s)$, where $u$ is the root vertex and $V_s$ is the set of leaf vertices\footnote{Edge directions between $u$ and $V_s$ are not important, and we assume they all point from $u$ to $V_s$.}. In the decomposition tree, given $\pattern' = \pattern'' \cup \pattern(u;V_s)$, $\pattern(u;V_s)$ is a complete star if and only if it is a right child and $V_s \subseteq V_{\pattern''}$, meaning that the leaf vertices of the complete star must all be common vertices for the decomposition. A single-edge pattern is a special case of a complete star. The complete star logically represents the physical operations of \expandintersect, which will be discussed in \refsec{physical-operators}. As shown in \reffig{match-decomposition}, a single-edge pattern, such as $\pattern_3$, is further decomposed into a single-vertex pattern and the pattern itself, allowing the optimizer to select from which vertex the edge can be expanded.
\revise{The intermediate sub-patterns pruned from the decomposition tree are also presented in \reffig{match-decomposition}}.
\revise{Some previous studies, such as EmptyHeaded \cite{Aberger2016Sigmod} and CLFTJ \cite{Kalinsky2017EDBT}, have also explored decomposition trees. However, our method significantly differs from theirs. Specifically, in these previous methods, the tree nodes represent sets of relations, and the edges in the decomposition trees connect nodes with common join keys. In contrast, the nodes in our decomposition trees represent sub-patterns (relations that can form a graph after \rgmapping) of queries. Each edge in our tree connects two nodes such that the child sub-pattern can be computed from the parent sub-pattern in some execution plan.
}
%Since the tree nodes in their decomposition trees are not necessarily graphs, and therefore cannot be used for graph-aware transformation}.
%In contrast, tree nodes in our decomposition tree are subpatterns of queries and each edge in the tree connects two nodes, such that the child node can be transformed into the parent node through one $\gjoin$ operator.

%The complete star is a logical representation of the physical operations of \expand~and \intersect, which are key to the implementation worst-case optimal join algorithm~\cite{mhedhbi2019optimizing} for graph pattern matching, as will be discussed in \refsec{xx}.

\comment{
Therefore, decomposing the matching operators recursively can finally result in a tree, whose leaf nodes are MMCs.
To ensure worst-case optimality, in the process of decomposition, the pattern graph of each decomposed matching operator should be an induced subgraph of $\mathcal{P}_0$.
The generated tree is called a decomposition tree and it is actually a logical plan of the matching operator.
\modify{Without loss of generality, the left-deep join order is employed on the tree.}

Then, it is crucial to identify which matching operators to treat as MMCs.
We adopt the definition of \emph{complete star} from \cite{huge}.
Specifically, suppose $\matching(GR, \mathcal{P})$ is decomposed into $\matching(GR, \mathcal{P}_1)$ $\widehat{\Join}$ $\matching(GR, \mathcal{P}_2)$, $\mathcal{P}_2$ is called a complete star iff $\mathcal{P}_2$ is a star $(v_r; \mathcal{H})$, $\mathcal{H} \subseteq E_{\mathcal{P}_1}$, and $|\mathcal{H}| \geq 2$, where $v_r$ is the root and $\mathcal{H}$ is the set of its leaf vertices.
Inspired by HUGE \cite{huge} and GLogS \cite{GLogS}, matching operators located on the right subtree of the join operator with complete stars as the pattern graphs are the MMCs in this paper.

Besides, a matching operator located on the left subtree of the join operator is an MMC iff its pattern graph is an edge (i.e., one edge and its adjacent two vertices).
}

\begin{figure}
    \centering
    \includegraphics[width=.8\linewidth]{./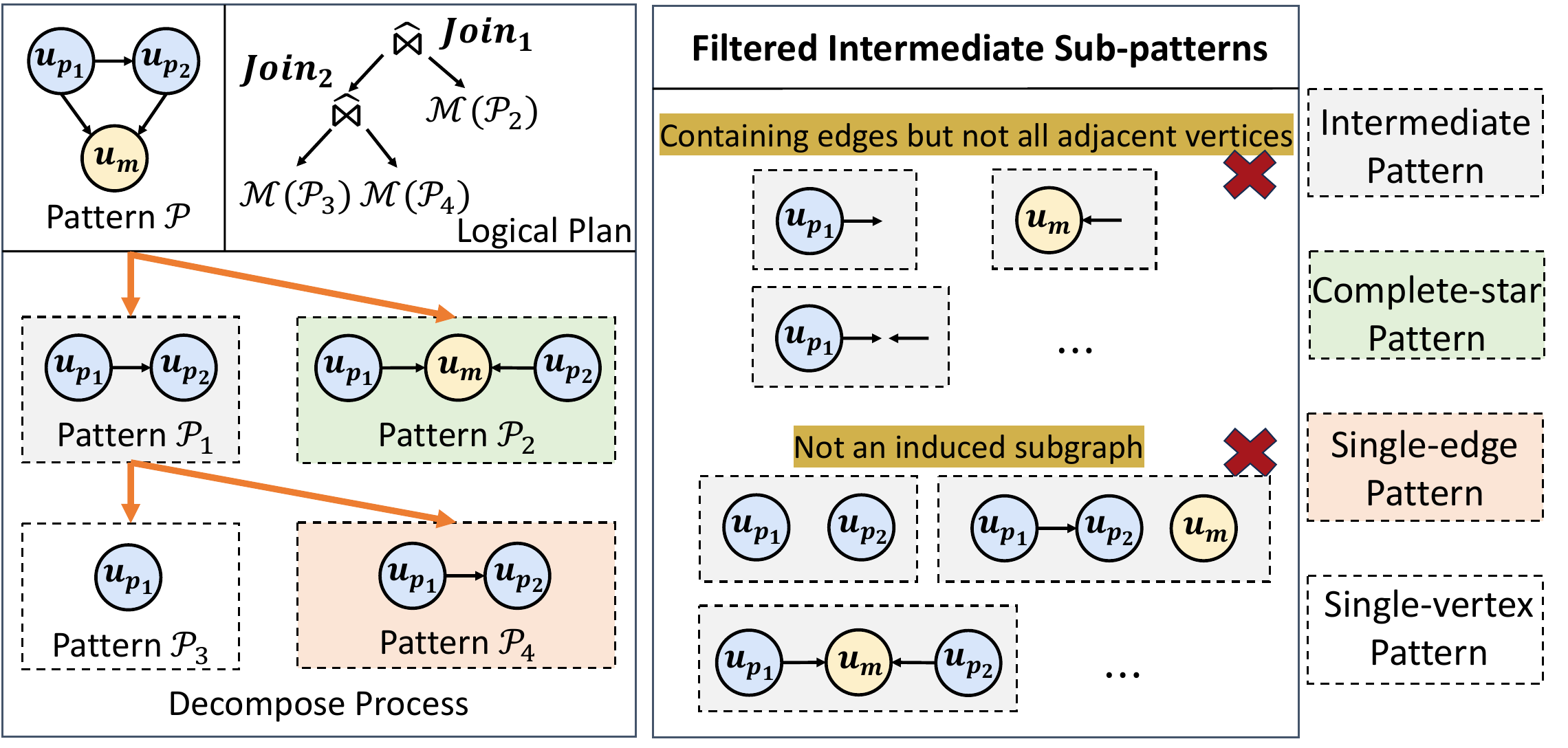}
    \caption{Example of decomposition trees and the corresponding logical plans. \revise{Note that sub-pattern $\pattern_2$ can be a leaf node, but it cannot be an intermediate node}.}
    \label{fig:match-decomposition}
\end{figure}

\comment{
\begin{example}
    \todo{the sample and figure must all be refined.}
    \modify{
    As shown in \reffig{match-decomposition}, $T_1$ and $T_2$ are two possible decomposition trees formed by recursively decomposing $\pattern_{0}$.
    Specifically, $\pattern_{1,1}$, $\pattern_{1,3}$, $\pattern_{1,4}$, $\pattern_{2,2}$, $\pattern_{2,3}$, and $\pattern_{2,4}$ are \mmcs. Among these \mmcs, $\pattern_{1,1}$, $\pattern_{1,3}$, $\pattern_{2,3}$, $\pattern_{2,4} are $single-edge patterns, while $\pattern_{1,4}$ and $\pattern_{2,2}$ are complete stars.
    Please note that $\pattern_{2,1}$ is not an \mmc, because it is not the right child of $\pattern_{0}$.
    }
\end{example}
}

%\enlargethispage{1em}

\begin{remark}
    \label{rem:graph-agnostic-vs-graph-aware}
    The graph-aware transformation is fundamentally different from its graph-agnostic counterpart. While the graph-agnostic approach consistently converts pattern matching operations into relational joins between vertex and edge relations,
    the graph-aware transformation does not, due to the constraints imposed by pattern decomposition. While the graph-agnostic approach is straightforward, it has the following drawbacks:

    \noindent\textbf{Graph-unaware Join Order}: It may lead the relational optimizer to reorder the join of vertex and edge relations, potentially missing chances to use graph indexes for efficiently computing adjacent edges and vertices, as discussed in \refsec{graph-index}.

    \noindent\textbf{Suboptimal Join Plans}: It generates plans that consistently reflect edge-based join plans that have been shown to be suboptimal in terms of worst-case performance~\cite{lai2015scalable}.

    \noindent\textbf{Increased Search Space}: Compared to the graph-aware transformation, it can lead to an exponentially larger search space when computing optimal plans, which will be discussed in the following.

\end{remark}

\subsubsection{The Search Space: Graph-agnostic vs Graph-aware}
\label{sec:compare-search-space}

\begin{figure}[t]
    \centering
    \begin{subfigure}[b]{\linewidth}
        \centering
        \begin{subfigure}[b]{.4\linewidth}
            \centering
            \includegraphics[width=.85\linewidth]{./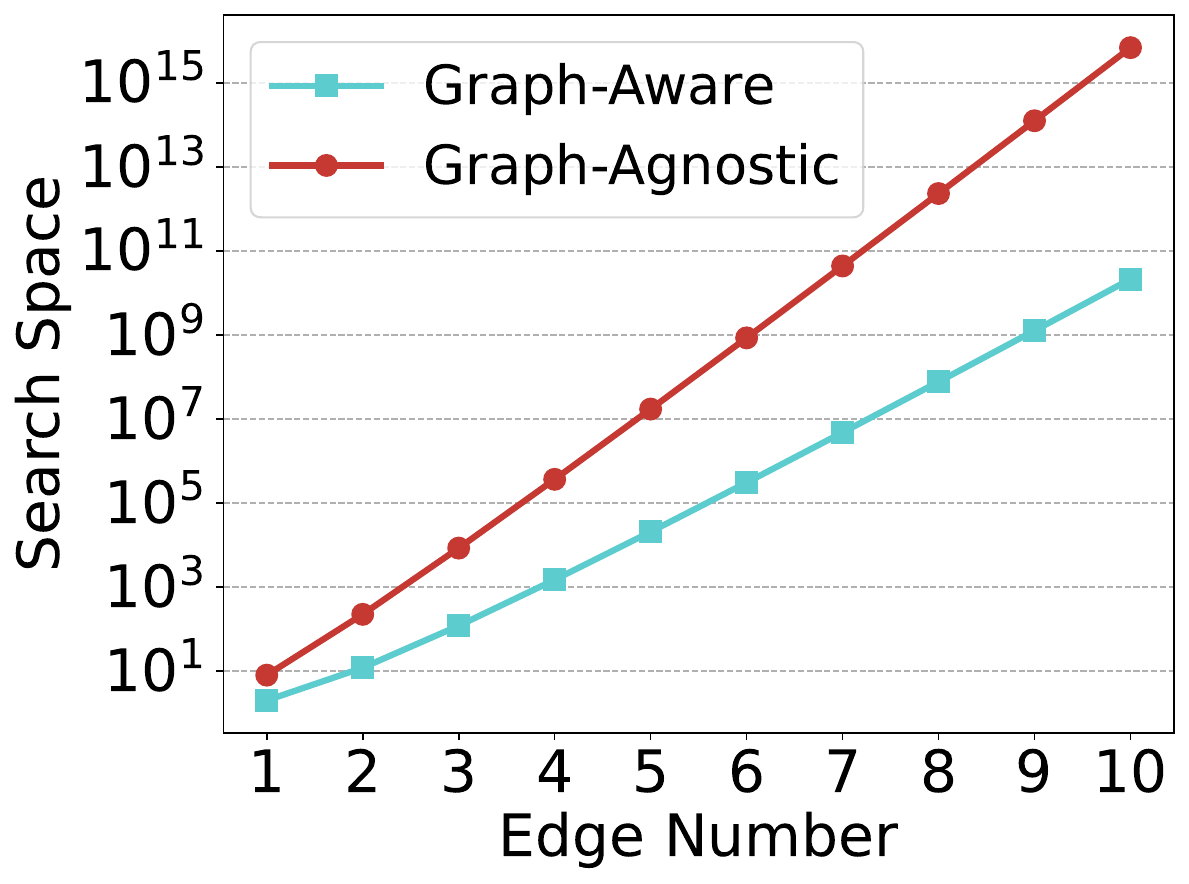}
        \end{subfigure}
        \begin{subfigure}[b]{0.4\linewidth}
            \centering
            \includegraphics[width=.85\linewidth]{./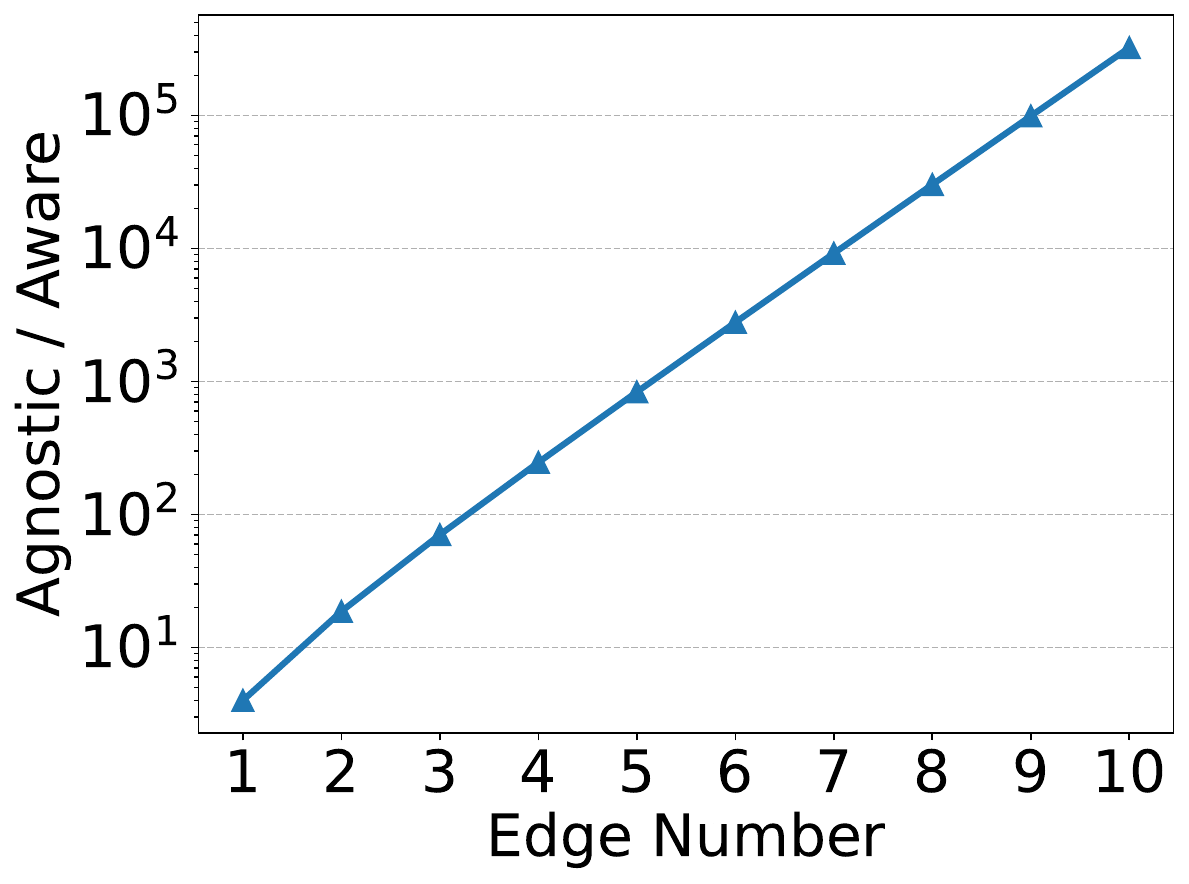}
        \end{subfigure}
        \vspace{-.5em}
        \caption{Search Space Comparison.}
        \label{fig:exp-search-space}
    \end{subfigure}
    \begin{subfigure}[b]{\linewidth}
        \centering
        \includegraphics[width=.8\linewidth]{./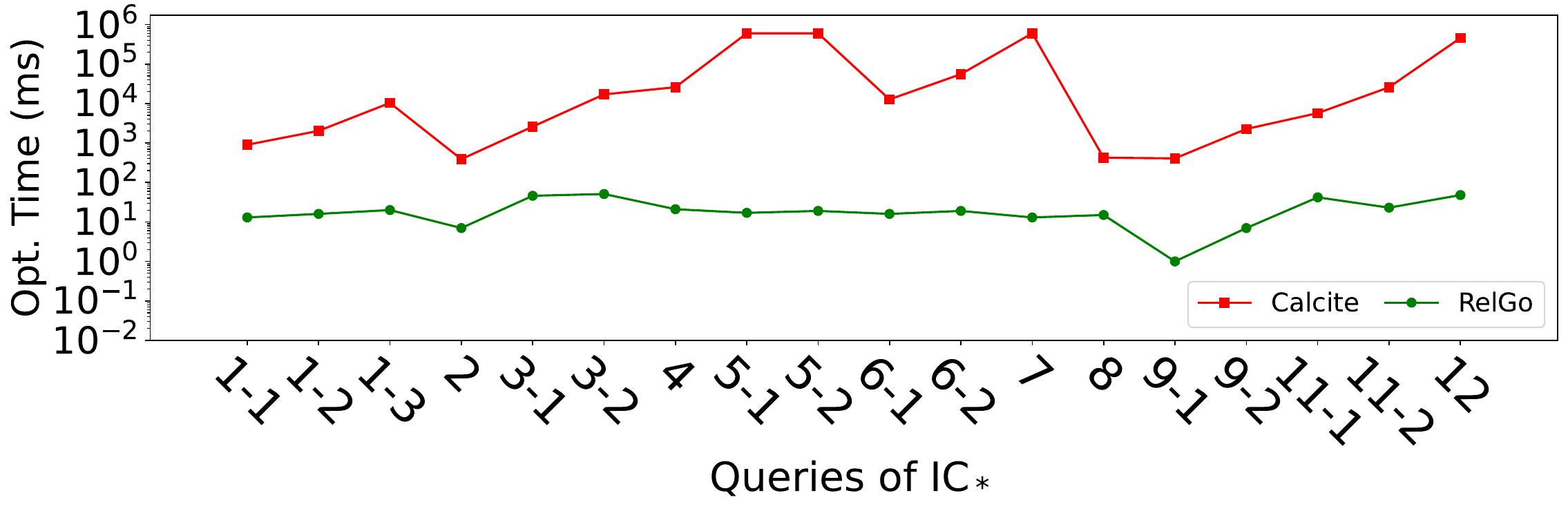}
        \vspace{-.9em}
        \caption{Optimization Time Cost on LDBC Queries.}
        \label{fig:exp-optimization-sf30}
    \end{subfigure}
    %\begin{subfigure}[b]{\linewidth}
    %    \centering
    %    \includegraphics[width=\linewidth]{./}
    %    \caption{Optimization Time Cost on JOB Queries.}
    %    \label{fig:exp-optimization-job}
    %\end{subfigure}
    \caption{\revise{Compare the search space and optimization time}.}
    \label{fig:exp-optimization}
\end{figure}

After applying graph-agnostic transformations to the matching operator, the optimizer searches for the optimal join order. In contrast, applying graph-aware transformations leads to a search for the optimal decomposition tree. The search space for the graph-agnostic approach is clearly larger than that of the graph-aware approach, given the constraints imposed on the decomposition tree in the latter approach. However, the precise difference in search space complexity between the two approaches has not been rigorously analyzed.
In this subsection, we analyze the gap between the two search spaces and conclude that the graph-aware approach can be exponentially more efficient in this regard.
%In this subsection, we derive a lower bound on the search space of the graph-agnostic approach and an upper bound on the search space of the graph-aware approach to demonstrate that the latter is exponentially more efficient in terms of search space.
%Here, we argue that the comparison of the search space is general. First, some optimizers, such as the one proposed in~\cite{Haffnerjoinorder}, involve heuristics that make it challenging to perform a rigorous analysis of their behavior. Second, the abundance of relational optimizers~\cite{Haffnerjoinorder,chenjoinorder, Haffnerjoinorder}, makes it difficult to select a single optimizer as the gold standard for comparison. Finally, the techniques employed by relational optimizers may also be applicable to their graph counterparts, potentially resulting in similar improvements in both cases.

\comment{
According to \reflem{spjm-to-spj}, the graph-agnostic approach must produce a sequence of joins among $n$ vertex relations and $m$ edge relations,
which is equivalent to searching for the optimal join order among $n + m$ relations. The following lemma
establishes a lower bound on the search space complexity for this problem:

\begin{lemma}
\label{lem:complexity-of-volcano}
Given a join among $n + m$ relations, the search space for determining the optimal join order is at least $\Omega(4^{m+n-1})$.
\end{lemma}

\begin{proof}
    We first estimate the number of possible join orders, where each join order corresponds to a logical plan of join operators. The search space refers to the number of physical plans corresponding to these logical plans. To avoid cross products, for each explored logical plan, whenever two relations are joined together, there should be join conditions between them.

    Given $n + m$ relations, we construct a graph $\searchgraph = (V, E)$, where each relation corresponds to a vertex in $V$, and if there is a join condition between two relations, there is an edge between their corresponding vertices in $E$. Each possible logical plan forms a spanning tree in $\searchgraph$, and different logical plans may form the same tree. Therefore, the number of possible logical plans can be computed by obtaining all the possible logical plans corresponding to the spanning trees in $\searchgraph$.

    We consider the case where there is only one spanning tree $ST$ in $\searchgraph$ with $k$ edges. When there are multiple spanning trees in $\searchgraph$, we compute the number of logical plans corresponding to one of these trees, which provides a lower bound on the total number of possible logical plans.

    First, we examine the case where $ST$ is a path. Let $c_p(k)$ denote the number of logical plans corresponding to a spanning tree that is a path of length $k$. We have:
    \begin{equation*}
        c_p(k) = 2\sum_{i=0}^{i=k-1}c_p(i)c_p(k-1-i),
    \end{equation*}
    where $c_p(0) = 1$. Using the generating function, we obtain:
    \begin{equation*}
        c_p(k) = \frac{2^k}{k+1}\binom{2k}{k} \geq \frac{2^k}{k+1}2^{k-1}(k+1) = \frac{4^k}{2}
    \end{equation*}

    Next, we consider a more general scenario where $ST$ is not necessarily a path. Let $c(ST)$ denote the number of logical plans corresponding to $ST$. Suppose there are $k$ edges in $ST$. We denote the longest path in the tree by $p_1$, with length $P_1 = |p_1|$. By removing edges in $p_1$ from $ST$, we obtain a new subgraph $ST_1$. We then find the longest path $p_2$ in $ST_1$ that intersects with the already removed path $p_1$, with length $P_2 = |p_2|$. We remove edges in $p_2$ from $ST_1$ to obtain subgraph $ST_2$. Since $p_1$ and $p_2$ are both paths, the number of logical plans corresponding to them are $c_p(P_1)$ and $c_p(P_2)$, respectively. If $p_1$ and $p_2$ intersect at vertex $v_i$, the operator that scans $v_i$ appears in each logical plan corresponding to $p_1$. By replacing these scanning operators with the plans corresponding to $p_2$, we obtain $c(p_1 \cup p_2)$ plans, satisfying $c(p_1 \cup p_2) \geq c_p(p_1)c_p(p_2)$ because the relations corresponding to vertices in $p_1$ and those in $p_2$ can be joined in an interleaved fashion, which is overlooked by multiplying $c_p(p_1)$ and $c_p(p_2)$.

    As $ST$ is a tree, by repeatedly finding and removing paths as described above, all edges in $ST$ are eventually removed. Let $s$ be the number of paths removed. We have:
    \begin{equation*}
    \begin{split}
        c(ST) & = c(p_1 \cup \cdots \cup p_s) \geq c_p(P_1) \cdots c_p(P_s) \\
        & \geq \frac{4^{P_1 + \cdots + P_s}}{2^s} = \frac{4^{k}}{2^s} \geq 2^{k}.
    \end{split}
    \end{equation*}

    Since there are $m + n$ vertices in $\mathbb{G}$, the number of edges in the spanning tree is $k = m + n - 1$. Thus, the number of physical plans is at least $2^{k}t^{m+n-1} \geq 2^{m+n-1}t^{m+n-1} \geq 4^{m+n-1}$, which is also the search space of the problem. This concludes the proof.
\end{proof}

In contrast, when the graph-aware transformation is applied, the search space is equivalent to the number of possible decomposition trees. Despite the numerous works proposed to optimize graph pattern matching~\cite{huge,GLogS,mhedhbi2019optimizing}, to the best of our knowledge, the search space of this optimization problem has not been thoroughly analyzed. The following lemma provides an upper bound on the search space complexity for the graph-aware approach:

\begin{lemma}
\label{lem:complexity-of-graph-aware}
The search space for determining the optimal decomposition tree for pattern $\pattern$ is at most $O(4^{n-1})$.
\end{lemma}

\begin{proof}
    To prove the lemma, we construct a graph $\searchgraph_\pattern(V, E)$ to facilitate the analysis, where the vertex set contains all induced sub-patterns of $\pattern$ as well as an empty graph $\pattern_{\emptyset}$.
    We denote $V_i \subseteq V$ as the set of sub-patterns that contain exactly $i \leq n$ vertices. It is evident that $V_n = \{\pattern\}$.
    An edge exists from a larger (a pattern is considered larger if it contains more vertices) sub-pattern $\pattern_1$ to a smaller sub-pattern $\pattern_2$ if it is possible for $\pattern_2$ to be the child of $\pattern_1$ in any decomposition tree. Each edge has a weight representing the cost of extending graphs matching $\pattern_2$ to those matching $\pattern_1$.
    Moreover, there are edges weighted 0 from sub-patterns in $V_1$ to $\pattern_{\emptyset}$.
    Given the graph $\searchgraph_\pattern(V, E)$, %the problem of determining the optimal decomposition tree for the graph-aware method can be transformed to finding the shortest path from $\pattern$ to $\pattern_{\emptyset}$ in $\searchgraph_\pattern(V, E)$.
    the search space equals the number of paths from $\pattern$ to $\pattern_{\emptyset}$.

    Determining the exact number of edges in $\searchgraph_\pattern$ for an arbitrary pattern is non-trivial. However, since we are studying the upper bound, we can consider the worst-case scenario. Given a sub-pattern $\pattern'$ consisting of $1 < i \leq n$ vertices, the number of induced subgraphs of $\pattern'$ with $j$ vertices is at most $\binom{i}{j}$, which constrains the maximum number of edges to $\pattern_j \in V_j$ that $\pattern'$ can connect.

    Denote the maximum number of paths from a sub-pattern with $s$ vertices (i.e., $\pattern_s \in V_s$) to $\pattern_{\emptyset}$ by $\decompnum_\pattern(s)$.
    Please note that $\decompnum_\pattern(1) = 1$ and $\decompnum_\pattern(n)$ is an upper bound of the search space for determining the optimal decomposition tree for $\pattern$.
    Then, we have the following inequality:
    \begin{equation*}
        \decompnum_{\pattern}(n) \leq \binom{n}{1}\decompnum_{\pattern}(n-1)\decompnum_{\pattern}(1) + \cdots + \binom{n}{n-1}\decompnum_{\pattern}(1)\decompnum_{\pattern}(n-1).
    \end{equation*}

    It is clear that $\decompnum_\pattern(i) \geq \decompnum_\pattern(j)$ if $i > j$.
    Therefore, we have
    \begin{equation*}
        \decompnum_{\pattern}(n) \leq \decompnumsq{2}_\pattern(n-1)(\binom{n}{1} + \cdots + \binom{n}{n-1}) < \decompnumsq{2}_\pattern(n-1)2^{n}
    \end{equation*}
    By recursively replacing $\decompnum_\pattern(n-1), \decompnum_\pattern(n-2), \cdots, \decompnum_\pattern(2)$, we can obtain the following inequalities:
    \begin{equation*}
        \begin{split}
            \decompnum_{\pattern}(n) & < (\decompnumsq{2}_\pattern(n-2)2^{n-1})^22^n = \decompnumsq{4}_\pattern(n-2)2^{n+2n-2} \\
            & < (\decompnumsq{2}_\pattern(n-3)2^{n-2})^42^{n+2n-2} = \decompnumsq{8}_\pattern(n-3)2^{n+2n-2 + 4n-8} \\
            & < \cdots \\
            & < \decompnumsq{2^{n-1}}_\pattern(1)2^{(1+\cdots+2^{n-2})n - (0+ 2 + \cdots + (n-2)2^{n-2})} < 4^{n-1}
        \end{split}
    \end{equation*}

    Therefore, the search space is at most $O(4^{n-1})$.
    We thus conclude the proof.
\end{proof}

    \begin{proof}
        \todo{refine the proof.}
        Given graph $\searchgraph_\pattern(V, E)$, the problem of determining the optimal decomposition tree for the graph-aware method can be transformed to finding the shortest path from $\pattern$ to $\pattern_{\empty}$ in $\searchgraph_\pattern(V, E)$.
        According to Dijkstra's algorithm, the time complexity of the shortest path query problem is $O(|E|)$.

        As the number of edges from patterns in $V_i$ is at most
        \begin{equation*}
            \sum\limits_{\pattern_i}(2^i - 2) = \binom{n}{i}(2^i - 2),
        \end{equation*}
        the upper bound of the number of edges in $\searchgraph_\pattern$ is
        \begin{equation*}
            \begin{split}
                \sum\limits_{i=2}^{i=n}\binom{n}{i}(2^i - 2) + n
                < \sum\limits_{i=0}^{i=n}\binom{n}{i}2^i = (2+1)^n
                 = 3^n.
            \end{split}
        \end{equation*}

        Therefore, the time complexity is at most $O(3^n)$.
        We thus conclude the proof.
    \end{proof}
}

% We ultimately obtain the following theorem.

\begin{theorem}
    \label{thm:compare-search-space}
    The search space in graph-aware transformation can be exponentially smaller than that of the graph-agnostic transformation,
    for optimizing the matching operator in an \spjm query.
\end{theorem}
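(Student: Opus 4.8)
The plan is to prove the separation by bounding the two search spaces independently and then comparing their ratio: the graph-agnostic search space is the number of candidate join orders the optimizer must explore, whereas the graph-aware search space is the number of candidate decomposition trees. I would therefore establish a lower bound on the former and an upper bound on the latter, and show the quotient is exponential.

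First I would lower-bound the graph-agnostic search space. By \reflem{spjm-to-spj}, matching a pattern $\pattern$ with $n$ vertices and $m$ edges is losslessly transformed into a join over $n + m$ relations, so the optimizer searches the join orders of these $n+m$ relations. I would model the admissible join orders, i.e., those avoiding Cartesian products, as rooted binary trees whose leaves are the relations and whose internal joins respect the connectivity of the query graph $\searchgraph$ induced by the join predicates. Restricting attention to a single spanning tree of $\searchgraph$ already yields a valid lower bound. For a path-shaped spanning tree of length $k$, the plan count $c_p(k)$ obeys the Catalan-type recurrence $c_p(k) = 2\sum_{i=0}^{k-1} c_p(i)\,c_p(k-1-i)$, whose generating function gives $c_p(k) = \frac{2^k}{k+1}\binom{2k}{k} \ge \frac{4^k}{2}$. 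For an arbitrary spanning tree I would greedily peel off longest paths and observe that paths sharing a vertex can be interleaved, so their counts multiply; with $k = n+m-1$ tree edges this yields $\Omega(4^{n+m-1})$.

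Next I would upper-bound the graph-aware search space by the number of decomposition trees of $\pattern$. I would build a DAG $\searchgraph_\pattern$ whose nodes are the induced sub-patterns of $\pattern$ together with an empty pattern, placing an edge from a larger sub-pattern to a smaller one exactly when the latter may legally be a child in some decomposition tree; the search space is then the number of root-to-empty paths. Writing $\decompnum_\pattern(s)$ for the maximum number of such paths starting from an $s$-vertex sub-pattern, the fact that an $s$-vertex pattern has at most $\binom{s}{j}$ induced $j$-vertex sub-patterns gives the recurrence $\decompnum_\pattern(n) \le \sum_{j=1}^{n-1}\binom{n}{j}\decompnum_\pattern(n-j)\decompnum_\pattern(j) < \decompnumsq{2}_\pattern(n-1)\,2^n$ by monotonicity of $\decompnum_\pattern$ and $\sum_{j=1}^{n-1}\binom{n}{j} < 2^n$; unrolling this recurrence bounds it by $\decompnum_\pattern(n) < 4^{n-1}$, i.e., $O(4^{n-1})$.

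Finally I would combine the two bounds: the ratio of the graph-agnostic to the graph-aware search space is $\Omega(4^{n+m-1})/O(4^{n-1}) = \Omega(4^{m})$. Since $\pattern$ is connected we have $m \ge n-1$, so the gap is already $\Omega(4^{n-1})$, and it grows to $4^{\Theta(n^2)}$ for dense patterns such as cliques, which proves the exponential separation. The main obstacle is the lower bound: excluding cross products forces the counting through a spanning tree of the query graph, and lifting the clean Catalan count for path spanning trees to arbitrary tree topologies requires the path-peeling and interleaving argument, which must be handled carefully so that the per-path counts genuinely multiply and do not overcount shared vertices.
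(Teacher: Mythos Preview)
Your route differs from the paper's. The paper does not bound the two search spaces separately and then divide; it gives a direct injection. After replacing each \mmc in a decomposition tree by the corresponding vertex relation, one obtains a plan $pl_g$ over the $n$ vertex relations only. The paper then observes that for every edge $e=(u_s,u_t)$ of $\pattern$ one can independently choose whether to attach the edge relation $R_{\lab(e)}$ next to $R_{\lab(u_s)}$ or next to $R_{\lab(u_t)}$ in $pl_g$; each of the $2^{m}$ resulting plans is a legitimate graph-agnostic join order, and distinct decomposition trees yield disjoint families. Hence $|\mathbb{PL}_r|\ge 2^{m}\,|\mathbb{PL}_g|$, which is the claimed exponential gap, with no need to bound either space in absolute terms.

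Your separate-bounds argument has a genuine gap on the upper-bound side. The inequality $\decompnum_\pattern(n)<\decompnumsq{2}_\pattern(n-1)\cdot 2^{n}$ does \emph{not} unroll to $4^{\,n-1}$. Iterating it $n-1$ times gives
\[
\decompnum_\pattern(n)\;<\;2^{\sum_{i=0}^{n-2}2^{i}(n-i)}\;=\;2^{\,3\cdot 2^{\,n-1}-n-2},
\]
which is doubly exponential in $n$; already at $n=4$ the recurrence yields $\decompnum_\pattern(4)<128^{2}\cdot 2^{4}=262144$, whereas $4^{\,n-1}=64$. This doubly-exponential bound swamps any singly-exponential lower bound on the graph-agnostic side, so the ratio $\Omega(4^{\,n+m-1})/O(4^{\,n-1})$ you appeal to is not established. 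A smaller issue: your path-peeling for a general spanning tree only gives $c(ST)\ge 4^{k}/2^{s}\ge 2^{k}$, i.e.\ $\Omega(2^{\,n+m-1})$, not $\Omega(4^{\,n+m-1})$, unless the spanning tree is itself a path. The paper's injection sidesteps both difficulties because it never needs an absolute bound on the number of decomposition trees.
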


\vspace*{-2mm}
\subsubsection{Comparison of Search Space and Optimization Time}
\revise{%We have demonstrated theoretically that the search space of graph-aware transformation can be exponentially smaller than that of graph-agnostic.
To further illustrate \refthm{compare-search-space}, we used a special case of a path graph to compare the search spaces directly. We conducted a micro-benchmark experiment using a path graph with \( m \) edges, programming an enumerator to explore the search space of both graph-agnostic and graph-aware approaches while varying \( m \). The results, shown in \reffig{exp-search-space}, confirm the significant difference in search space size between the two approaches}.

Additionally, \revise{we compared the optimizer's query optimization time. In our comparison, Apache Calcite, a generic relational optimization framework, served as the optimizer for the graph-agnostic method. In contrast, our \name, implemented based on Calcite, acts as the optimizer for the graph-aware method. Both \name and Calcite are \crc{implemented} in Java, utilizing the VolcanoPlanner of Calcite with default rules.
Notably, we did not consider aggressive pruning rules as used in commercialized database like DuckDB~\cite{duckdb} for either Calcite or \name, providing a fair comparison and a clear demonstration of the reduced search space.
The optimization time was evaluated using the queries in our experiment (details in \refsec{evaluation}).
%If the optimization for a query does not finish within 10 minutes, the process is stopped early, and the time cost of such optimization is recorded as 10 minutes.
Optimizations that do not complete within 10 minutes are recorded as taking 10 minutes.
Since Calcite often exceeds the 10-minute limit on JOB queries\cite{full-version}, we only report the results on LDBC queries.
The results in \reffig{exp-optimization-sf30} indicate that \name can complete optimizing almost all queries within 10-100 milliseconds.
Besides, the results demonstrate \name's significant superiority over Calcite in query optimization speed.
% Besides, the results demonstrate the superiority of \name, which outperforms Calcite signifcantly in query optimization speed.
For instance, \crc{on} $\text{IC}_{5-1}$, the optimization time using \name is more than \( 10^4 \) times faster compared to Calcite.}

\vspace*{-2mm}
\subsection{Physical Implementation}
\label{sec:physical-operators}

In the graph view, given a vertex $v$, it is efficient to obtain its adjacent edges and vertices (i.e., neighbors). However, in the relational view, such adjacency relationships between vertices and edges are not directly stored in relations but must be computed via the \EVjoin operations (\refeq{ev-join}). While there are multiple ways to construct the graph view in the literature~\cite{gart,GRFusion}, we refer to the method introduced in GRainDB~\cite{graindb}, which is free from materializing the graph. This approach avoids the extra storage cost associated with graph materialization and ensures compatibility with the relational context,
Specifically, GRainDB introduces an indexing technique called pre-defined join to improve the performance of join operations. As the pre-defined join essentially materializes the adjacency relationships, we treat it as a \emph{graph index} in this work.

\begin{figure}[t]
    \centering
    \includegraphics[width=.9\linewidth]{./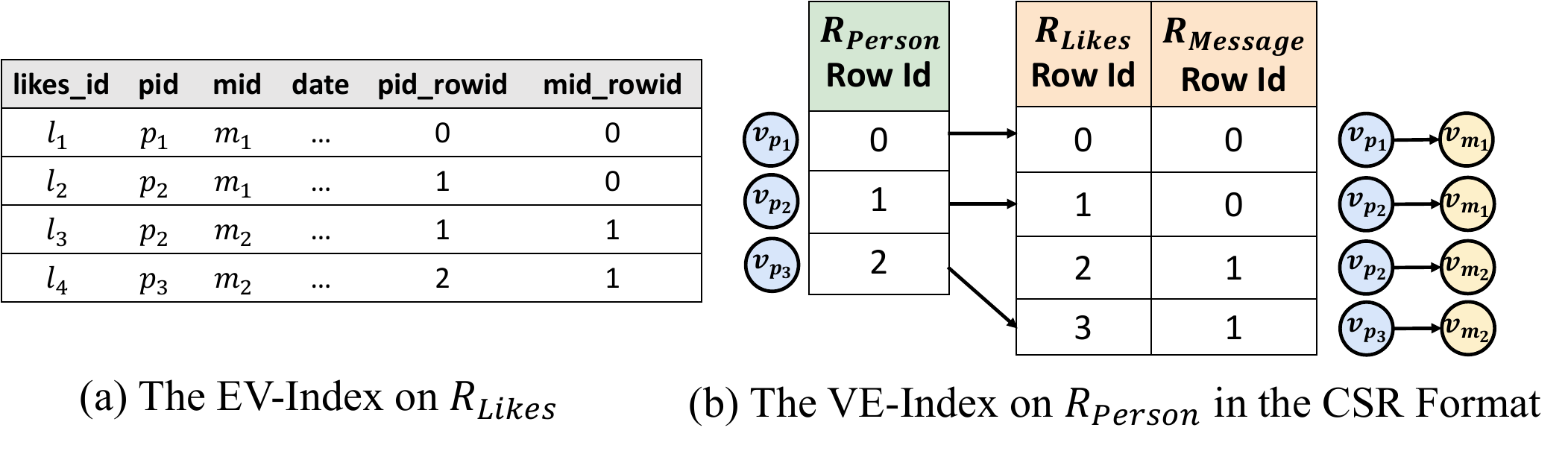}
    \caption{The graph index constructed among relations $\relation{\text{Person}}$, $\relation{\text{Likes}}$
    and $\relation{\text{Message}}$ in \reffig{intro-rgmapping-example}(a).}
    \label{fig:graph-index}
\end{figure}

\subsubsection{Graph Index}
\label{sec:graph-index}

As shown in \reffig{graph-index}, given the three relations $\relation{\text{Person}}$, $\relation{\text{Likes}}$, and $\relation{\text{Message}}$, the complete information of ``Person likes messages'' can be obtained by conducting the join:
\[ \relation{\text{Person}} \Join_\text{person\_id = pid} \relation{\text{Likes}} \Join_\text{mid = message\_id} \relation{\text{Message}}. \]

%\enlargethispage{1em}

GRainDB introduces two kinds of indexes to the relational tables to efficiently process the join: the EV-index and the VE-index. The EV-index, shown in \reffig{graph-index}(a), is constructed by appending extra columns to the table $\relation{\text{Likes}}$. The column ``\text{pid\_rowid}'' stores the row ID of the corresponding tuple in the table $\relation{\text{Person}}$, denoted as $\rowid(\tau_{p})$, where $\tau_{p} \in \relation{\text{Person}}$. Similarly, the column ``\text{mid\_rowid}'' stores the row ID of the corresponding tuple in the table $\relation{\text{Message}}$, denoted as $\rowid(\tau_{m})$, where $\tau_{m} \in \relation{\text{Message}}$. These row ids help quickly route a tuple $\tau_{l} \in \relation{\text{Likes}}$ to the joinable tuples $\tau_{p}$ and $\tau_{m}$ without additional operations like hash-table lookup or sorting.

The VE-index in \reffig{graph-index}(b) is created on $\relation{\text{Person}}$ for efficiently computing its ``liked messages''. For each tuple $\tau_{p} \in \relation{\text{Person}}$, the VE-index records the row ids of $\tau_{l} \in \relation{\text{Likes}}$ and the corresponding $\tau_{m} \in \relation{\text{Message}}$ that are joinable with $\tau_{p}$. In the graph view, treating ``Person-[Likes]->Messages'' as an edge of a property graph, the VE-index maintains the adjacent edges and vertices of each person.
%GRainDB organizes the VE-index using the Compressed-Sparse-Row (CSR) structure that is widely used for maintain adjacency lists in graph store.

We can adopt GRainDB's approach to construct the graph indexes during the \rgmapping process. Given an edge relation $R_e$ and its associated vertex relations $R_{v_s}$ and $R_{v_t}$, the EV-index can be constructed on $R_e$ for each tuple $\tau_e \in R_e$ by including $\rowid(\lambda_e^s(\tau_e))$ and $\rowid(\lambda_e^t(\tau_e))$, which are the row ids of the corresponding tuples in $R_{v_s}$ and $R_{v_t}$, respectively. Meanwhile, the VE-index can be constructed on $R_{v_s}$ for each tuple $\tau_{v_s} \in R_{v_s}$ by including the row ids of all tuples $\tau_e \in R_e$ such that $\lambda_e^s(\tau_e) = \tau_{v_s}$, along with the row ids of the corresponding tuples $\tau_{v_t} \in R_{v_t}$ such that $\lambda_e^t(\tau_e) = \tau_{v_t}$.
The construction of VE-index on $R_{v_t}$ is analogous.

\comment{
\begin{remark}
  Several alternative approaches exist for handling the \rgmapping process. One such approach, proposed in~\cite{gart}, involves directly materializing the property graph using an additional graph store, rather than simply building graph indexes on the relational data. While this method may enable more efficient graph processing, it comes with the trade-off of requiring extra storage space and incurring higher maintenance costs. Moreover, \spjm queries often involve a combination of graph and relational operations, which may not be fully supported by the graph store alone.
\end{remark}
}

%With the graph indexes constructed, we can efficiently compute the \EVjoin operation in \refeq{ev-join}.
%When using the graph-agnostic method to transform the \spjm query into \spj, as shown in \refeq{graph-agnostic}, and optimizing the entire \spj query with a relational optimizer, we can replace all the \emph{remaining} \EVjoin operations with GRainDB's pre-defined joins. This allows us to take advantage of the graph indexes provided by GRainDB. We consider this approach as the baseline ``GRainDB'' solution in our experiments.
%However, as mentioned earlier, any pair of edge and vertex relations in an \EVjoin operation can be separated during the optimization process, preventing the utilization of graph indexes.
%\modify{In \refsec{experiment-opt}, we demonstrated through experiments that using this index can be about twice as fast compared to not using the index.}
%Even if certain rules are installed to prevent such separation, the resulting plan may still be suboptimal. This is because the plan for solving the matching operator reflects a naive method of edge-based join plan~\cite{lai2019distributed}, which is not worst-case optimal. \todo{add a few experiment findings}
%\modify{Specifically, experimental results in \reffig{exp-expand-intersect} indicate that the performance of a worst-case-optimal plan can be improved by about 1.3$\times$.}

\vspace*{-3mm}
\subsubsection{The Graph-Aware Execution Plan}
\label{sec:join-matching-operator}
We delve into the physical implementation of the execution plan provided by the graph-aware method for solving $\matching(\pattern)$. The entry point of the plan is always matching a single-vertex pattern $\pattern_u$, which is one of the leaf nodes in the decomposition tree.

%\enlargethispage{1em}

The implementation of $\matching(\pattern_u)$ is straightforward: scanning the corresponding vertex relation $\relation{\lab(u)}$ and encoding each tuple as a graph vertex object that contains its ID, label (mandatory) and necessary attributes. The row ID of the tuple in the relation can be directly used as the ID. To ensure globally uniqueness, the name of the relation can be incorporated as a prefix of the ID. Advanced encoding techniques are necessary for production use, but they are beyond the scope of this paper.

The plan is then constructed in a bottom-up manner. As shown in \reffig{match-decomposition}, there are three fundamental cases to consider when implementing the plan.

\stitle{Case I: Solving $\matching(\pattern') = \matching(\pattern'_l) \gjoin_{V_o, E_o} \matching(\pattern'_r)$}, where $\pattern_l'$ and $\pattern'_r$ are both intermediate patterns in the decomposition tree. The implementation of such a join is similar to a conventional relational join. The join is constrained to a natural join, where the join condition is simply the equality of the common vertices $V_o$ and edges $E_o$ between $\pattern_l'$ and $\pattern_r'$. During the implementation of the join, the identifiers of the vertices and edges can serve as the keys for comparison. Note that the input and output of the join are both graph relations, which will not
be projected into relational tuples until the last stage that obtains the results $\matching(\pattern)$.

\stitle{Case II: Solving $\matching(\pattern') = \matching(\pattern'_l) \gjoin_{u_s} \matching(\pattern_e)$}, where $\pattern_e$ is a single-edge pattern, and $u_s$ is the source vertex in $\pattern'_l$ from which the edge $e = (u_s, u_t)$ is expanded. Note that it's not possible for both $u_s$ and $u_t$ to be in $\pattern'_l$, as it would violate the fact that $\pattern'_l$ is either a single vertex or an induced sub-pattern.

\modify{When there is no graph index, $\matching(\pattern_e)$ is computed via $R_{\lab(u_s)} \evjoin R_{\lab(e)} \evjoin R_{\lab(u_t)}$}.
%where the first join is to obtain the edge tuples that connect $u_s$, which are encoded into edge objects,
%and the second join is to obtain the associated vertex tuples of $u_t$, which are encoded into vertex objects.
This case is then reduced to Case I. %, where a join operation is performed between $\matching(\pattern'_l)$ and the computed $\matching(\pattern_e)$.

When graph indexes exist, the implementation is handled by the physical operators of \expandedge~ and \getvertex. For each tuple $\tau \in \matching(\pattern'_l)$, $\tau.u_s$ must record a graph vertex $v_s$ that matches $u_s$ in the pattern $\pattern'_l$. The \expandedge~ operator looks up the VE-index of $v_s$, which allows it to efficiently computes $v_s$'s adjacent edges (more precisely, it's the corresponding edge tuples). Furthermore, the \getvertex~ operator is used to obtain the matched vertex $v_t$ that is connected to $v_s$ via the previous matched edges, which can be achieved by looking up the EV-index of the matched edges.
By combining the results of \expandedge~ and \getvertex, the tuple of $(\tau, \adj^E(v_s), \adj(v_s))$ is rendered. For example, in \reffig{graph-index}(b), if we apply \expandedge~ and \getvertex~ to a tuple $\tau$ from $v_{p_2}$, the result $(\tau, [e_{l_2}, e_{l_3}], [v_{m_1}, v_{m_2}])$ is returned.
Furthermore, to obtain $\matching(\pattern')$, we flatten the adjacent edges and vertices and pair them up. In the case of $(\tau, [e_{l_2}, e_{l_3}], [v_{m_1}, v_{m_2}])$, two tuples $(\tau, e_{l_2}, v_{m_1})$ and $(\tau, e_{l_3}, v_{m_2})$ are generated.

In practice, a vertex may be adjacent to multiple types of edges. For example, in \reffig{intro-rgmapping-example}, a \kk{Person} vertex can be connected to both \kk{Likes} and \kk{Knows} edges. To handle such cases, we can record edge's ID instead of just the row ID of the tuple. Given that the edge's ID is a combination of its label and the tuple's row ID, the adjacent edges of a specific label can be easily obtained from the VE-Index.

\stitle{Case III: Solving $\matching(\pattern') = \matching(\pattern'_l) \gjoin_{V_s, E_s} \matching(\pattern(u;V_s))$}, where pattern $\pattern(u;V_s)$ is a complete $k$-star with $V_s = \{u_1, \ldots, u_k\}$. %In this case, $\pattern'_l$ is a sub-pattern, and $\pattern(u;V_s)$ is a complete $k$-star pattern with a root vertex $u$ and leaf vertices $V_s = \{u_1, \ldots, u_k\}$.

When there is no graph index, solving Case III involves continuously joining $|V_s|$ single-edge patterns.
When graph indexes are available, the \expandintersect~ operator can be used to efficiently compute the join.
\revise{Unlike HUGE~\cite{huge}, which has a graph storage that naturally supports \expandintersect, we have implemented this operator directly on a relational database}.
Given a tuple $\tau \in \matching(\pattern'_l)$, let $\{v_1, \ldots, v_k\}$ be the vertices in $\tau$ that match the leaf vertices $\{u_1, \ldots, u_k\}$ in the complete star $\pattern(u;V_s)$.
Vertices matching the root vertex $u$ of the star must be common neighbors of all the leaf vertices.
%The vertices that can match the root vertex $u$ of the star must be the common neighbors of all the leaf vertices.

Consequently, for the tuple $\tau$, the physical \expandintersect~ operator performs the following steps:

\comment{When graph indexes are available, the \expandintersect~ operator, introduced in the literature~\cite{huge,GLogS,mhedhbi2019optimizing}, can be used to efficiently compute the join. Given a tuple $\tau \in \matching(\pattern'_l)$, let $\{v_1, \ldots, v_k\}$ be the vertices in $\tau$ that match the leaf vertices $\{u_1, \ldots, u_k\}$ in the complete star $\pattern(u;V_s)$. The vertices that can match the root vertex $u$ of the star must be the common neighbors of all the leaf vertices.}

% Consequently, for the tuple $\tau$, the physical \expandintersect~ operator performs the following steps:

\begin{enumerate}
\item For each leaf vertex $u_i \in V_s$ ($1 \leq i \leq k$), apply the \expandedge~ \\ and \getvertex~ operators to obtain the adjacent edges and neighbors of the corresponding vertices $v_i$ respectively.
\item Compute the intersections of all adjacent edges and neighbors returned by the \expandedge~ and \getvertex~ operators.
\item Return a new tuple as follows; for the sake of simplicity, the details of the edges are omitted: $(\tau, \bigcap\limits_{1 \leq i \leq k}\adj(v_i))$.

%\[
%    \left(\tau,\bigcup\limits_{v_c \in \bigcap\limits_{1 \leq i \leq k} \adj(v_i)} (v_c, \bigcup\limits_{1 \leq i \leq k}\adj^E(v_i, v_c))\right)
%\]

\end{enumerate}

Note that the above step (1) and (2) can be computed in a pipeline manner, following a certain order of among the leaf vertices.
Similar to Case II, we flatten the common edges and vertices and pair them up to obtain the final result.

\begin{example}
    \label{ex:physical-implementation}
    Given $\pattern$ in \reffig{match-decomposition}, a decomposition tree and its corresponding logical plan are presented. We illustrate the physical implementation of $\matching(\pattern_1) \gjoin \matching(\pattern_2)$ using \expandintersect when a graph index is available.
    Consider the tuple $(v_{p_1}, e_{k_1}, v_{p_2})$ from $\matching(\pattern_1)$ as an example. First, the \expandedge~ and \getvertex~ operators are applied to obtain the adjacent edges and neighbors of $v_{p_1}$ and $v_{p_2}$, resulting in
    \begin{equation*}
        \begin{split}
    &(v_{p_1}, e_{k_1}, v_{p_2}, [e_{l_1}], [v_{m_1}]) \text{ and }
    (v_{p_1}, e_{k_1}, v_{p_2}, [e_{l_2}, e_{l_3}], [v_{m_1}, v_{m_2}]).
        \end{split}
    \end{equation*}
    Next, the intersection process is conducted. Since $\adj(v_{p_1}) \cap \adj(v_{p_2}) = [v_{m_1}]$, the edges in both sets that have $v_{m_1}$ as the target vertex are retained, resulting in $(v_{p_1}, e_{k_1}, v_{p_2}, [(e_{l_1}, e_{l_2}, v_{m_1})])$. Finally, the tuple is flattened to $(v_{p_1}, e_{k_1}, v_{p_2}, e_{l_1}, e_{l_2}, v_{m_1})$.

    \comment{
    Case II: $\matching(\pattern_3) \gjoin \matching(\pattern_4)$ is implemented using \expandedge~and \getvertex. Take $v_{p_1} \in \matching(\pattern_3)$ as an example, we first do \expandedge by looking up the
    VE-index corresponding to $v_{p_1}$ in \reffig{graph-index}(b), which
    locates the edge $e_{}$ the $0^{th}$ row of $\relation{\text{Likes}}$. Then, \getvertex~is computed by looking up the EV-index \reffig{graph-index}(a), which locates the vertex tuples corresponding to $v_{m_1}$

    is implemented and it pertains to Case II, since the $\pattern_6$ is a single-edge pattern.
    Thus, when there is no graph index on $\relation{Likes}$, the results of $\matching(\pattern_4)$ is computed via $\relation{Person} \Join_{person\_id = pid1} \relation{Knows} \Join_{pid2 = person\_id} \relation{Person}$.
    Specifically, the results include tuples $(v_{p_1}, e_{k_1}, v_{p_2})$, $(v_{p_2}, e_{k_2}$, $v_{p_1})$, $(v_{p_2}, e_{k_3}, v_{p_3})$, and $(v_{p_3}, e_{k_4}, v_{p_2})$.
    Then, since $\pattern_3$ is a single-vertex pattern whose vertex exists in $\pattern_4$, the results of the join equals $\matching(\pattern_4)$.
    Otherwise, if a graph index exists, $\matching(\pattern_4)$ can be computed with \expandedge~and \getvertex.
    }
\end{example}

\section{The Converged Optimization Framework}
\label{sec:optimizations}
%\enlargethispage{1em}
This section presents \name, a converged relational/graph optimization framework designed to optimize the query
processing of \spjm queries. We begin by introducing a naive solution built upon the graph-agnostic
method for solving the matching operator. We then delve into the converged workflow of \name, which leverages the graph-aware method for solving the matching operator and introduces a complete workflow that aims to integrate techniques from both relational and graph optimization modules.

%\begin{figure}
%    \centering
%    \includegraphics[width=\linewidth]{./}
%    \caption{Overview of the Converged Graph Relational Optimization Framework.}
%    \label{fig:framework-overview}
%\end{figure}

\begin{figure*}
    \centering
    \includegraphics[width=\linewidth]{./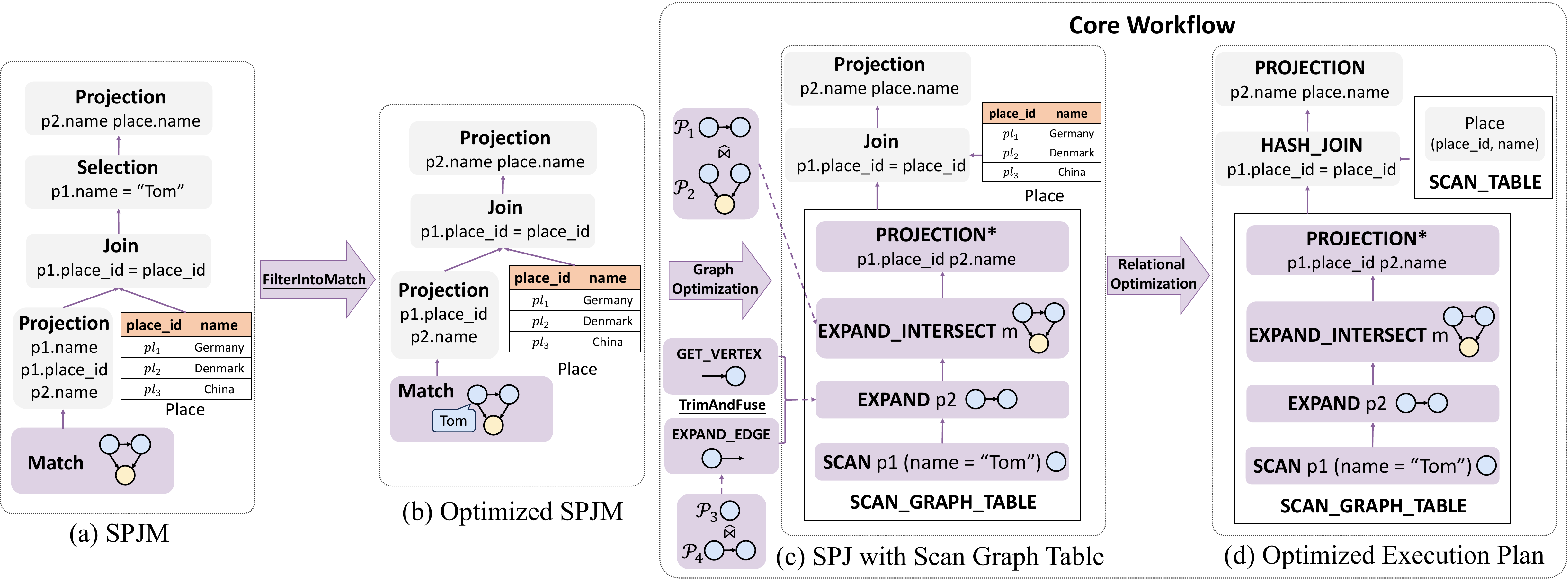}
    \caption{The converged optimization workflow}
    \label{fig:framework-workflow}
\end{figure*}

\subsection{Graph-Agnostic Approach}
\label{sec:relational-only}
The graph-agnostic approach is straightforward: it applies the graph-agnostic transformation for the matching operator in an \spjm query into a series of relational operations (\modify{\reflem{spjm-to-spj}}), effectively converting the \spjm query into an \spj query. The resulting \spj query can then be optimized by any existing relational optimizer, producing an execution plan. As an improvement, if a graph index (\refsec{graph-index}) is available, certain hash-join operators in the execution plan can be replaced by the predefined-join operator, as discussed in GRainDB~\cite{graindb}. The main advantage of this solution is its easy integration with any existing relational database. However, it suffers from two significant drawbacks discussed in \refrem{graph-agnostic-vs-graph-aware}.
%that have been discussed in \refrem{graph-agnostic-vs-graph-aware}.

\subsection{The Converged Approach}
\label{sec:converged}
As illustrated in \reffig{framework-workflow}, the core workflow of the \name framework consists of two components: \emph{graph optimization} and \emph{relational optimization}. The graph optimization is responsible for handling the graph component in an \spjm query, leveraging graph optimization techniques to determine the optimal decomposition tree of the matching operator. On the other hand, the relational optimization takes over to optimize the relational component in the query.
The order in which these two components are applied is not strictly defined. However, for the purpose of our discussion, we will first focus on the graph optimization and then proceed to the relational optimization.
In addition to the core workflow, we further explore heuristic rules that highlight the non-trivial interplay between the relational and graph components in an \spjm query.

 %our analysis of real-life querying scenarios has revealed the need for a \emph{preposition optimizer}. This component is applied prior to the core workflow and is designed to apply heuristic rules.

%In addition to the primary component, we finally demonstrate the necessity of applying a \emph{preposition optimizer} component prior to the primary component.

%\enlargethispage{1em}

\vspace*{-1mm}
\subsubsection{The Graph Optimization}
\label{sec:graph-optimizer}
We adopt the graph optimization techniques developed in \glogs~\cite{GLogS}. However, it is crucial to note that \glogs~ was originally designed for native graph data, whereas our framework deals with relational data, which necessitates a careful adaptation of \glogs's techniques to the relational setting.

\stitle{\glogue~Construction.} \glogs is built upon a data structure called \glogue, which is essentially a graph $\searchgraph_{\pattern}(V, E)$. In this graph, each vertex represents a pattern $\pattern'$ consisting of up to $k$ vertices (typically, $k=3$) that has non-empty matched instances in the original graph. There is an edge from $\pattern''$ to $\pattern'$, if there is a decomposition tree where $\pattern''$ is a child node of $\pattern'$.

 Each vertex $\pattern'$ in \glogue maintains $|\matching(\pattern')|$, denoting the cardinality of the pattern. To reduce computation costs, \glogs employs a sparsification technique to construct a subgraph $G'$. The pattern cardinality can then be estimated using $|\matching_{G'}(\pattern')|$ based on subgraph $G'$. In our work, we adapt this sparsification technique to construct \glogue. We sample a subset of vertex and edge relations in the \rgmapping process. Once the subset of relations is obtained, they can serve as the input tables to the techniques presented in~\cite{gart} for constructing the sparsified graph $G'$.

 %\enlargethispage{1em}

\stitle{Cost Calculation.} The optimization process is essentially searching for the execution plan that incurs the minimal cost. Let the cost of an execution plan $\Phi$ for computing $\matching(\pattern)$ be $\cost_\Phi(\pattern)$. %The graph optimizer aims to find an optimal execution plan such that $\cost_\Phi(\pattern)$ is minimized. Clearly, this can be solved using dynamic programming~\cite{GLogS}.

Consider $\matching(\pattern') = \matching(\pattern'_l) \gjoin \matching(\pattern'_r)$ as an intermediate computation in an execution plan. We have:
\[
\cost_{\Phi}(\pattern') = \cost_{\Phi_l}(\pattern'_l) + \cost_{\Phi_r}(\pattern'_r) + \cost(\gjoin),
\]
where $\Phi_l$ and $\Phi_r$ are the execution plans for computing $\matching(\pattern'_l)$ and $\matching(\pattern'_r)$, respectively, and $\cost(\gjoin)$ is the cost of the join operation.

When a graph index is available, there are three physical implementations of $\gjoin$, depending on the type of $\pattern'_r$, and the calculation of $\cost(\gjoin)$ differs accordingly:
\begin{itemize}
\item If $\pattern'_r$ is a single-edge pattern, $\gjoin$ is implemented using the \expandedge~ operator followed by \getvertex. The cost is calculated based on the cardinality of $\matching(\pattern'_l)$ (can be looked up in the \glogue) and the average degree of the graph, namely $|\matching(\pattern'_l)| \times \overline{d}$.
\item If $\pattern'_r$ is a complete star pattern, $\gjoin$ is implemented using the \expandintersect~ operator. The cost is calculated based on the cardinality of $\matching(\pattern'_l)$ and the average intersection size of the neighbors of the vertices being intersected, which is maintained on the corresponding edge from $P'$ to $\pattern'_l$ in \glogue.
\item  If $\pattern'_r$ is any arbitrary pattern, $\gjoin$ is implemented as a \hashjoin. The cost is calculated as the product of the cardinalities of the two relations being joined, i.e., $\cost(\gjoin) = |\matching(\pattern'_l)| \times |\matching(\pattern'_r)|$.
\end{itemize}

In the absence of a graph index, \hashjoin~ is used for the entire plan of the matching operator for simplicity, and its cost is computed as the product of the cardinalities of the two relations being joined. Although other physical join implementations, such as nested loop join, may be more effective if the join condition is not selective, considering these alternatives is planned for future work.

%when we are dealing
%with the matching operators, the joins obtained by decompostion
%are implemented as HASH_JOINs. As these joins connect vertices
%to their adjacent edges, it is always far from cross product.

\comment{
\begin{example}
    As shown in \reffig{framework-workflow}, since $\matching(\pattern_2)$ is a single-edge pattern, the join operator between $\matching(\pattern_1)$ and $\matching(\pattern_2)$ is implemented using \expandedge~followed by \getvertex.
    Besides, as $\matching(\pattern_4)$ is a complete star, the join operator between $\matching(\pattern_3)$ and $\matching(\pattern_4)$ is implemented using the \expandintersect~operator.
\end{example}
}

\stitle{Plan Computation.} Searching for the optimal execution plan in \name remains the same as in \glogs. The optimal plan is obtained by searching for the shortest path in the \glogue from the single-vertex pattern to the queried pattern.  %The process is repeated until the $\pattern$ node is reached, at which point the optimal execution plan is obtained.
\reffig{framework-workflow}(c) demonstrates a physical plan for matching the given triangle pattern when a graph index is present. The plan reflects the example in~\refex{physical-implementation}, with one exception: the pair of \expandedge~ and \getvertex~ operators is fused into a single \expand~ operator, which will be discussed as a heuristic rule called \joinfuserule.

\subsubsection{The Relational Optimization}
Once the graph optimizer has computed the optimal execution plan for $\matching(\pattern)$, the next step is to integrate this plan with the remaining relational operators in the \spjm query. The relational optimization is responsible for optimizing these remaining operators, which are all relational operators.
\revise{Relational optimization has evolved into a well-established field, producing numerous significant results \cite{Chaudhuri98, Haffnerjoinorder}. Since existing relational optimization techniques can be seamlessly integrated into \name, we will focus on how graph optimization techniques can be applied to enhance relational queries.}
%Any existing relational optimizer can be used for this purpose.

%It's important to note that before $\matching(\pattern)$ can be integrated with the remaining relational component of the query, a specialized projection operator $\gproject$ must be applied to the matched results to flatten the graph relation into a relation, making it compatible with the relational operators. To facilitate this integration, we introduce a new physical operator called \scangraphtable, as shown in \reffig{framework-workflow}(c), which encapsulates the $\gproject$ operator and the optimal execution plan for $\matching(\pattern)$.

Specifically, to prevent the relational optimizer delve into the internal details of the graph pattern matching process, we introduce a new physical operator called \scangraphtable, as shown in \reffig{framework-workflow}(c), which encapsulates the $\gproject$ operator and the optimal execution plan for $\matching(\pattern)$.
The \scangraphtable~ operator acts as a bridge between the graph and relational components of the query. From the perspective of the relational optimizer, \scangraphtable~ behaves like a standard \scan~ operator, providing a relational interface to the matched results. %This abstraction allows the relational optimizer to treat \scangraphtable~ as a regular relational operator, without the need to delve into the internal details of the graph pattern matching process.

\subsubsection{Heuristic Optimization Rules}
In real-life use cases, heuristic rules may involve non-trivial interactions between the relational and graph components of an \spjm query. We explore two representative rules, \filterrule~ and \joinfuserule, which can be applied at different stages of the optimization process to improve query performance.

\stitle{\filterrule.} To elaborate the rule, we extend the definition of a pattern $(\pattern, \constraints)$, introducing constraints within $\constraints$. For example, constraints can specify predicate $d$ such as $\id(v_1) = p_1$ for a vertex $v_1$, or $e_1.date > \text{"2024-03-31"}$ for an edge $e_1$. With the constraints defined, any matching result of $\pattern$ must have the corresponding vertices and edges adhering to the predicates.

%\enlargethispage{1em}

While writing queries, users may not specify constraints on the pattern but rather use the selection operator after matching results have been projected into the relational relation, described as:
\[
\sigma_{d'_{v_a}} (\widehat{\pi}_{v.a \rightarrow \text{v\_a}, \ldots} \matching(\pattern))
\]
The predicate $d'_{v_a}$ defines a predicate in terms of an attribute of the pattern vertex that is projected by $\widehat{\pi}$ from the matched results. The motivation example in \refex{introduction:sqlpgq} illustrates such a case, where the selection predicate \kk{g.p1\_name = ``Tom''} is applied to the pattern vertex $v_{p_1}$.
There is wasteful computation if the selection is applied after the costly pattern matching. A more efficient approach is to push the selection predicate down into the matching operator.
%An extra benefit of doing so is that the optimizer can leverage the constraints to recalculate the cost, potentially generating better execution plans.
The \filterrule is formally defined as:
\begin{equation*}
\sigma_{\constraints} (\widehat{\pi}_{v.a \rightarrow \text{v\_a}, \ldots} \matching(\pattern)) \\
\equiv \sigma_{\constraints'} (\widehat{\pi}_{v.a \rightarrow \text{v\_a}, \ldots} \matching((\pattern, \{d_v\}))),
\end{equation*}
where $\constraints' = \constraints \setminus \{d'_{v_a}\}$, and $\{d_v\}$ is the corresponding constraints that are appended to the pattern $\pattern$.

It is recommended to apply the \filterrule before graph optimization, as this allows the optimizer to leverage the pushed-down constraints to recalculate the cost, potentially generating more efficient execution plans. \reffig{framework-workflow}(b) showcases the effects of applying the \filterrule, where the selection predicate \kk{g.p1\_name = ``Tom''} is pushed down into the matching operator.

\stitle{\joinfuserule.}
The \joinfuserule~ is utilized to streamline a query plan by merging the \expandedge~ and \getvertex~ operators which are commonly coupled in the implementation of matching operations, into a single \expand~ operator that retrieves the neighboring vertices directly.
However, such a fusion is permissible solely when the output edges by \expandedge~ are deemed unnecessary, so this rule further incorporates a preceded field trim step.
Specifically, the field trimmer would examine whether any subsequent relational processes rely on these edges, such as utilizing them for property projections or for filtering based on their attributes.
If no such operations are found, the edges can be trimmed.
Furthermore, the field trimmer would also consider a special case that the edges might be projected in the \scangraphtable~ operator as part of the matching results, but are subsequently unused in relational operations. In such cases, the edges can be trimmed as well.
After the field trim step, if the output edges are trimmed, the \expandedge~ operator can be fused with the \getvertex~ operator to form a single \expand~ operator, which can directly retrieve the neighboring vertices efficiently by looking up the VE-index of the source vertex when the graph index is available.

%
% The \joinfuserule~ aims to optimize the query plan by fusing \expandedge, which expands edges, and \getvertex, which retrieves the associated vertices, into a single \expand, which retrieves the neighboring vertices directly, to enhance the plan's efficiency.
% However, the fusion logical is not universally applicable, but relies on a preceded exam-and-trim procedure to determine whether the properties of the output vertices and edges are essential.
% For example, if subsequent relational operations do not require any vertex properties output by \getvertex~ (e.g., no property projection or filtering), the vertex properties can be trimmed.
% In this case, the \expandedge~ can be fused with the \getvertex~ to form a single \expand~ operator, and this fusion helps to eliminate the unnecessary join operation between the edge table and the associated vertex table to find the neighboring vertices, since the associated vertex ids (and labels) are already available in the adjacent edge table.
% Similarly, if the edge properties output by \expandedge~ are not required, it further optimize the fused \expand~ operation (with no edge properties required) by retrieving the neighboring vertices directly when the graph indices are available, which further eliminates the unnecessary join operation with the adjacent edge table.

%\enlargethispage{1em}

Note \revise{that \filterrule is actually a global optimization rule because there are cases where pushing the predicate into the matching operator does not always yield better plans\cite{full-version}. However, since it is mostly effective, we greedily apply \filterrule in the current version. A more comprehensive evaluation of this rule will be conducted in future work. }
\iffalse
For instance, given $\matching(\pattern)$ and assuming the query represented by $\pattern$ is}
\begin{lstlisting}
    (v1:Person)-[:Knows]->(v2:Person)-[:Knows]->(v3:Person),
\end{lstlisting}
\revise{with the return values being the names of $v_1$, $v_2$, and $v_3$, and an additional constraint on $v_2$'s place\_id specified outside the matching operator.
If the predicate is pushed, then the first expansion to $v_2$ in the generated plan would require fetching all necessary properties. When the predicate's filtering capability is weak, this leads to additional overhead. However, if the predicate is not pushed, properties can be deferred, being retrieved only at the end of the plan. Therefore, it is necessary to specifically determine whether to push down the predicate}.
\fi
On the other hand, \joinfuserule is a local optimization rule specifically designed for graph optimization. The effectiveness of these two rules is validated in \refsec{experiment-opt}. Our \name framework is designed to be generic, allowing different optimization rules to be easily integrated. %In addition to these simple yet effective rules, we have also designed several other rules, such as \degreerule, which avoids the final join for aggregate operations, and \subtaskremoverule, which converts some WHERE conditions to anti-joins and semi-joins} {\color{red} Robin: an extra rule can be explained with sufficient details}.

\vspace*{-2mm}
\subsection{System Implementation}
We engineered the frontend of \name in Java and built it upon Apache Calcite~\cite{calcite}
%, a widely used open-source framework for data management,
to utilize its robust relational query optimization infrastructure.
%The frontend of \name is responsible for parsing the input \spjm query, constructing the logical plan, and invoking the graph and relational optimizers to generate the optimal physical plan.
Firstly, we enhanced Calcite's SQL parser to recognize SQL/PGQ extensions, specifically to parse the \lstinline{GRAPH_TABLE} clause.
We created a new \lstinline{ScanGraphTableRelNode} that inherits from Calcite's core \lstinline{RelNode} class, translating the \lstinline{GRAPH_TABLE} clause into this newly defined operator within the logical plan.
Following the formation of the logical plan, the frontend invokes the converged optimizer to generate the optimal physical plan.
For the relational-graph interplay optimizations, we incorporate heuristic rules such as \filterrule and \joinfuserule into Calcite's rule-based HepPlanner, by specifying the activation conditions and consequent transformations of each rule.
For more nuanced optimization, we rely on the VolcanoPlanner, the cost-based planner in Calcite, to optimize the \lstinline{ScanGraphTableRelNode}.
We devised a top-down search algorithm that assesses the most efficient physical plan based on a cost model outlined in \refsec{graph-optimizer}, combined with high-order statistics from \glogue for more accurate cost estimation.
\revise{
    While low-order statistics primarily focus on the cardinalities of relational tables, high-order statistics also include the frequencies of sub-patterns (can be seen as the joined results of multiple tables of vertices and edges), which aids in more accurate cost estimation. It is important to note that \name remains functional with only low-order statistics, but the efficiency of the generated plan may decrease due to less accurate cost estimation.
}

For the remaining relational operators in the query, we leverage Calcite's built-in optimizer, which already includes comprehensive relational optimization techniques.
Lastly, the converged optimizer outputs an optimized and platform-independent plan formatted with Google Protocol Buffers (protobuf) \cite{protobuf}, ensuring the adaptability of \name's output to various backend database systems.
%. This ensures the adaptability of the \name framework's output to a variety of backend database systems.

We developed the \name framework's backend in C++ using DuckDB as the relational execution engine to showcase its optimization capabilities.
We integrated graph index support in GRainDB~\cite{graindb}. %, by building graph indices for predefined joins, as well as optimized sip join and merge sip join implementations (\todo{what is this}), leveraging the graph indices to speed up the execution, following the designs used in GRainDB~\cite{graindb}.
%Note we choose DuckDB instead of GRainDB directly as the backend, because DuckDB is more mature and incorporates advanced optimization techniques that are not yet available in GRainDB.
%However, we did bring in optimization features similar to those found in GRainDB to ensure that \name can take full advantage of the graph indices.
With graph index, the \expand, \expandedge~ and \getvertex~ operators can be optimized by directly using the predefined join in GRainDB.
Note that we craft a new join on DuckDB called \emph{EI-Join} for the support of \expandintersect.
Without graph index, the \hashjoin~ operator is used throughout the entire plan.
To execute the optimized plans within DuckDB, we introduced a runtime module that translates the optimized physical plan into a sequence of DuckDB/GRainDB-compatible executable operators.
%For instance, the \expand~ operator is materialized using the merge sip join implementation, which significantly benefits from graph indices for processing efficiency.
%Moreover, for the \intersect~ operator, we crafted a new worst-case-optimal join implementation, which achieves better performance compared to the traditional multiple join that is not worst-case optimal in DuckDB.
This runtime module essentially bridges the gap between the optimized plans produced by \name and DuckDB's execution engine, thereby validating \name's practicality and potential performance improvements for \spjm queries on an established relational database system.

% The \name system consists of three main components: the Metadata Provider, the Converged Optimizer, and the Plan Converter.
% The Metadata Provider maintains the metadata of the database, such as the schema information, the graph index information, and the statistics of the data.
% Specifically, we incorporated \glogue as a high-order statistics provider, which precomputed the small patterns with size up to $k$ (usually $k=3$) together with their cardinalities, for a more accurate cost estimation of the matching operator.
%This pre-computation are based on the sampled subsets of vertex and edge relations from the database, which are further constructed into a sparsified graph.

\comment{
In the Converged Optimizer,
we incorporated both graph optimization techniques and relational optimization techniques, to optimize the \spjm queries in a converged manner.
We use the rule-based optimization planner named HepPlanner in Calcite, to plug in heuristic optimization rules like \filterrule and \joinfuserule. We set up each rule by specifying the conditions under which it activates and the actions it takes once those conditions are met.
For more nuanced optimization, we employ the VolcanoPlanner, a cost-based optimization planner provided by Calcite, to optimize the match operator in a cost-based manner.
We have crafted a top-down search algorithm that calculates the most efficient physical plan for the match operator, leveraging the cost model outlined in \refsec{graph-optimizer}. The algorithm is informed by advanced statistics obtained from the Metadata Provider, ensuring a more precise cost estimation.
For the relational part, Calcite's built-in optimizer, which already includes comprehensive relational optimization techniques, is leveraged to fine-tune the rest of the relational operators in the query.
The Converged Optimizer generates its output as a unified plan formatted with Google Protocol Buffers (protobuf)\cite{protobuf}. This serialization format is both platform-independent and highly interoperable, facilitating the easy conversion of the unified plan into executable query plans tailored for the destination database system.
}
%To showcase the capabilities of the \name framework, we devised a Plan Converter that employs code generation techniques. This Plan Converter is adept at transforming the unified plan, produced by the Converged Optimizer, into executable code compatible with GRainDB, a relational database system that incorporates support for graph indices. This step is crucial in demonstrating the practical applicability and effectiveness of the \name framework in optimizing \spjm queries on a real-world relational database system.

\comment{
To optimize SPJM queries, we propose the converged graph relational optimization framework named relgo.
Specifically, optimizing SPJM queries with relgo is divided into three stages, i.e., preprocessing, optimizing, and converting.
In this section, we introduce the three stages of relgo in detail.

\subsection{Preprocessing Stage}
Given an SPJM query, relgo first parse the query and obtained the corresponding AST (Abstract Syntax Tree).
Then, the initial logical plan is obtained based on the AST.
Each node in the logical plan represents an operator, including the selection, projection, join, and scan operators.

Please note that the matching operator does not appear in logical plans, because it can be further decomposed into other operators such as join and selection operators.
Besides, according to \refdef{matching}, the output of the matching operator is a graph relation, and it is always followed by a projection operator $\widehat{\pi}$.
The output of $\widehat{\pi}$ is a relation and the matching operator as well as $\widehat{\pi}$ are considered as a whole as an implementation of the scan operator (named ScanMatchTable).

In the preprocessing stage, some universally effective optimizations can be applied to refine the plan in advance.
A typical optimization is \filterrule.

This rule is inspired by FilterPushdownRule in relational optimizer, which can push down the predicates to the scan operators to filter out invalid elements earlier.
Specifically, as the ScanMatchTable corresponding to $\matching(\pattern)$ is a physical implementation of the scan operator which acts like scanning a table obtained by matching $\pattern$, it is reasonable to integrate some filtering criteria into the ScanMatchTable operator.
Specifically, \filterrule finds predicates on the properties of elements in $\pattern$ and push them down into ScanMatchTable, so that invalid elements can be dropped earlier.
An example of applying \filterrule is given in Example \ref{example:push_down}.

Formally, the equation rule w.r.t.~\filterrule is as follows:
\begin{equation}
    \begin{split}
        & \pi_A(\sigma_{d}(R_1 \Join \cdots \Join R_m \Join \widetilde{R}) \\
        & \hspace*{2em} \equiv \pi_A(\sigma_{d_0}(R_1 \Join \cdots \Join R_m \Join \widetilde{R}_{d_1})), \\
        & \hspace*{4em} \text{where } \widetilde{R} = \widehat{\pi}_{attr*}(\mathcal{M}(\mathcal{P})) \\
        & \hspace*{4em} \text{and } \widetilde{R}_{d_1} = \widehat{\pi}_{attr*}(\mathcal{M}(\mathcal{P}_{d_1}))
    \end{split}
\end{equation}
where $d_1$ is a subset of $d$ with the constraints related to $\widetilde{R}$ and $d_0$ is obtained by removing $d_1$ from $d$.
Besides, $\mathcal{P}_{d_1}$ is obtained by adding constraints in $d_1$ to $\mathcal{P}$.

\subsection{Optimizing Stage}

Given a logical plan of a SPJM query, relgo optimizes the plan in the optimizing stage.
Specifically, the operations that appear in logical plans of SPJM queries often also appear in the logical plans of relational databases.
Therefore, typical relational optimizers, e.g., Calcite \cite{calcite}, can be employed to optimize this plan.

Besides, to optimize the implementation of ScanMatchTable, the matching plans are optimized with graph-aware methods.
Studies that optimize graph pattern matching can be utilized as the graph-aware methods, and when implementing \name, we leverage GLogS \cite{GLogS} to optimize matching plans.
Moreover, if graph indices are available, join operators in \expandvertex and \expandintersect can be implemented as predefined joins to further improve the efficiency.

\subsection{Converting Stage}

As different databases usually support different operators and their physical plans can be greatly varied, it is of critical importance for an optimization framework to be flexible.
Therefore, we implement a PlanConverter in the framework to ensure the flexibility.
Given the generated optimal physical plan, the PlanConverter transforms the plan to an internal representation (e.g., Substrait \cite{substrait}), and then the internal representation is transformed to the physical plan that can be executed by the target database.
Finally, the plan is executed and the query results are obtained.

The introduction of the framework is concluded with an example.

\begin{figure}
    \centering
    \includegraphics[width=.8\linewidth]{./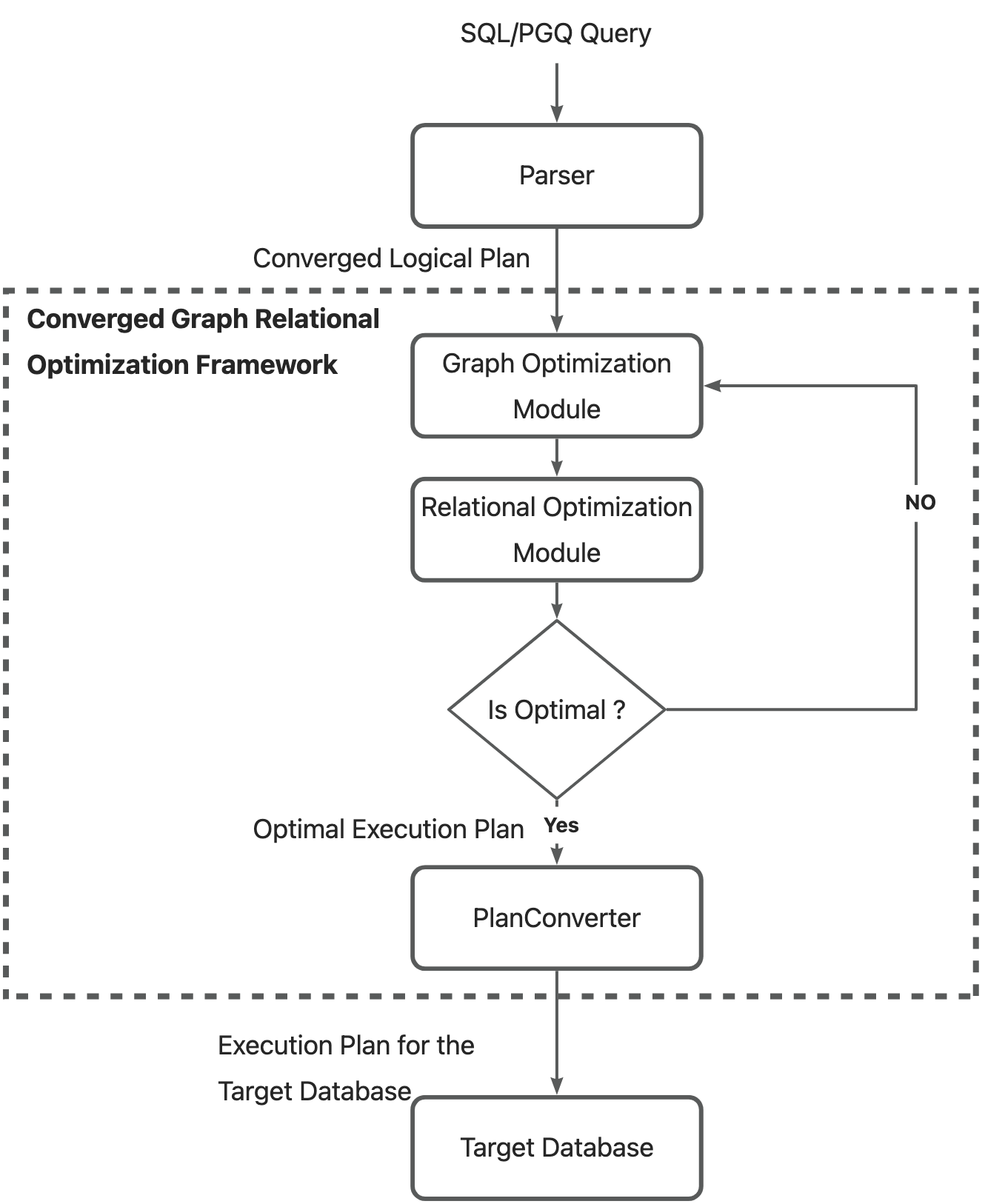}
    \caption{Workflow of the Converged Graph Relational Optimization Framework.}
    \label{fig:workflow}
\end{figure}

\begin{figure}
    \centering
    \includegraphics[width=.6\linewidth]{./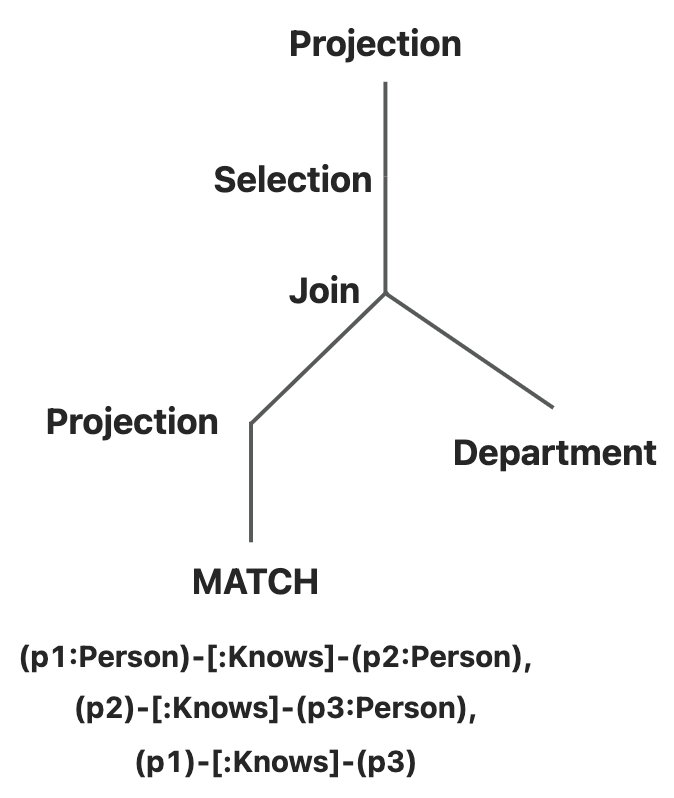}
    \caption{The operator tree of SPJM query in Example \ref{example:framework}.}
    \label{fig:example-operator-tree}
\end{figure}

\begin{figure*}
    \centering
    \begin{subfigure}[b]{0.4\linewidth}
        \centering
        \includegraphics[width=\linewidth]{./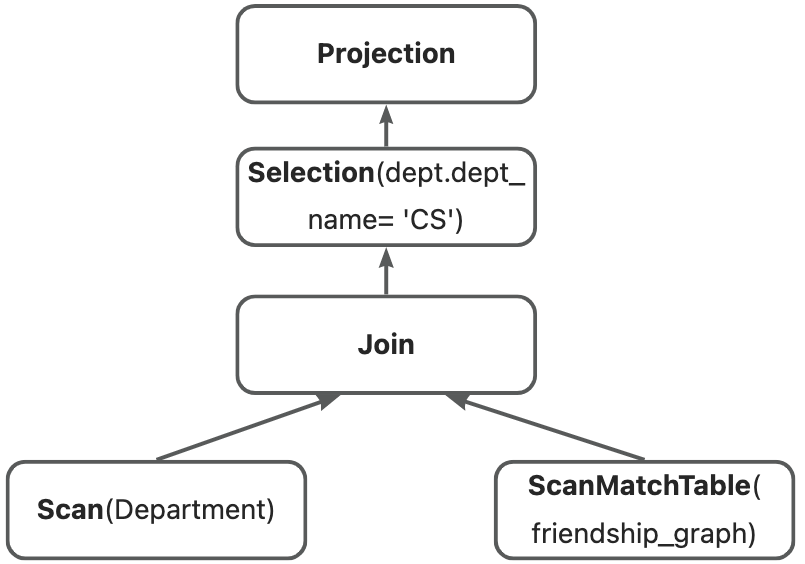}
        \caption{Outer query plan.}
        \label{fig:converged-logical-plan-relational}
    \end{subfigure}
    \begin{subfigure}[b]{0.4\linewidth}
        \centering
        \includegraphics[width=\linewidth]{./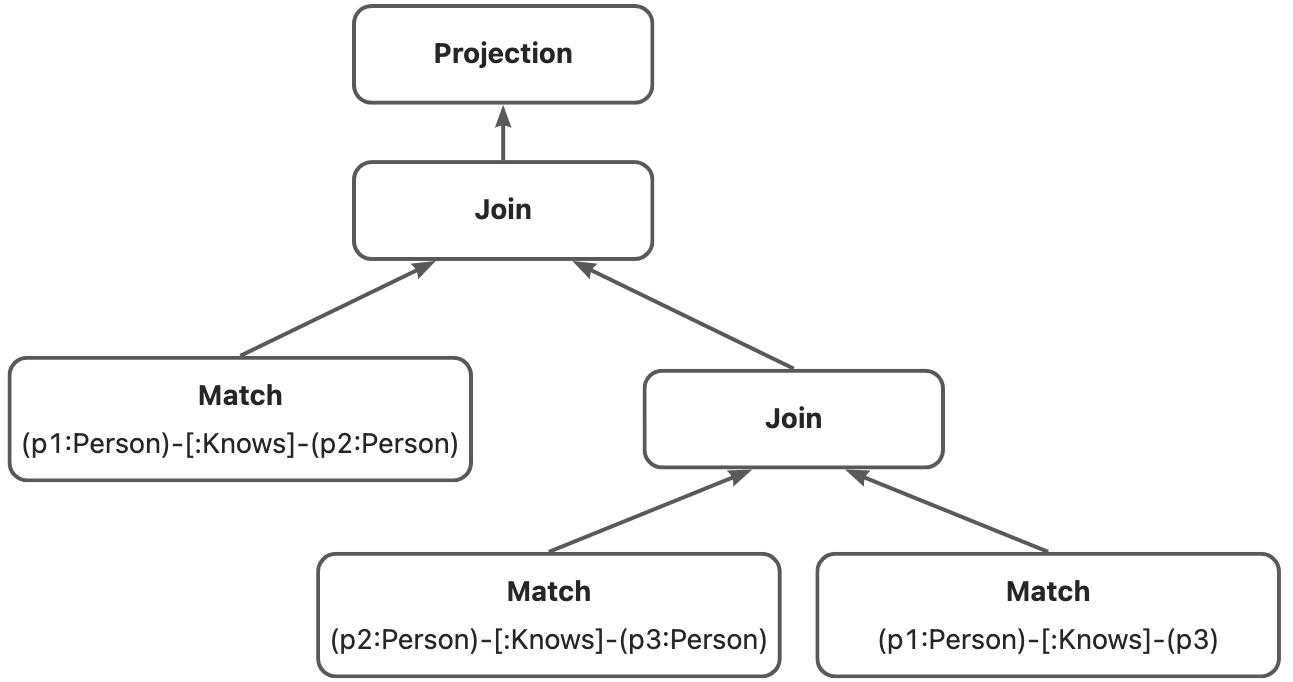}
        \caption{Match scanning plan.}
        \label{fig:converged-logical-plan-graph}
    \end{subfigure}
    \begin{subfigure}[b]{0.4\linewidth}
        \centering
        \includegraphics[width=\linewidth]{./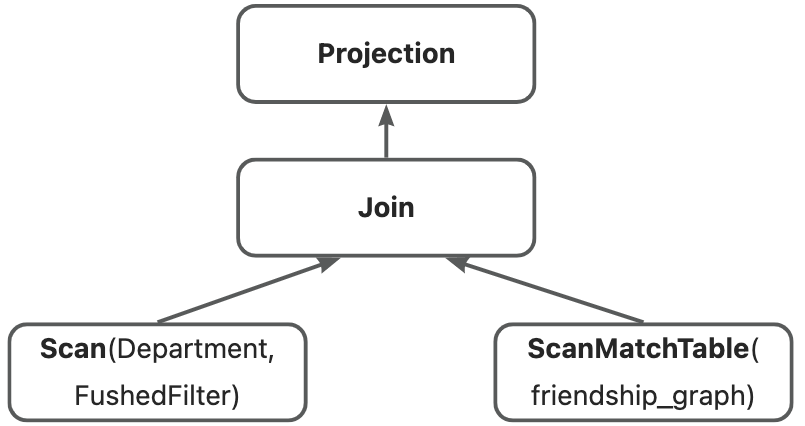}
        \caption{Outer query plan after Optimization.}
        \label{fig:relational-plan-optimized}
    \end{subfigure}
    \begin{subfigure}[b]{0.4\linewidth}
        \centering
        \includegraphics[width=\linewidth]{./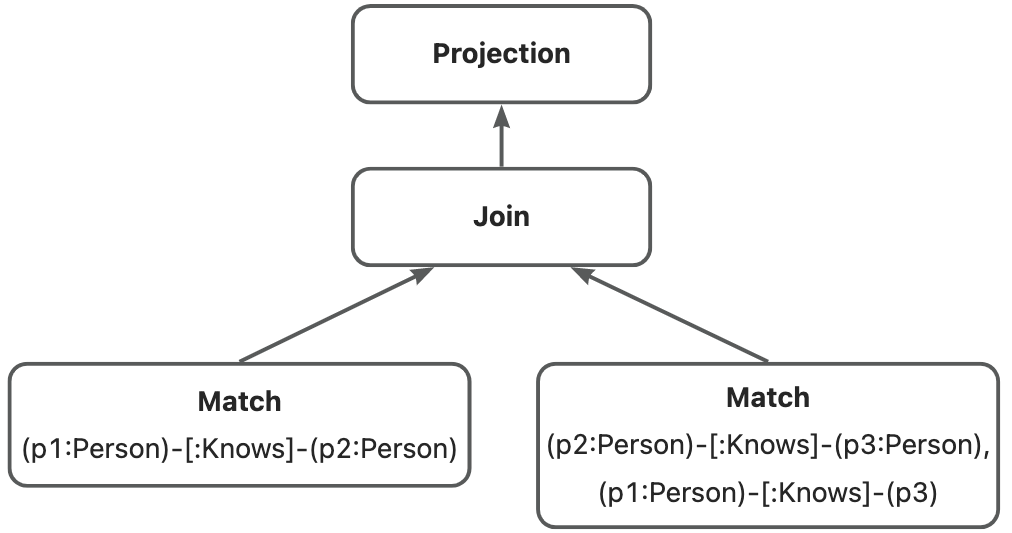}
        \caption{Match scanning plan after Optimization.}
        \label{fig:graph-plan-optimized}
    \end{subfigure}
    \begin{subfigure}[b]{0.5\linewidth}
        \centering
        \includegraphics[width=\linewidth]{./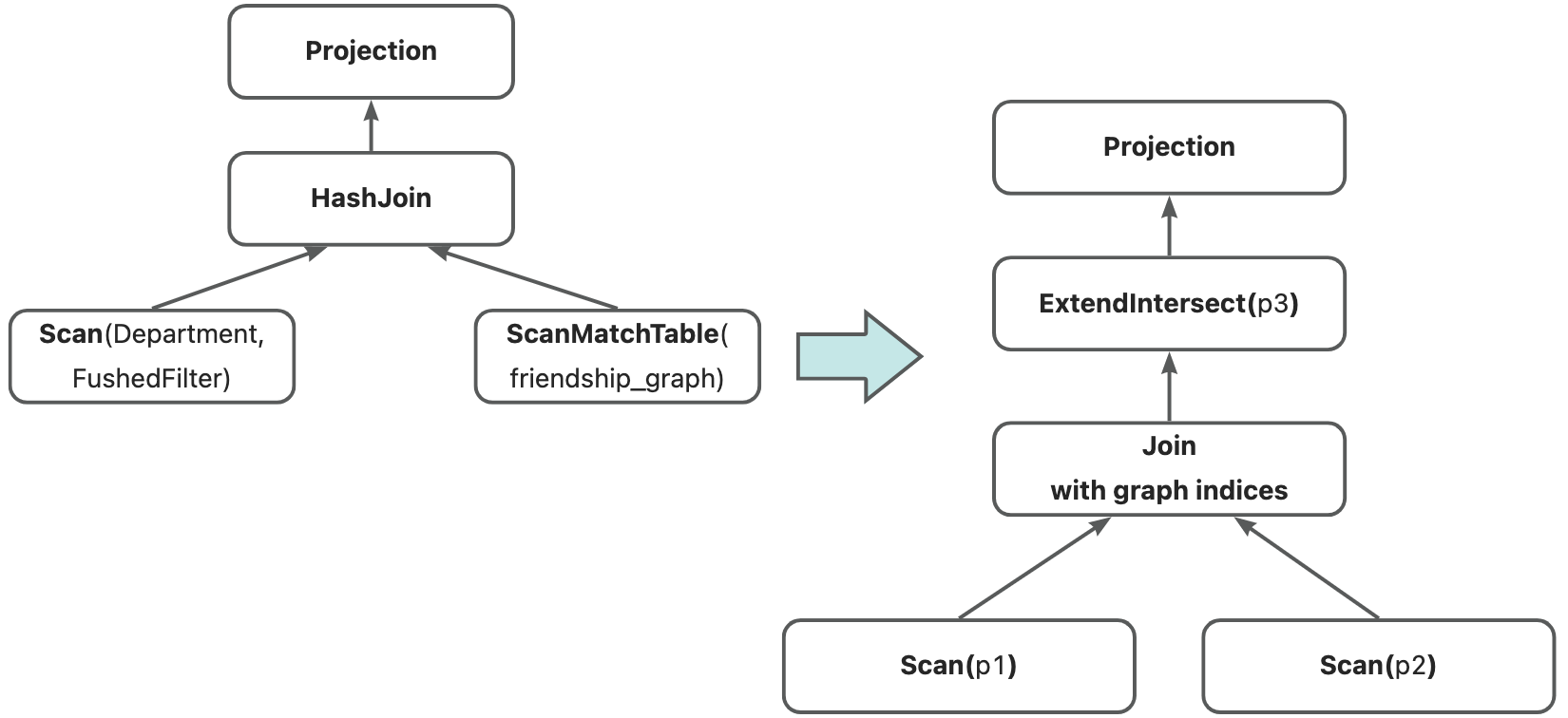}
        \caption{Obtained Optimal Physical Plan.}
        \label{fig:physical-plan-optimized}
    \end{subfigure}
    \caption{An example of query optimization.}
    \label{fig:query-grtree-example}
\end{figure*}

The above process of query processing is illustrated with the following example.

\begin{example}
    \label{example:framework}
    Given a relational database with tables as follows,
    \begin{equation*}
        \begin{split}
            & \textit{Person = (\underline{id}, name, dept\_id)} \\
            & \textit{Knows = (\underline{id1}, \underline{id2})} \\
            & \textit{Department = (\underline{dept\_id}, dept\_name)}, \\
        \end{split}
    \end{equation*}
    suppose we are going to find three persons satisfying:
    (1) These three persons know each other;
    (2) At least two of them are from the department of computer science.
    The SPJM query can be illustrated as shown in Fig.~\ref{fig:example-operator-tree}.

    In relational matching algebra, the SPJM query can be expressed as follows:
    Firstly, to obtain the triangles, the pattern $\mathcal{P}_{\triangle}$ is
    \begin{lstlisting}
        (p1:Person)-[:Knows]-(p2:Person),
        (p2)-[:Knows]-(p3:Person),
        (p1)-[:Knows]-(p3)
    \end{lstlisting}
    Then, to get the relational table that records the three person and their departments, the algebra expression is
    \begin{equation*}
        \begin{split}
            \widehat{R}_{graph} = & \pi_{p1.name\rightarrow pn1, p1.dept\_id \rightarrow dept1,p2.name\rightarrow pn2, p2.dept\_id \rightarrow dept2,} \\
            & _{p3.name\rightarrow pn3, p3.dept\_id \rightarrow dept3}(\mathcal{M}(GR, \mathcal{P}_{\triangle})),
        \end{split}
    \end{equation*}
    where $GR$ is a graph relation with only one tuple, and each attribute of the tuple is a vertex or an edge.
    The vertices correspond to rows in table Person and edges correspond to rows in table Knows.

    Finally, to obtain the triangles of persons with at least two persons from the department of computer science, the algebra expression is
    \begin{equation*}
        \begin{split}
        \pi_{pn1, pn2, pn3}
        (& \sigma_{dept.dept\_name = \text{`Computer Science'}}( \\
        & dept \Join_{dept1=dept.dept\_id \land dept2=dept.dept\_id} \widehat{R}_{graph})).
        \end{split}
    \end{equation*}

    Based on the above algebra expressions, the match scanning plan is shown in Fig.~\ref{fig:converged-logical-plan-graph} and the outer query plan is shown in Fig.~\ref{fig:converged-logical-plan-relational}.

    Then, optimization modules in the optimization layer are applied to optimize the plans.
    The optimized plans are shown in Fig.~\ref{fig:relational-plan-optimized} and Fig.~\ref{fig:graph-plan-optimized}.
    Moreover, the finally obtained optimal physical plan is shown in Fig.~\ref{fig:physical-plan-optimized}.
\end{example}
}

%\input{sec-optimization}
%\input{sec-proof}
\iffalse
\begin{table}[t]
    \small
    \centering
    \begin{tabular}{c|c|c|c|c|c}
    \hline
    Dataset & \#$R_v$ & \#$R_e$ & Sum. $|R_v|$ & Sum. $|R_e|$ & Disk Usage\\
    \hline
    $LDBC10$ & 8 & 12 & 29,987,835 & 88,317,856 & 8.9G \\
    \hline
    $LDBC30$ & 8 & 12 & 88,789,833 & 278,652,443 & 28.0G\\
    \hline
    \revise{$LDBC100$} & 8 & 12 & 282,637,871 & 938,655,217 & 90.0G \\
    \hline
    \revise{$IMDB$} & 17 & 20 & 50,486,838 & 162,121,663 & 6.0G \\
    \hline
    \end{tabular}
    \caption{Statistics of the datasets. In detail, \#$R_v$ (resp.~\#$R_e$) is the number of vertex (resp.~edge) relations and Sum.~$|R_v|$ (Sum.~$|R_e|$) is the total number of tuples in vertex (resp.~edge) relations.
    \revise{Note that for the $IMDB$ datasets used in the graph database K\`uzu, we have manually designated the tables representing vertices and edges in the graph.}}
    \label{table:experiment-datasets}
\end{table}
\fi

\section{Evaluation}
\label{sec:evaluation}
% The experimental results are presented and analyzed in this section.
% In detail, experimental settings are introduced in Sec.~\ref{sec:experiment-settings}.
% Sec.~\ref{sec:experiment-e2e}-Sec.~\ref{sec:experiment-case-study} conduct the end-end experiments, test on queries with cycles, perform ablation studies, evaluate the optimization efficiency, test the efficiency of optimziations accross reltaional and graph queries, and evaluate the efficiency of the join order of the plans generated by \name, respectively.

%\enlargethispage{1em}

\subsection{Experimental Settings}
\label{sec:experiment-settings}

\noindent\textbf{Benchmarks.} Our experiments leverage two widely used benchmarks to assess system performance, as follows:

\stitle{LDBC SNB.}
We use $LDBC10$, $LDBC30$, and $LDBC100$ with scale factors of 10, 30, and 100, generated by the official LDBC Data Generator. These datasets were chosen because they can be accommodated in the main memory of a single configured machine.
    We select 10 queries from the LDBC Interactive workload for evaluation, denoted as $\text{IC}_{1, \ldots, 9, 11, 12}$, with $10$, $13$, and $14$ excluded since they involve either pre-computation or shortest-path that are not supported.
    To accommodate queries containing variable-length paths~\cite{graindb}, we followed~\cite{graindb} to slightly modify them by separating each query into multiple individual queries with fixed-length paths. Each of these modified queries is denoted with a suffix ``-$l$'', where $l$ represents the length of the fixed-length path. In addition, we carefully designed two sets of queries for the comprehensiveness of evaluation, including (1) $QR_{1\ldots 4}$ to test the effectiveness of \filterrule and \joinfuserule in \name, and (2) $QC_{1\ldots 3}$, comprising three typical patterns with cycles including triangle, square, and 4-clique, to assess the efficiency of $\expandintersect$ introduced in \refsec{physical-operators}.

\stitle{JOB.} The Join Order Benchmark (JOB)~\cite{job_snb} on Internet Movie Database (IMDB) is adopted. We select the variants marked with ``a'' of all JOB queries, referred to as $\text{JOB}_{1\ldots 33}$, without loss of generality. These queries are primarily designed to test join order optimization, with each query containing an average of $8$ joins.

The \revise{largest dataset (i.e., $LDBC100$) contains 282 million tuples in vertex relations and 938 million tuples in edge relations.
More detailed statistics of the datasets are available in the full version\cite{full-version}.}
We manually implement the queries using SQL/PGQ, which are presented in the artifact~\cite{artifact}.
%In addition, we adhered to the same procedures for data loading and graph index construction as those described in~\cite{graindb}.
Furthermore, we perform the \rgmapping process in a manner that allows the construction of the same graph index on the LDBC and JOB datasets used in GRainDB's experiments~\cite{graindb}.
Specifically, the EV-index and VE-index on potential edge relations are constructed on foreign keys and tables that depict many-to-many relationships.
%The \rgmapping{s} are defined to ensure the effectiveness of the graph index defined in graindb.

\noindent\textbf{Compared Systems. }
%To ensure a fair comparison, \revise{All the following systems, except for K\`uzu, use DuckDB as the underlying relational execution engine, differing only in their adoption of the optimizer. Kùzu utilizes its own execution engine as a baseline of a graph database management system (GDBMS)}.
To ensure a fair comparison, all systems except K\`uzu use DuckDB v0.9.2 as the relational execution engine, differing only in their optimizers.
%DuckDB v0.9.2 is used in the experiments.
Since GRainDB was originally implemented on an older version of DuckDB, we have reimplemented it on DuckDB v0.9.2, which offers improved performance over the original version. %Additionally, because the plans generated by Umbra might include multiway joins, we implemented multiway joins in DuckDB using the Leapfrog Trie Join algorithm (\todo{add references}).
K\`uzu utilizes its own execution engine (v0.4.2) as a baseline of a graph database management system (GDBMS).

\stitle{DuckDB~\cite{duckdb}}: This system optimizes queries using the graph-agnostic approach, leveraging DuckDB's built-in optimizer as described in \refsec{relational-only}. It serves as the naive baseline for extending a relational database system to support \spjm.

\stitle{GRainDB~\cite{graindb}}: This system uses same optimizer as DuckDB but employs the graph index (\refsec{graph-index}) for query execution. It acts as the baseline to demonstrate that solely using graph index is insufficient for optimizing \spjm.

%\item \relgohash: This system optimizes queries using the graph-agnostic optimizer introduced in \refsec{converged}, without leveraging the graph index for query execution. It serves as the baseline to showcase that, even in the absence of a graph index, the converged optimizer can generate efficient plans with improved join orders by utilizing graph-optimization techniques.

\stitle{Umbra~\cite{umbra2020vldb,umbra2020cidr}}: \revise{
    This system features an advanced hybrid optimizer capable of generating wco join plans. We obtained the Umbra executable from the authors and configured its parameters according to their recommendations for computing the execution plan. The execution plan is then executed on DuckDB\footnote{Notably, all Umbra's plans for the benchmark queries exclude the multiway-join operator, allowing for direct transformation into DuckDB's runtime.}, utilizing the graph index when applicable, as done in GRainDB. This helps demonstrate that even with an advanced relational optimizer and the addition of a graph index, it can still fall short in optimizing \spjm.
    }

    \stitle{\name}: This system optimizes queries using the converged optimizer presented in \refsec{converged} and utilizes the graph index for query execution. It demonstrates the full range of techniques introduced in this paper. There are some variants
of \name~ for verifying the effectiveness of the proposed techniques, which will be introduced in the corresponding experiments.

\stitle{K\`uzu~\cite{jin2023cidr}}:
\revise{
    This system is a GDBMS that adopts the property graph data model. We use it as a baseline to compare the performance gap between \name on relational databases and native graph databases.
}

\noindent\textbf{Configurations. }
Our experiments were conducted on a server equipped with an Intel Xeon E5-2682 CPU running at 2.50GHz and \revise{256GB} of RAM, with parallelism restricted to a single thread.
%We use the latest version of DuckDB as the common relational execution engine.
%To build the graph indices, we adhered to the same procedures for data processing and graph index construction as outlined in~\cite{graindb}.
%Note that codegen techniques can be employed to facilitate the transformation of the physical plans into executable code.
For a comprehensive performance analysis, each query from the LDBC benchmark was run 50 times using the official parameters, while each query from the JOB benchmark was executed 10 times. We report the average time cost for each query to mitigate potential biases.
We imposed a timeout limit of 10 minutes for each query, and queries that fail to finish within the limit are marked as \ot.

%\enlargethispage{1em}

\subsection{Micro Benchmarks on RelGo}
\label{sec:experiment-opt}
In this subsection, we conducted three micro benchmarks to evaluate the effectiveness of \name,
including assessing the efficiency of the optimizer, testing its advanced optimization strategies, and examining its effectiveness in optimizing join order.

\noindent\textbf{Optimization Efficiency Evaluation.}
First, we assessed the optimization efficiency by comparing \name with GRainDB\cite{graindb}.
%Specifically, \name employs a graph-aware optimizer, which explores in a significantly narrowed search space as analyzed in \refsec{compare-search-space}.
%In contrast, GRainDB utilizes graph-agnostic optimizer but generally employs a greedy optimization approach.
We tested their optimization time
% of \name and GRainDB,
and also evaluated the execution time for their optimized plans as a measure of the plan quality.
We considered end-to-end time as optimization time plus execution time.
We randomly selected two subsets of the LDBC and JOB queries, and conducted the experiments on $LDBC30$ and IMDB datasets. %respectively.

The results in Fig.~\ref{fig:exp-optimization} reveal that \name significantly outperforms GRainDB in terms end-to-end time, achieving an average speedup of $7.5\times$ on $LDBC30$ and $3.8\times$ on IMDB.
However, note that \name incurs a slightly higher optimization cost compared to GRainDB. Although \name theoretically has a narrower search space, as analyzed in \refsec{compare-search-space}, GRainDB benefits from DuckDB's optimizer, which includes very aggressive pruning strategies.
%This outcome is expected to some extent, given GRainDB's design focuses on greedy, faster optimization methods.
Despite the slightly higher optimization cost, \name generates superior optimized plans, surpassing GRainDB by an average of $9.7\times$ on LDBC30 and $4.3\times$ on IMDB in execution time.

%Despite the slightly higher optimization cost, \name produces superior optimized plans compared to GRainDB. This is evident in the results of execution time, where \name outperforms GRainDB by an average of $9.7\times$ on $LDBC30$ and $4.3\times$ on IMDB. %When considering the total time cost, which includes both optimization and execution, \name still achieves an average speedup of $6.5\times$ over GRainDB on the $LDBC30$ dataset and $2.2\times$ on the IMDB dataset.
%This gap could be more obvious when the dataset scales up (comparing $LDBC10$ and $LDBC30$), since the optimization time cost is fixed, but a better plan will achieve more efficiency when the dataset scales up.
%These findings suggest that \name's optimization cost is justified, as it generates high-quality plans that are more efficient than those optimized by GRainDB in execution.

For fair comparison, in the subsequent experiments, we evaluate the efficiency of different systems using the end-to-end time.

\comment{
\textcolor{blue}{
First, we assessed the optimization efficiency by comparing \name with Apache Calcite\cite{calcite} and GRainDB\cite{graindb}.
Calcite, utilizing  a volcano-style query optimizer that is graph-agnostic, explores the search space to determine the most efficient plan.
In contrast, \name employs a graph-aware optimizer, which explores in a significantly narrowed search space as analyzed in \refsec{compare-search-space}.
To validate our theoretical findings, we conducted comparisons between \name and Calcite.
To broaden our analysis, we also compared \name against GRainDB, another graph-agnostic optimizer but generally employs a greedy optimization approach.
We compared the optimization time of \name, Calcite, and GRainDB, and also evaluated the execution time for their optimized query plans as a measure of the plan quality.
We select two subsets of the LDBC and JOB queries without loss of generality, and conduct the experiments on $LDBC30$ and IMDB datasets respectively.
The results are shown in Fig.~\ref{fig:exp-optimization}.
}

The experimental results reveal \name is significantly faster in optimization compared to Calcite.
Furthermore, the execution times of the query plans optimized by \name are either faster or similar to those produced by Calcite.
For example, in the LDBC queries ($IC[7]$ excluded) shown in \reffig{exp-opt-ldbc}, the optimization time cost of \name is about two orders of magnitude faster than that of Calcite, and the execution time of the plans optimized by \name is more than 52.0$\times$ faster, on average.
It is worth noting that for query $IC[7]$ and all queries in JOB, Calcite cannot even finish the optimization within the timeout limit (thus we omit the results of Calcite in \reffig{exp-opt-job}).
On the contrary, \name shows an efficient optimization process for these queries.
This stark contrast accentuates \name's capability to efficiently generate optimized plans, confirming the effectiveness of its graph-aware optimizer in reducing the search space and improving the optimization efficiency.

Then we move to the comparison between \name with GRainDB. Compared with GRainDB, \name demonstrates a slightly higher optimization cost. This outcome is expected to some extent, given GRainDB's design focuses on greedy, faster optimization methods.
Nevertheless, the quality of \name's optimized plans are usually superior. This can be observed in the execution time results, where \name outperforms GRainDB by $9.5\times$ and $2.8\times$ on average on the $LDBC30$ and IMDB datasets, respectively.
For the total time cost of optimization and execution, \name still achieves $6.5\times$ and $2.2\times$ speedup over GRainDB on average, on the $LDBC30$ and IMDB datasets, respectively.
%This gap could be more obvious when the dataset scales up (comparing $LDBC10$ and $LDBC30$), since the optimization time cost is fixed, but a better plan will achieve more efficiency when the dataset scales up.
These findings suggest that \name's optimization cost is justified, as it generates high-quality plans that are more efficient than those optimized by GRainDB in execution.

To ensure a fair comparison, in the subsequent experiments, we evaluated the efficiency of different systems by considering the total time for both optimization and execution.
}

\begin{figure}[t]
    \centering
    \begin{subfigure}[b]{.9\linewidth}
        \centering
        \includegraphics[width=\linewidth]{./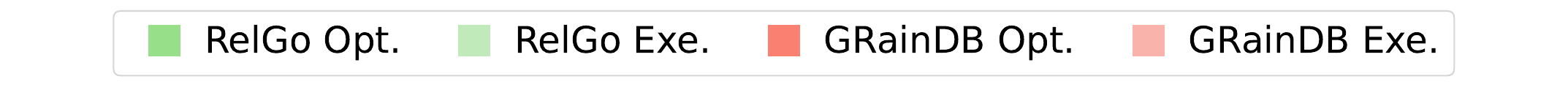}
        \label{fig:exp-opt-legends}
        \vspace*{-3.5ex}
    \end{subfigure}
    \begin{subfigure}[b]{0.45\linewidth}
        \centering
        \includegraphics[width=.8\linewidth]{./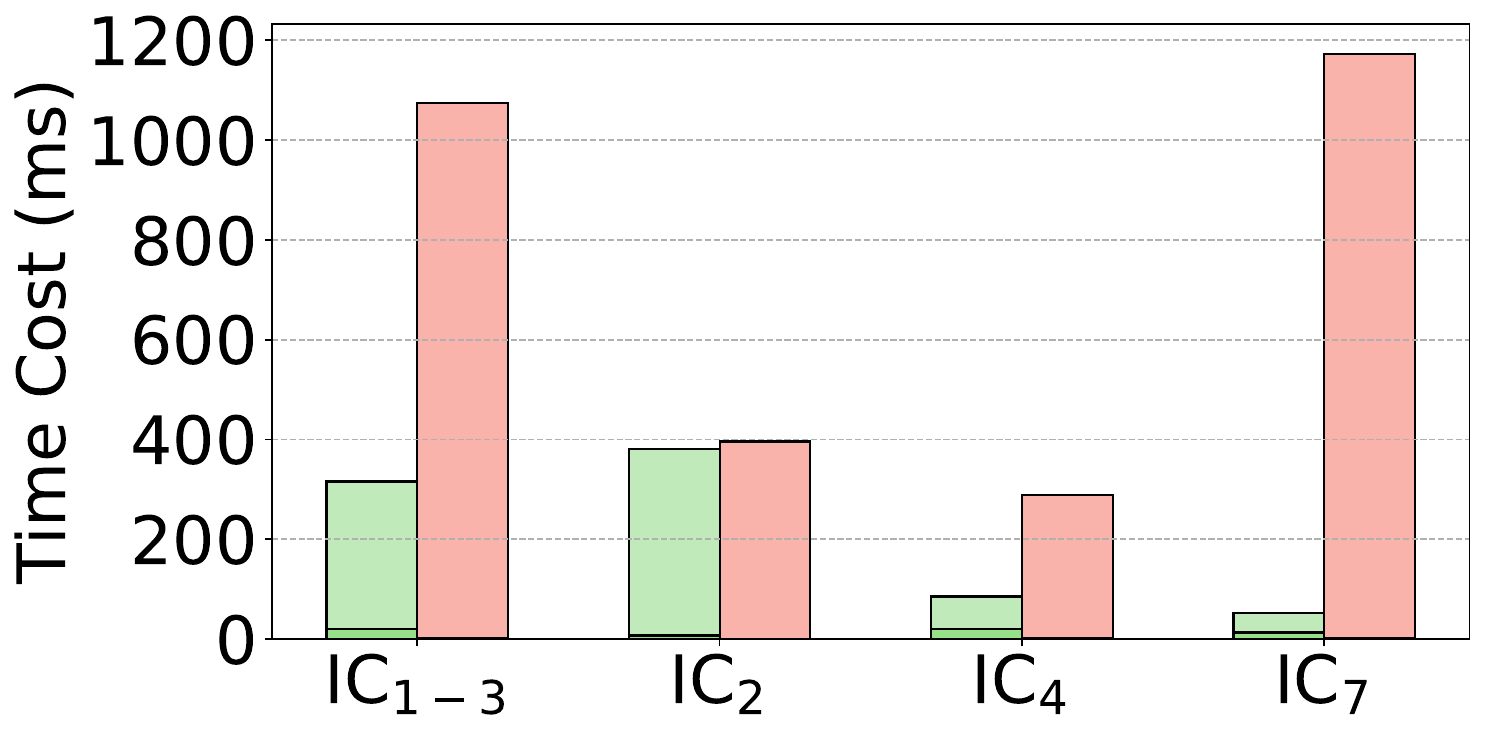}
        \caption{E2E Time on $LDBC30$.}
        \label{fig:exp-opt-ldbc}
    \end{subfigure}
    \begin{subfigure}[b]{0.45\linewidth}
        \centering
        \includegraphics[width=.8\linewidth]{./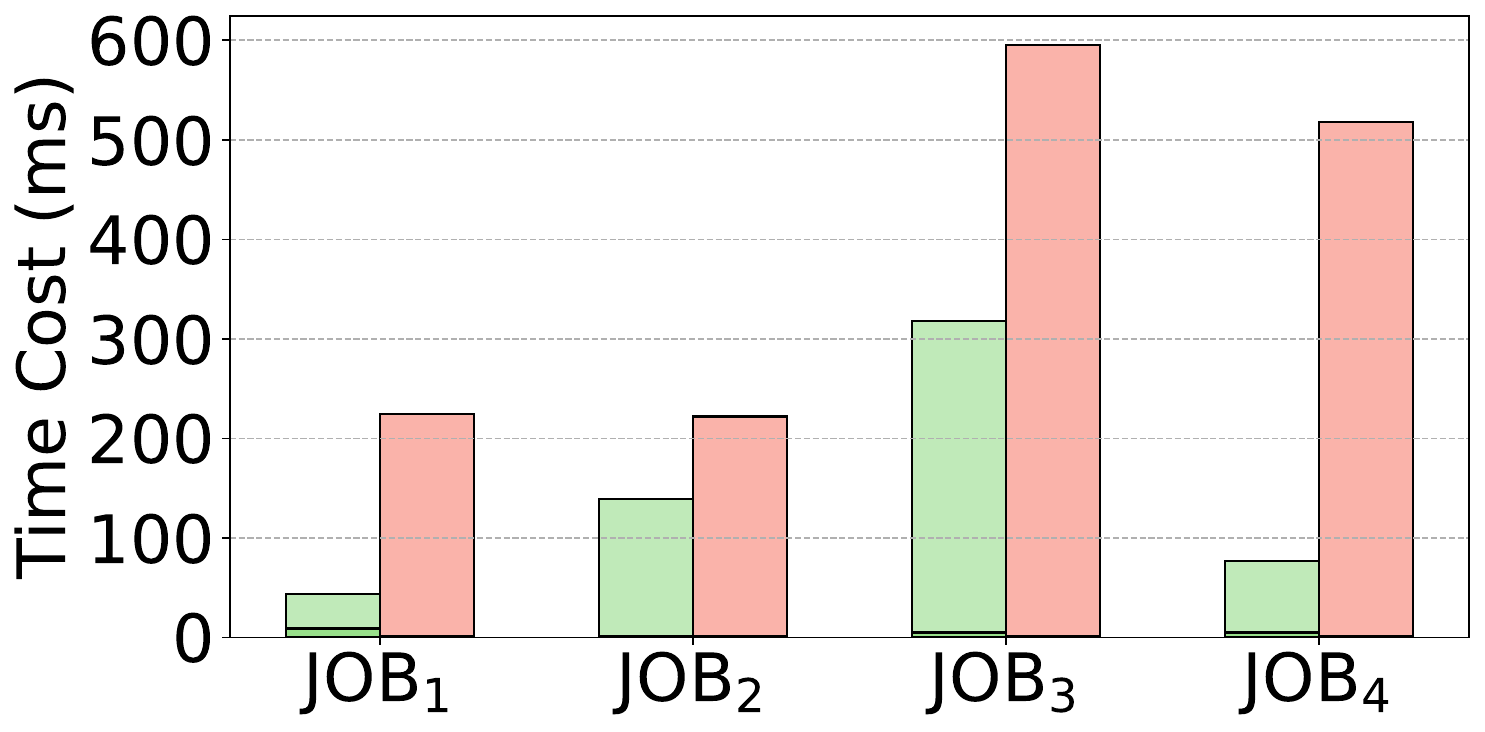}
        \caption{E2E Time on IMDB.}
        \label{fig:exp-opt-job}
    \end{subfigure}
    \caption{Experiments on optimization and execution cost}
    \vspace{-0.5em}
    \label{fig:exp-optimization}
\end{figure}

% In the experiments, optimizing all the queries with \name can be finished in 10 minutes, while optimizing some queries with Calcite exceeds the 10-minute limit.
% For example, when JOB benchmark is utilized, the time cost of optimizing all the queries with Calcite is longer than 10 minutes.
% As shown in Fig.~\ref{fig:exp-optimization}, \name is much more efficient than Calcite in optimizing queries, and it is consistent with the conclusions obtained in Sec.~\ref{sec:theoretical-analysis}.
% That is, \name is exponentially faster than Calcite in query optimization.
% For instance, when IC5-1 is queried on $LDBC30$, the time cost of query optimization with \name can be more than $10^4\times$ faster than that of Calcite.

% Besides, the optimization time cost of \name is similar on $LDBC10$ and $LDBC30$, and so is Calcite.
% The reason is that the time required for optimization is not significantly associated with the scale of the dataset; instead, it is related to the relative cardinalities among the different tables.
% The relative cardinalities in LDBC datasets of different scales are consistent, therefore the optimization time is similar.

\begin{figure}[t]
    \centering
    \begin{subfigure}[b]{.45\linewidth}
        \centering
        \includegraphics[width=.8\linewidth]{./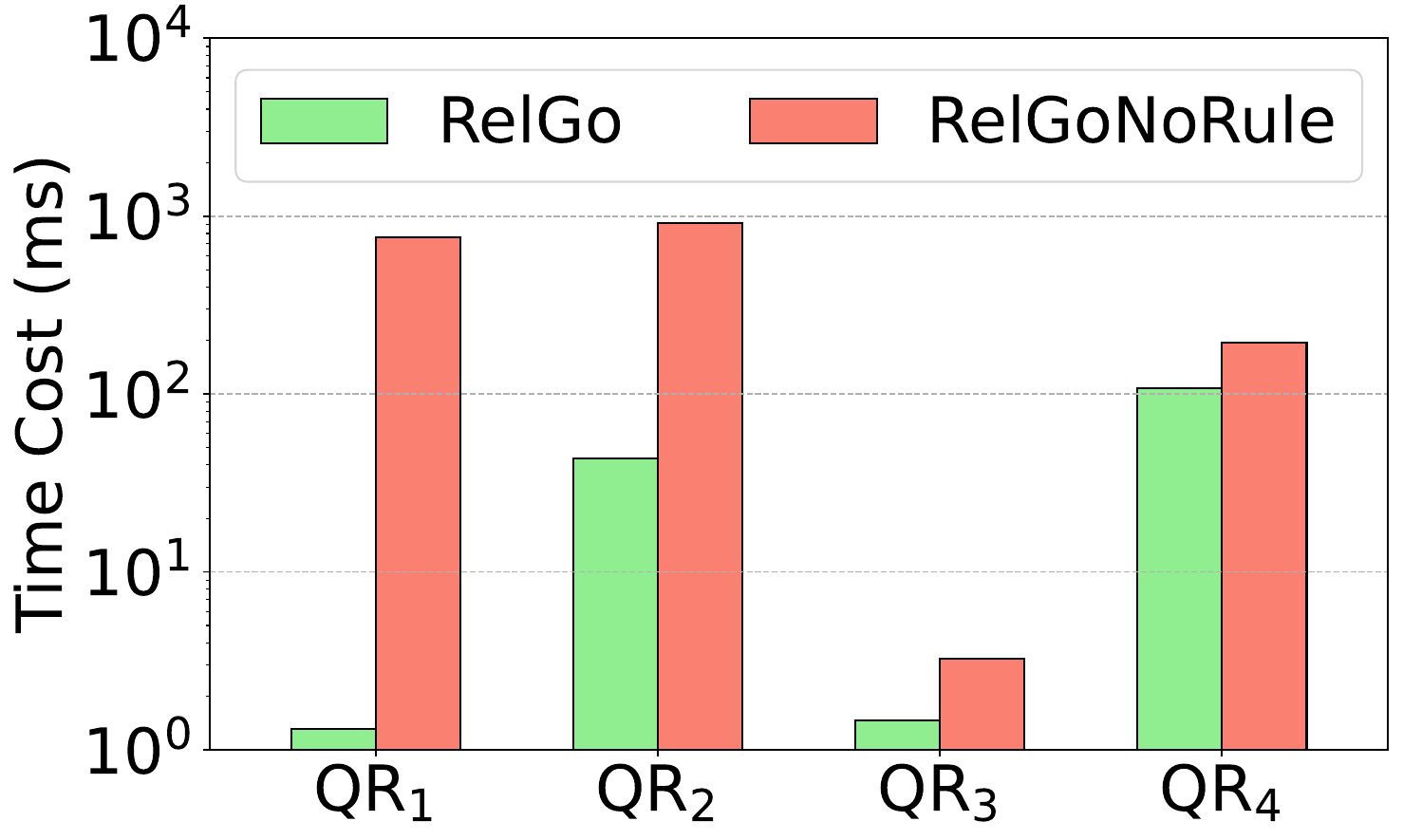}
        \caption{Time Cost on $LDBC10$.}
        \label{fig:exp-filter-sf10}
    \end{subfigure}
    \begin{subfigure}[b]{0.45\linewidth}
        \centering
        \includegraphics[width=.8\linewidth]{./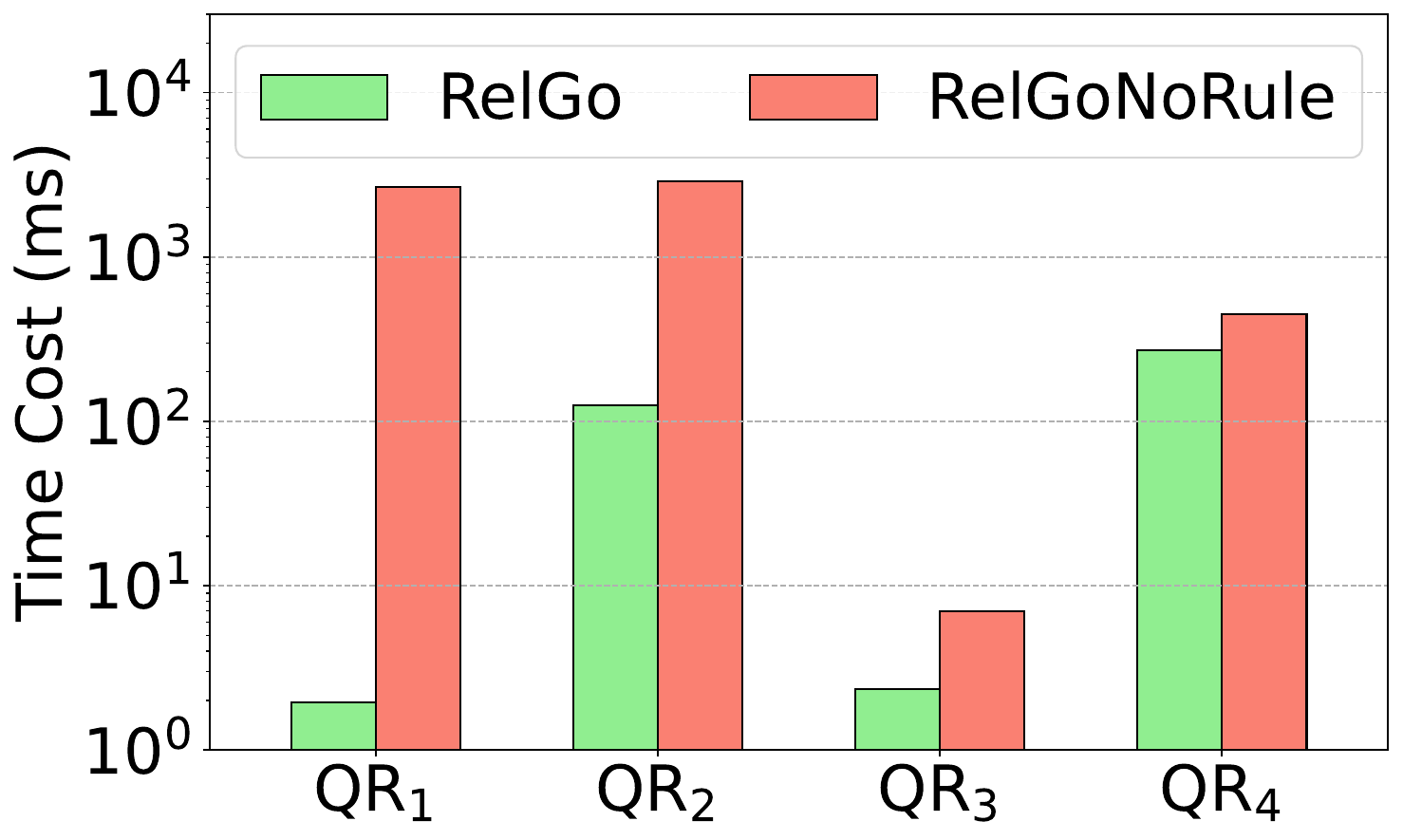}
        \caption{Time Cost on $LDBC30$.}
        \label{fig:exp-filter-sf30}
    \end{subfigure}
    \caption{Efficiency comparison of \name and \relgonofi}
    \label{fig:exp-filter}
\end{figure}

\begin{figure}[t]
    \centering
    \begin{subfigure}[b]{.45\linewidth}
        \centering
        \includegraphics[width=.8\linewidth]{./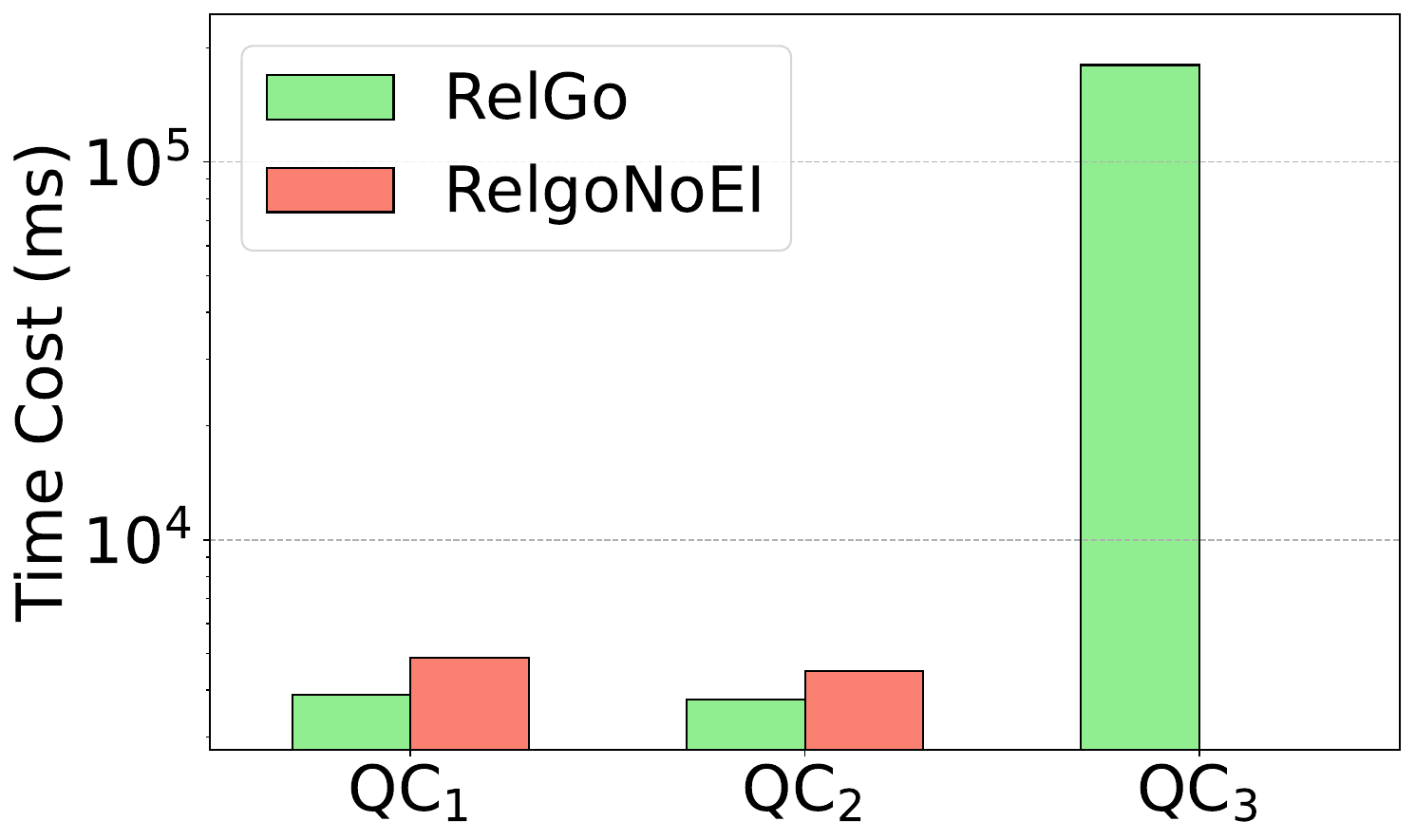}
        \caption{Time Cost on $LDBC10$.}
        \label{fig:exp-expand-intersect-sf10}
    \end{subfigure}
    \begin{subfigure}[b]{0.45\linewidth}
        \centering
        \includegraphics[width=.8\linewidth]{./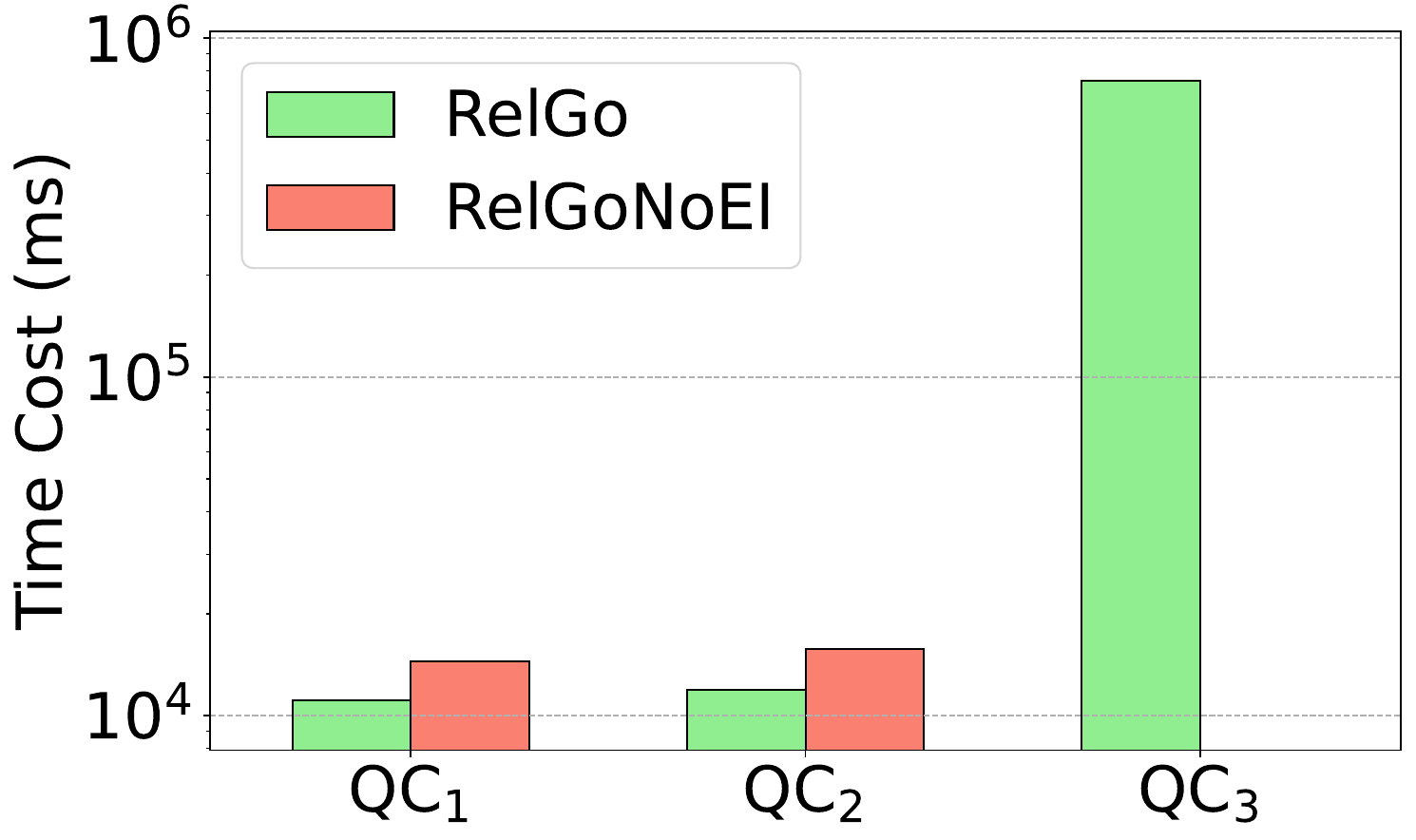}
        \caption{Time Cost on $LDBC30$.}
        \label{fig:exp-expand-intersect-sf30}
    \end{subfigure}
    \caption{Efficiency comparison of \name and \relgomj}
    \label{fig:exp-expand-intersect}
    %\vspace{-1em}
\end{figure}

\noindent\textbf{Advanced Optimization Strategies.}
In this experiment, we assessed the advanced optimization strategies in \name, including the heuristic \filterrule and \joinfuserule, and the optimized implementation of \expandintersect~ operator that aims to improve the efficiency of complete star join.

We began by testing heuristic rules \filterrule and \joinfuserule. %two representative heuristic rules employed in \name.
%that capture optimization opportunities at the interplay of relational and graph optimizations.
We conducted experiments on $LDBC10$ and $LDBC30$, using $QR_1$ and $QR_2$ to test \filterrule, and $QR_3$ and $QR_4$ to test \joinfuserule. The results in \reffig{exp-filter} compared the performance of \name with and without applying these rules, denoted as \name and \relgonofi, respectively.
The results show that \filterrule significantly improves query performance, providing an average speedup of 299.4$\times$ on $LDBC10$ and 699.8$\times$ on $LDBC30$. With \joinfuserule, query execution is accelerated by an average of 2.0$\times$ on $LDBC10$ and 2.3$\times$ on $LDBC30$. These findings suggest that the heuristic rules, particularly \filterrule, are highly effective in enhancing query execution efficiency.

%\enlargethispage{1em}

Next, we evaluated the effectiveness of the \expandintersect, which focuses on improving the efficiency of complete star join. Without this optimization strategy, the \expandintersect~ operator would be implemented as a traditional multiple join, and we denote this variant as \relgomj.
\revise{Queries $QC_{1 \ldots 3}$ that contain cycles are used to compare the performance of \name and \relgomj}.
The performance results in \reffig{exp-expand-intersect} suggest that, compared to \relgomj, \name achieves an average speedup of 1.22$\times$ on $LDBC10$ and 1.31$\times$ on $LDBC30$ (excluding $QC_3$). Notably, for $QC_3$, which is a complex 4-clique, the plans optimized by \relgomj confront an out-of-memory (OOM) error. %The reason is that applying worst-case optimal joins generates significantly fewer intermediate results than applying multiple joins, as numerous results that will not appear in the intersection are prematurely eliminated.
The results indicate that \expandintersect~\\ with an optimized implementation not only enhances query performance but also significantly reduces the spatial overhead.

\begin{figure}[t]
    \centering
    \includegraphics[width=.75\linewidth]{./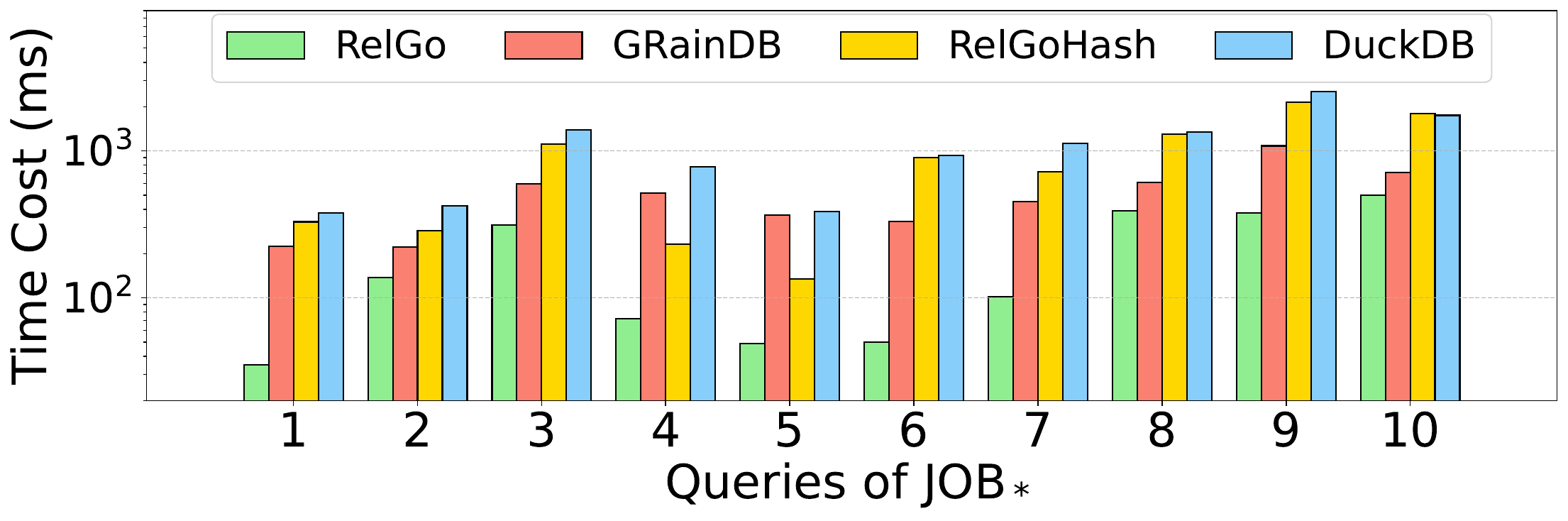}
    \vspace{-0.5em}
    \caption{Experiments on join order efficiency}
    \label{fig:exp-hash-plan}
\end{figure}

\noindent\textbf{Efficiency of Join Order.}
%After the assessment of advanced strategies within \name, we integrate all these strategies (and so as in the following experiments) to conduct a performance comparison between \name, GRainDB, and DuckDB with a focus on the join order efficiency.
We compared \name with GRainDB and DuckDB, focusing on the efficiency of the join order. For this purpose, we introduced a variant of \name called \relgohash, which optimizes the plan in a converged manner like \name but deliberately bypasses the use of graph index. We selected 10 queries from the JOB benchmark and showed the performance results in \reffig{exp-hash-plan}.
The results demonstrate that \name outperforms GRainDB on all the queries, accelerating the execution time by factors ranging from $1.4\times$ to $7.5\times$, with an average speedup of $4.1\times$. Additionally, the plans optimized with \relgohash are at least as good as those optimized by DuckDB, achieving an average speedup of $1.6\times$. The effectiveness of \name and \relgohash stems from their use of advanced graph-aware optimization techniques in optimizing the matching operator, resulting in good join order and thus robust performance regardless of graph index.
It is worth noting that \name does not always generate the absolute best join orders, as it relies on the estimated cost of the plans. However, its optimized plans generally remain competitive in most cases, thanks to its integration of \glogue that use high-order statistics for cost estimation.

\begin{figure}[t]
    \centering
    \begin{subfigure}[b]{\linewidth}
        \centering
        \includegraphics[width=.8\linewidth]{./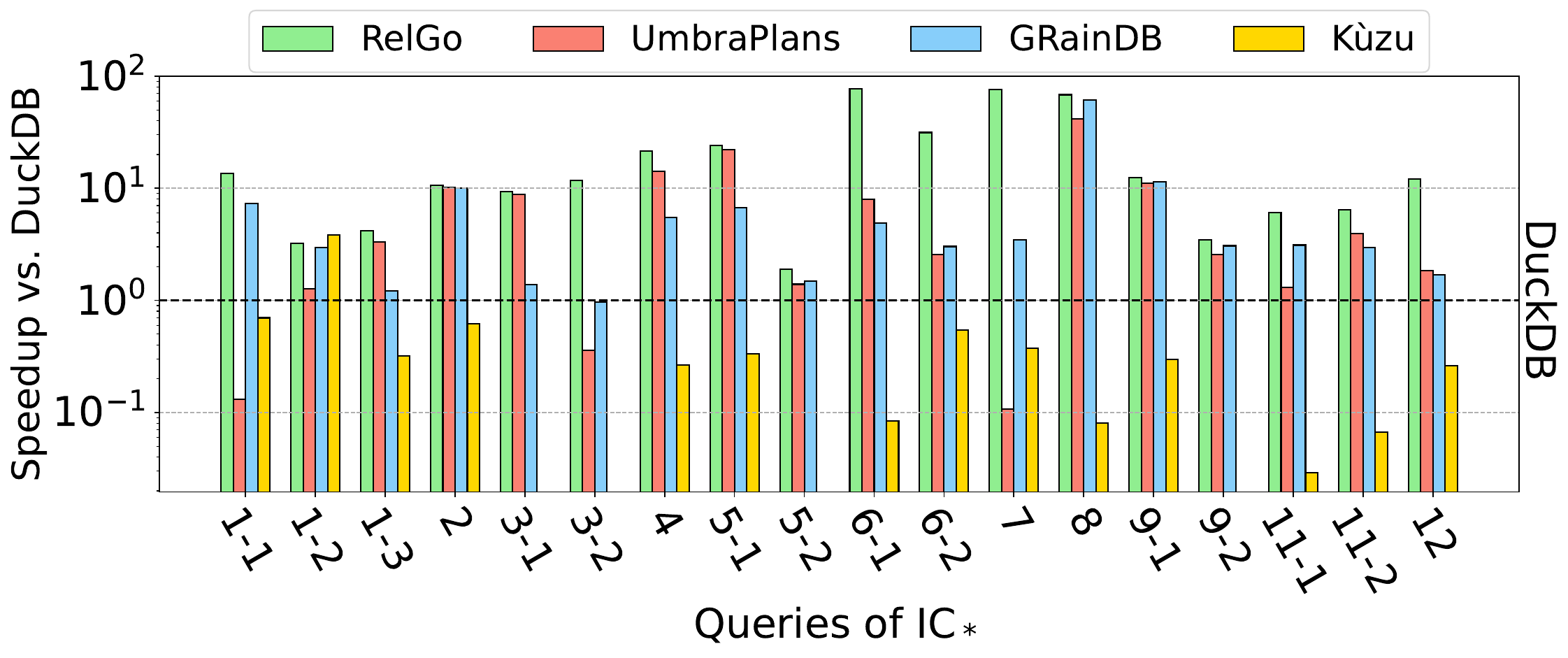}
        \vspace{-.5em}
        \caption{Speedup Compared to DuckDB on $LDBC100$.}
        \label{fig:exp-e2e-sf100}
    \end{subfigure}
    \begin{subfigure}[b]{\linewidth}
        \centering
        \includegraphics[width=.8\linewidth]{./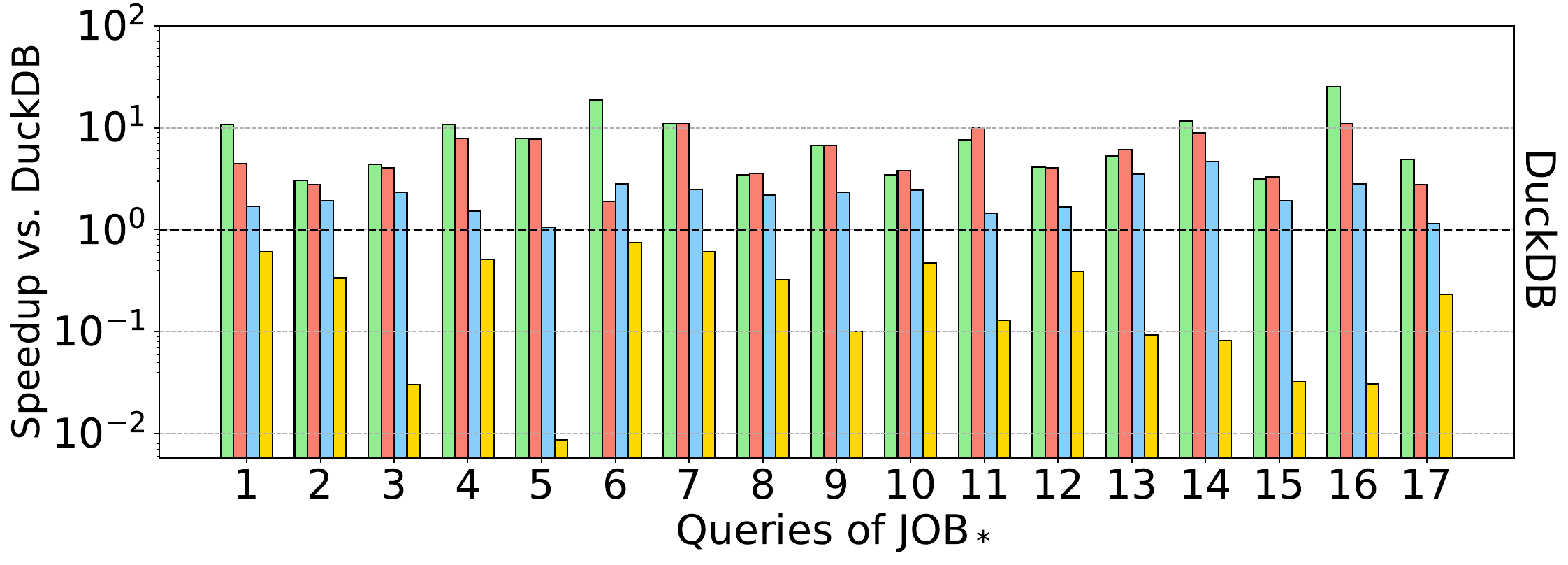}
        %\vspace*{-1ex}
    \end{subfigure}
    \begin{subfigure}[b]{\linewidth}
        \centering
        \includegraphics[width=.8\linewidth]{./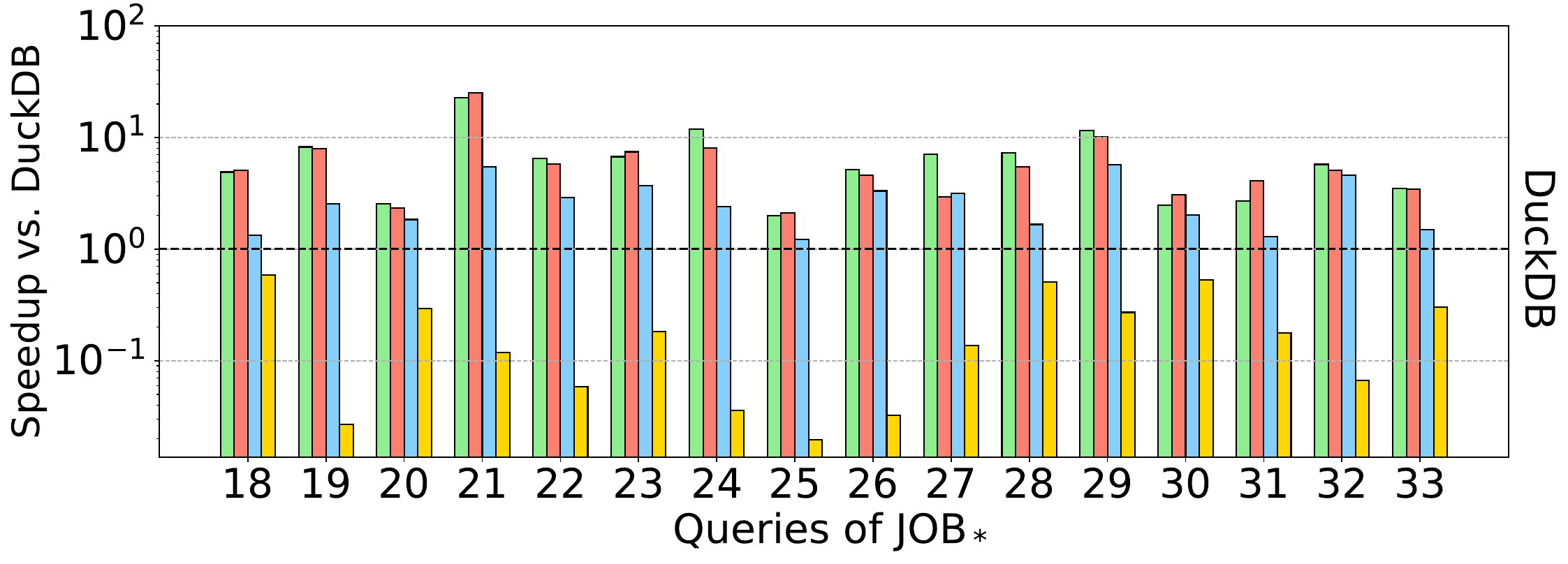}
        \vspace{-1ex}
        \caption{Speedup Compared to DuckDB on IMDB.}
        \label{fig:exp-e2e-job}
    \end{subfigure}
    \caption{Results of the comprehensive experiments. The speedup is computed as $\frac{\text{Time}(\text{DuckDB})}{\text{Time}(\text{Compared Method})}$. }
    \label{fig:exp-e2e}
    \vspace{-1em}
\end{figure}

\subsection{Comprehensive Experiments}
\label{sec:experiment-e2e}

We conducted comprehensive experiments on the LDBC and JOB benchmarks to comprehensively evaluate the performance of \name compared to \revise{DuckDB, GRainDB, Umbra, and K\`uzu}.
\revise{
The experimental results are shown in \reffig{exp-e2e}.
The results on LDBC10 and LDBC30 are omitted because they are comparable to those on LDBC100. Complete results are provided in the full version\cite{full-version}}.
%The experimental results, shown in \reffig{exp-e2e}, demonstrate that execution plans optimized by \name consistently \revise{outperform those optimized by the baselines on both benchmarks}.

\subsubsection{Comparison with DuckDB and GRainDB}

Firstly, we compared the performance of \name with DuckDB and GRainDB.
Specifically, for the LDBC benchmark, \revise{the execution time of the plans optimized by \name is about 21.9$\times$ and 5.4$\times$ faster on average than those generated by DuckDB and GRainDB on $LDBC100$}.
It is important to note that \name is especially effective for queries containing cycles, which can benefit more from graph optimizations. For example, in query $\text{IC}_{7}$, which contains a cycle, \name outperforms DuckDB and GRainDB by 76.3$\times$ and 22.0$\times$, respectively.
Conversely, the JOB benchmark, established for assessing join optimizations in relational databases, lacks any cyclic-pattern queries. Despite this, \name still achieves better performance compared to DuckDB and GRainDB, with an average speedup of 8.2$\times$ and 4.0$\times$, respectively.

The experimental results reflect our discussions in \refsec{graph-aware}. We summarize \name's superiority as follows.
First, \name is designed to be aware of the existence of graph index in query optimization and can leverage the index to effectively retrieve adjacent edges and vertices. In contrast, for GRainDB, relational optimizers can occasionally alter the order of \EVjoin operations, making graph index ineffective. DuckDB, on the other hand, does not consider graph index in query optimization and executes queries using conventional hash joins, which are often less efficient compared to graph-aware approaches.
Second, by incorporating a matching operator in \spjm queries to capture the graph query semantics, \name is able to leverage advanced graph optimization techniques to optimize matching operators. These techniques include using high-order statistics to estimate the cost of plans more accurately and employing wco join implementations to optimize cyclic patterns. In contrast, DuckDB and GRainDB cannot benefit from these graph-specific optimizations, which may lead to suboptimal plans and inefficient execution.
Third, \name considers optimization opportunities across both graph and relational query semantics, introducing effective heuristic rules such as \filterrule and \joinfuserule. These rules can significantly improve the efficiency of the generated plans.

\begin{figure*}[ht]
    \centering
    \includegraphics[width=\linewidth]{./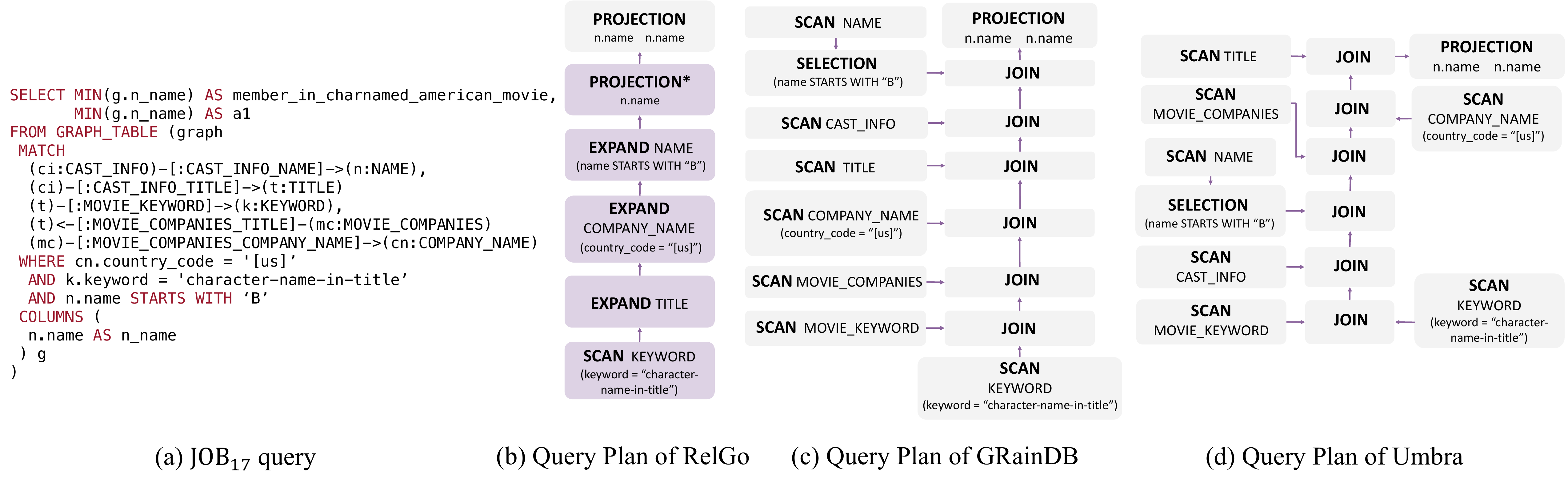}
    \caption{\revise{$\text{JOB}_{\text{17}}$'s plans given by RelGo, GRainDB and Umbra. JOINs are implemented as GRainDB's predefined joins if possible.}}
    \label{fig:job17a-query-plans}
\end{figure*}

\subsubsection{Comparsion with Umbra}
We \revise{then compared the performance of \name and Umbra.
In detail, the plans optimized by \name are about \revise{49.9$\times$} faster on average than those generated by Umbra on $LDBC100$.
On JOB benchmark, the plans generated by \name are on average 1.7$\times$ more efficient than those given by Umbra.} \revise{Several factors contribute to the results: (1) Umbra, due to its lack of a graph perspective, might generate query plans that encounter challenges in utilizing graph indexes effectively, similar to GRainDB; (2) Although Umbra's optimizer supports generating worst-case optimal plans that include multiway joins, none of Umbra's optimized plans for the tested queries in our experiments contained multiway joins. In contrast, \name excels at identifying opportunities to effectively utilize graph indices and adheres to worst-case optimality.
}

There are instances where Umbra outperforms \name in execution plans. For example, when querying $\text{JOB}_{30}$ on IMDB, the execution time of the plan generated by \name is approximately 1.2$\times$ slower than that of Umbra. A potential reason is that \name has not yet considered the distributions of attribute values. For example, when the predicate ``t.production\_year > 2000'' is present, knowing the distribution of the attribute ``production\_year'' can help better estimate the results after filtering by the predicate. Hence, Umbra can sometimes estimate cardinalities more accurately when such predicates exist. Addressing this will be an important future work. 

\comment{
The following reasons primarily contribute to this improvement:
(1) Umbra, due to its lack of a graph perspective, might generate query plans that could encounter issues in using the graph indexes similarly to GRainDB.
(2) Although Umbra's optimizer supports generating execution plans that include multiway joins, in our actual experiments, none of the optimized plans for the queries contained multiway joins. This indicates that Umbra also struggles to leverage the graph-specific optimizations.

\revise{Notably, there are instances where Umbra generates better execution plans compared to \name, e.g., when $\text{JOB}_{30}$ is queried on IMDB, the execution time of the plan generated by \name is about 1.2$\times$ that of Umbra. A possible reason is that the distributions of attribute values are not yet considered in \name and the Umbra can sometimes estimate cardinalities more accurately when predicates exist.
We consider this as our further work}.
}

\subsubsection{Comparison with K\`uzu}
\revise{Finally, we compared \name with the GDBMS, K\`uzu. The experimental results show that \name is approximately \revise{188.7$\times$} faster on average than K\`uzu on $LDBC100$ and 136.1$\times$ faster on the JOB benchmark.
Some results of K\`uzu are omitted (e.g., $\text{IC}_{3-1}$ on $LDBC100$) due to OOM errors.
As K\`uzu is also developed based on DuckDB, we speculated that K\`uzu may not sufficiently exploit graph-specific optimizations as \name does.}

% There are mainly two reasons for the superiority of \name.
% Firstly, \name is aware of the existence of graph indices in graph query optimization.
% Thus, the cost estimation of \name is more accurate and better physical plans can be obtained.
% In contrast, the optimizers of DuckDB and GRainDB are of the types $Rel$ and $Rel^+$, respectively, and are not aware of the graph indices in query optimization phase.
% Therefore, the accuracy of cost estimation is limited.
% Secondly, for queries with cycles (e.g., $IC[7]-1$ on LDBC), \name have a superior performance due to the advanced extend-intersect operators, which outperforms the traditional multiple join operators used by DuckDB and GRainDB.
% Such an operator can significantly reduce the time consumption, which is also demonstrated through experiments in \reffig{exp-expand-intersect}.
% \todo{update this, with the new TrimFusionRule:} Thirdly, \name is more adept at identifying opportunities to effectively utilize graph indices to get neighbors of vertices, since it optimizes graph queries from the graph perspective.
% Conversely, GRainDB occasionally partitions the process into separate stages, first retrieving adjacent edges and subsequently obtaining the corresponding endpoints.

\vspace{-2mm}
\subsection{Case Study}
To further illustrate why the plans generated by \name are superior to those produced by the baseline optimizers, we conducted a case study on $\text{JOB}_{17}$ as an example, shown in \reffig{job17a-query-plans}(a). The optimized query plans by \name, GRainDB, and Umbra for this query are presented in \reffig{job17a-query-plans}(b)-(d).
\reffig{exp-e2e-job} shows that \name's plan runs 4.3$\times$ and 1.8$\times$ faster than those optimized by GRainDB and Umbra, respectively.

A key difference between the plan of \name and those of GRainDB and Umbra is that \name can consistently follow the graph query semantics by continuously expanding from a starting vertex to its neighbors, leveraging the graph index. For example, \name's plan begins with scanning $\relation{\text{KEYWORD}}$, then expands to its neighbors $\relation{\text{TITLE}}$, followed by $\relation{\text{COMPANY\_NAME}}$, and finally $\relation{\text{NAME}}$. In this order, the graph indices (both EV-index and VE-index) introduced in \refsec{graph-index} are fully utilized to efficiently retrieve neighboring vertices. In contrast, GRainDB and Umbra, as relational optimizers, may not always adhere to this semantics. For instance, in GRainDB's plan, after joining $\relation{\text{KEYWORD}}$ with $\relation{\text{MOVIE\_KEYWORD}}$, the plan misses the opportunity to immediately join $\relation{\text{TITLE}}$, thus failing to use the EV-index constructed between $\relation{\text{MOVIE\_KEYWORD}}$ and $\relation{\text{TITLE}}$ right away. A similar situation occurs in Umbra's plan.

%According to the plan, the results of scanning $\relation{\text{keyword}}$ expand to $\relation{\text{title}}$, $\relation{\text{company\_name}}$, and $\relation{\text{\name}}$ in sequence. Please note that when $\mathcal{P}_{\name}$ expands from $\relation{\text{keyword}}$ to $\relation{\text{title}}$, an \expandedge~ operator followed by \getvertex~ operator is generated. Then, \joinfuserule~ is applied to merge these two operators into an \expand~ operator. The \expand~ operator is implemented as a predefined join utilized in GRainDB that efficiently retrieve the neighboring vertices based on the VE-index.

%However, \revise{because GRainDB and Umbra apply relational optimizers, it is not necessary for them to join vertex relations and edge relations alternately. Consequently, they may miss opportunities to achieve better execution plans using the \joinfuserule, and cannot merge join operators such as $\relation{\text{keyword}} \bowtie \relation{\text{movie\_keyword}}$ and $\relation{\text{movie\_keyword}} \bowtie \relation{\text{title}}$}.

%Moreover, \revise{the advantage of \name over baseline optimizers also stems from its support for the \expandintersect~ operation. For instance, given the IC[7] query, the execution plan generated by \name includes an EI-Join. This provides a significant advantage over the plans optimized by other optimizers}.

%\enlargethispage{1em}
\vspace{-3mm}
\section{Related Work}
\label{sec:related-work}
%\textit{GetVertex}, \textit{GetEdge}, and \textit{GetNeighbor} are commonly used in graph queries.
%Specifically, \textit{GetVertex} is used to get the adjacent vertices of an edge, \textit{GetEdge} returns the edges from or to a given vertex, and \textit{GetNeigbor} gets the neighbors of a given vertex.
%Based on SQL/PGQ, in relational database, these operators can be implemented by joining tables representing vertices and edges.
%Then, it is crucial to support efficient join order optimization in the converged graph relational optimizer.

%In this section, we survey the query optimization techniques for relational databases and graph databases, respectively.
%Moreover, the mapping between relational data model and property graph
%Moreover, some relational databases (e.g., Oracle) allow users to create graph indices on relational tables.
%Then, researchers study to improve the performances of query execution with such graph indices.
%Therefore, this kind of works are also included in this section.
%These two types of works are also included in this section.

\stitle{Query Optimization for Relational Databases.} 
%Join order optimization was a traditional and crucial topic for query optimization in relational databases. 
%The order in which joins were conducted could influence the execution time significantly and had attained substantial accomplishments. 
Various studies of query optimization for relational databases were proposed to find the optimal join order \cite{IbarakiK84,optimize-nested-vldb-1986,Haffnerjoinorder,data-dependency-join}.
For example, 
%Ibaraki et al.~\cite{IbarakiK84} proved the NP-complexity of the join order optimization problem and designed an efficient algorithm with a time complexity of $O(n^2logn)$ to optimize tree queries.
%Ibaraki et al.\cite{IbarakiK84} proposed that there were usually fewer than ten tables involved in a typical query, and dealt with joins using nested-loops join. Specifically, they proved the NP-complexity of the join order optimization problem and designed an efficient algorithm with a time complexity of $O(n^2logn)$ to optimize tree queries.
%Then, Krishnamurthy et al.\cite{optimize-nested-vldb-1986} optimized the algorithm and proposed an algorithm with a time complexity of $O(N^2)$ by reusing the computation results.
%Besides, the authors emphasized that since finding the optimal join order was a complex problem, it was more important to avoid the worst plan.
%Moreover, 
Haffner et al.~\cite{Haffnerjoinorder} converted join order optimization into finding the shortest path on directed graphs and used the A* algorithm to solve it.
%In detail, four heuristics were designed to estimate the remaining cost. 
%Furthermore, 
Kossmann et al.~\cite{data-dependency-join} summarized the methods to optimize queries with data dependencies, such as uniqueness constraints, foreign key constraints, and inclusion dependencies.
\revise{Recently, researchers attempt to incorporate wco joins into plans to better handle queries with cycles and reduce the size of intermediate results~\cite{Aberger2016Sigmod,free-join}.
CLFTJ~\cite{Kalinsky2017EDBT} introduces caching into trie join to reuse previously computed results.
Umbra~\cite{umbra2020vldb} proposes a new hash trie data structure and further reduces the cost of set intersection}.
% Free Join \cite{free-join} unifies worst-case optimal joins and binary joins, achieving better performance than either of them alone}.
All these techniques can be orthogonally adopted in \name's relational optimization.

%Several studies have focused on improving cardinality estimation for cost computation, enabling cost-based optimizers to search for execution plans with potentially lower costs. A common approach is to estimate cardinality through sampling \cite{index-based-join-sampling,ripple-join,wanderjoin,index-based-join-sampling}. For instance, Li et al.\cite{wanderjoin} present an unbiased estimator based on random walks to estimate cardinalities. Leis et al.\cite{index-based-join-sampling} propose a computationally inexpensive method called index-based join sampling to improve accuracy. Other studies estimate cardinalities using histograms \cite{histogram,postgres-row-estimation}. These methods divide the values of a table column into several buckets to build a histogram and assume that the values are uniformly distributed within each bucket. The cardinality is then estimated by summing the number of values in the relevant buckets. Additionally, learning-based methods have been explored for cardinality estimation \cite{learning-based-estimation-1,learning-based-estimation-2,learning-based-estimation-3,learning-based-estimation-4}.

\comment{
Some researchers \cite{selinger,postgres-row-estimation} estimate the number of cardinalities by computing the selectivity of $A \bowtie_{A.col_1 = B.col_2} B$ as
\begin{equation*}
    \frac{1}{max(DV(A.col_1), DV(B.col_2))},
\end{equation*}
where $DV(A.col_1)$ is the number of distinct values of $A.col_1$ in table $A$.
}

\stitle{Query Optimization for Graph Databases.} Graph pattern matching, a fundamental problem in graph query processing, has been extensively studied~\cite{angles2017foundations}. In sequential settings, Ullmann's backtracking algorithm~\cite{ullmann1976algorithm} has been optimized using various techniques, such as tree indexing~\cite{shang2008quicksi}, symmetry breaking~\cite{han13turbo}, and compression~\cite{bi2016efficient}. 
%However, due to the challenges in parallelizing backtracking algorithms, 
Join-based algorithms have been developed for distributed environments. These algorithms use cost estimation to optimize join order, with binary-join algorithms\cite{lai2015scalable, lai2019distributed} estimating costs using random graph models and worst-case-optimal join algorithms~\cite{ammar2018distributed} ensuring a worst-case upper bound on the cost. \revise{Hybrid approaches\cite{mhedhbi2019optimizing, huge, jin2023cidr} adaptively select between binary and wco joins based on the lower cost}. 
%\revise{Kùzu \cite{jin2023cidr} estimates the cost with the number of factorized tuples due to its factorized query processor}.
Recent studies have focused on improving cost estimation in graph pattern matching, including decomposing graphs into star-shaped subgraphs~\cite{cset} and comparing different cardinality estimation methods~\cite{gcare}. Some optimizers, like GLogS~\cite{GLogS}, search for the optimal plan by representing edges as binary joins or vertex-expansion subtasks. %and computing the plan bottom-up to obtain a worst-case optimal plan.
We follow the join-based methods such as~\cite{huge,GLogS} due to their compatibility with the relational context for which \name~ is designed. %However, to ensure seamless integration with the relational environment, we need to make adaptations to these methods accordingly.

\stitle{Bridging Relational and Graph Models.} There is a growing interest in studying the interaction between relational and graph models.
DuckPGQ~\cite{DuckPGQ,DuckPGQ-VLDB} has demonstrated support for SQL/PGQ within the DuckDB~\cite{duckdb}, utilizing the straightforward, graph-agnostic approach to transform and process pattern matching.
\revise{Hence, DuckPGQ loses the opportunity to optimize the query from a graph query perspective}.
Index-based methods, such as GQ-Fast \cite{gqfast} and GRainDB \cite{graindb}, work towards construct graph-like index on relational databases to improve the performance of join execution. \name leveraged GRainDB's indexing technique for implementing physical graph operations. In contrast, methods like GRFusion \cite{GRFusion} and Gart~\cite{gart} work towards materializing graph from the relational tables,
so that graph queries can be executed directly on the materialized graph. Such methods incur additional storage costs and potential inconsistencies between relational and graph data.

\section{Conclusions and Discussion}
\label{sec:conclusions}

In this paper, we introduce \name, a converged relational-graph optimization framework designed for SQL/PGQ queries. We formulate the \spjm query skeleton to better analyze and optimize the relational-graph hybrid queries introduced by SQL/PGQ. After discovering that a graph-agnostic approach can result in a larger search space and suboptimal query plans, we design \name to optimize the relational and graph components of \spjm queries using dedicated relational and graph optimization modules, respectively. Additionally, \name incorporates optimization rules, such as \filterrule, which optimize queries across the relational and graph components, further enhancing overall query efficiency.
We conduct extensive experiments comparing \name to graph-agnostic baselines, demonstrating its superior performance and confirming the effectiveness of our optimization techniques.
% We conduct extensive experiments to evaluate the performance of \name against graph-agnostic baselines, demonstrating its superior performance and confirming the effectiveness of the proposed optimization techniques.

%However, \revise{there is an inevitable gap between relational data and graph data. One important distinction is that not every relation can be clearly identified as either a vertex relation or an edge relation. For example, in the JOB dataset, some tables contain more than two foreign keys, and some of these foreign keys reference the primary key of a vertex relation, making it difficult to definitively classify them as either vertex or edge relations. Another important difference is that joins between relational tables can occur in any order, whereas in plans generated by graph optimizers, vertex relations and edge relations need to alternate. Therefore, the relational optimizer has a larger search space, which can sometimes lead to better results.}

One \revise{interesting future direction is to extend \name to directly process existing \spj queries as inputs, enabling the automatic conversion from \spj to \spjm queries while being aware of the presence of graph indices. Boudaoud et al.~\cite{Boudaoud2022} may have discussed potential methods for such conversion. However, designing a global solution to determine which parts of an \spj query can be converted into a matching operator is challenging. This decision involves exhaustively exploring the search space, now including both join and pattern matching options. Given the high cost of optimizing joins alone, an exhaustive search could become prohibitively expensive. Therefore, it is necessary to carefully balance and select appropriate global and local optimization rules for given queries.}

% Future work includes integrating \name with more widely-used backend databases to expand its applicability and impact.

%\balance

%%
%% The next two lines define the bibliography style to be used, and
%% the bibliography file.
\bibliographystyle{ACM-Reference-Format}
\bibliography{sample}

\end{document}
\endinput
%%
%% End of file `sample-acmsmall.tex'.